\PassOptionsToPackage{dvipsnames}{xcolor}
\documentclass[journal,twoside,web]{ieeecolor}
\usepackage{generic}

\def\BibTeX{{\rm B\kern-.05em{\sc i\kern-.025em b}\kern-.08em
	T\kern-.1667em\lower.7ex\hbox{E}\kern-.125emX}}
\let\labelindent\relax

\usepackage[inline]{enumitem}
\usepackage[hidelinks]{hyperref}

\usepackage{etoolbox}
\newtoggle{paperSC}
\newcommand{\condVSpace}[1]{\iftoggle{paperSC}{}{\vspace{#1}}}

\usepackage[dvipsnames]{xcolor}
\usepackage{tikz}
\usepackage{tikz-3dplot}
\usetikzlibrary{angles}

\usepackage{graphicx}
\usepackage{subcaption}

\newcommand{\diff}[1]{#1}

\usepackage[noadjust]{cite}

\usepackage{mathtools}
\usepackage{amssymb}
\usepackage{amsthm}

\theoremstyle{plain}
\newtheorem{theorem}{Theorem}
\newtheorem{proposition}{Proposition}
\newtheorem{corollary}{Corollary}[theorem]
\newtheorem{lemma}{Lemma}
\newtheorem{assumption}{Assumption}

\newtheorem{definition}{Definition}

\theoremstyle{definition}

\def\qedremark{\hspace*{\fill} $\triangleleft$\par\endtrivlist\unskip}
\newtheorem{remarkUnwrapped}{Remark}
	{\begin{remarkUnwrapped}}%
	{\qedremark\end{remarkUnwrapped}}

\def\qedexample{\hspace*{\fill} $\triangleleft$\par\endtrivlist\unskip}
\newtheorem{exampleUnwrapped}{Example}
\newenvironment{example}%
	{\begin{exampleUnwrapped}}%
	{\qedexample\end{exampleUnwrapped}}

\newcommand{\N}{\mathbb{N}}
\newcommand{\Z}{\mathbb{Z}}

\newcommand{\R}{\mathbb{R}}
\newcommand{\C}{\mathbb{C}}

\newcommand{\T}{\mathbb{T}}

\DeclarePairedDelimiter\pr{\lparen}{\rparen}
\DeclarePairedDelimiter\ps{\lbrack}{\rbrack}
\DeclarePairedDelimiter\pc{\lbrace}{\rbrace}

\DeclarePairedDelimiterX\norm[1]{\lvert}{\rvert}{
	\ifblank{#1}{\,\cdot\,}{#1}
}

\DeclarePairedDelimiterX\inner[2]{\langle}{\rangle}{
	\ifblank{#1}{\,\cdot\,}{#1},
	\ifblank{#2}{\,\cdot\,}{#2}
}

\DeclarePairedDelimiterX\RVvar[1]{.}{.}{
	 #1
}

\DeclarePairedDelimiterXPP\RVpdf[2]{#1}{\lparen}{\rparen}{}{
	 #2
}

\DeclarePairedDelimiterXPP\RVexpected[1]{\mathbb{E}}{\lbrack}{\rbrack}{}{
	 #1
}

\DeclarePairedDelimiterXPP\RVprob[1]{\mathbb{P}}{\lbrace}{\rbrace}{}{
	 #1
}

\newcommand{\st}{\,:\,}
\DeclarePairedDelimiterX\set[1]{\lbrace}{\rbrace}{
	\renewcommand\st{\, \delimsize\vert \, \mathopen{}}
	#1
}

\usepackage{silence}
\settoggle{paperSC}{true}

\begin{document}

\title{Frequency Response of Nonlinear Systems: Notions, Analysis, and Graphical Representation}

\author{
	Alessio Moreschini
	and
	Matteo Scandella
	\thanks{%
	Alessio Moreschini is with the Department of Electrical and Electronic Engineering, Imperial College London, SW72AZ London, U.K. (e-mails: \texttt{a.moreschini@imperial.ac.uk}).
	Matteo Scandella is with the Department of Management, Information and Production Engineering, University of Bergamo, via Marconi 5, 24044, Dalmine (BG), Italy (email: \texttt{matteo.scandella@unibg.it}).
	}
}

\maketitle

\begin{abstract}
The invariance principle, through which the steady-state behavior of nonlinear systems was introduced by Isidori and Byrnes, is leveraged in this article to bring forth a unifying characterization of the frequency response of nonlinear systems.
We show that, for systems under nonlinear periodic excitations, the frequency response can still be defined as a complex-valued function in a phasor form.
However, together with suitable notions of gain and phase functions, we show the existence of another function that completes the frequency response and allows quantifying the distortion introduced by the system in the steady-state output.
This nonlinear characterization enabled the representation over input frequency and amplitude of the gain, phase, and distortion produced by the system, via a nonlinear enhancement of the Bode diagrams.
This graphical representation of the frequency response is well-suited to performance analysis of a nonlinear system and, furthermore, allows for the formulation of the loop-shaping problem for nonlinear systems.

\end{abstract}
\begin{IEEEkeywords}
Nonlinear frequency response; Invariance principle; Nonlinear Bode diagram; Nonlinear loop-shaping; Periodic excitation; Feedback control design
\end{IEEEkeywords}

\section{Introduction}
\label{sec:intro}

\iftoggle{paperSC}
{Frequency}
{\IEEEPARstart{F}{requency}}
response methods for analysis and synthesis of control systems carry a century-long tradition~\cite{MacFarlane1979perspectives}, and they remain prominent in numerous engineering applications~\cite{baillieul2021encyclopedia}, including power systems~\cite{huang2024gain,xu2025interharmonic,gibbard2015small,gu2022power}, systems biology~\cite{olsman2019hard,khammash2022cybergenetics,del2015biomolecular,keener2025mathematical}, modern circuit theory~\cite{paul2007analysis,bullmore2009complex,selvaratnam2025frequency,chua2024memristors}, signal processing~\cite{ortega2018graph,boashash2015time,dong2019learning}.

At the very heart of the frequency response of linear time-invariant (LTI) systems lies the harmonic-preserving principle, whereby a system subjected to harmonic excitation produces a harmonic output at the same frequency, differing only in amplitude and phase.
This principle enabled quantitative measures of how the system response rolls-off or rolls-up the input (\emph{amplitude response}) and how phase leads or lags (\emph{phase response}) across frequencies~\cite{skogestad2005multivariable,franklin2025feedback,doyle2013feedback}.
These measures have had a \diff{profound influence} on analyzing feedback systems through graphical representations such as Bode~\cite{bode1945network} and Nyquist~\cite{Nyquist1932regeneration} diagrams, and on control synthesis, which enables the design of feedback systems with guaranteed performance directly from the shape of the frequency response.
For instance, in loop-shaping controller synthesis, closed-loop specifications can be enforced by feedback control whose Bode magnitude and phase shape the loop frequency response function, see, \emph{e.g.},~\cite{doyle2013feedback,zames2003feedback,tannenbaum1980feedback,mcFarlane1992loop}.

Although frequency response and its graphical representations are widespread in LTI systems, their enhancement to nonlinear control systems remains vague~\cite{bechhoefer2011kramers}.
The crux of nonlinear dynamics is that the output response relies on both the input frequency and amplitude, producing complex effects such as harmonic distortion and frequency-dependent nonlinearities.
Due to the limitations of the Laplace transform in converting nonlinear systems into the frequency domain, a unifying characterization of the frequency response of nonlinear systems remains largely unexplored.
However, before exploring the \emph{status quaestionis} for nonlinear systems, we illustrate how the frequency response of an LTI system can be recast in the time domain. %

\condVSpace{0.5em}

\subsubsection*{Frequency response by an invariance principle}

Consider an LTI system governed by the equations
\begin{align} \label{eq:LTI-system}
	\dot{x} & = Ax + Bu, &
	y & = Cx,
\end{align}
with state $x(t)\in\R^n$, input $u(t)\in\R$, output $y(t)\in\R$, and matrices $A$, $B$, and $C$ of appropriate dimensions, defining a minimal realization.
Suppose that $A$ is Hurwitz, \emph{i.e.}, $\sigma(A)\subset\C^-$.
Applying the Laplace transform to~\eqref{eq:LTI-system} yields the complex-valued (transfer) function $H(s)\coloneq C(sI-A)^{-1}B$, and the \emph{frequency response function} $H(j\varpi) \coloneq H(s)|_{s=j\varpi}$, with frequency $\varpi\in\R_+$. When system~\eqref{eq:LTI-system} is driven by a sine wave of the form $u(t) = a_u \sin(\varpi t + \phi_u)$, with amplitude $a_u\in\R$, and phase $\phi_u\in\R$, the output response is a sine wave of the form $y(t) = a_y\sin(\varpi t + \phi_y)$, with amplification factor $a_y\in\R$ and phase-shifted factor $\phi_y\in\R$ given by, see~\cite[Sec. 2.1]{skogestad2005multivariable},
\begin{align*}
	a_y &= a_u \norm*{H(j\varpi)}, &
	\phi_y &= \phi_u + \arg \pr*{ H(j\varpi) }.
\end{align*}

The same frequency response function can be alternatively recast in the time domain by leveraging the notion of \emph{steady-state} \diff{Ã  la} Isidori and Byrnes~\cite{isidori1990output,isidori1995nonlinear,isidori2008steady}, in which the output response is analyzed through an invariance principle associated with the cascade interconnection of the system and a harmonic oscillator describing the sine wave, see~\cite[Sec. A.5]{isidori2017lectures}.
Specifically, sine waves of the form $u(t) = a_u\sin(\varpi t + \phi_u)$, can be alternatively generated by a harmonic oscillator of the form
\begin{align} \label{eq:LTI-exosystem}
	\dot{z} & = S(\varpi)z, & u & = Lz,
\end{align}
with initial conditions $z(0) = (a_u\sin(\phi_u), a_u \cos(\phi_u))\in\R^2$ and matrices
\begin{align} \label{eq:fixed_SL}
	S(\varpi)
	&
	= \ps*{ \begin{array}{cc} 0 & \varpi \\ -\varpi & 0 \end{array} },
	&
	L
	&
	= \ps*{ \begin{array}{cc} 1 & 0 \end{array} }.
\end{align}

Since $A$ is Hurwitz and $S(\varpi)$ has purely imaginary eigenvalues for every $\varpi>0$, it is not hard to establish that, irrespective of the initial condition $x(0)$, the response of the system~\eqref{eq:LTI-system}, driven by the harmonic oscillator~\eqref{eq:LTI-exosystem},
converges to the function $Y(\varpi) \coloneq C \Phi(\varpi) $, where $\Phi(\varpi)$ is the unique solution of the Sylvester equation
\begin{equation} \label{eq:Sylvester}
	A\Phi(\varpi) + BL = \Phi(\varpi) S(\varpi).
\end{equation}
The function $Y(\varpi)$, which is determined by the parameterized center eigenspace of the cascade system~\eqref{eq:LTI-system}-\eqref{eq:LTI-exosystem}, is directly related to the frequency response function $H(j\varpi)$, since $Y(\varpi) \iota = H(j\varpi)$ for $\iota \coloneq [1, j]^\top$, see Lemma~\ref{th:LTI-tf-Phi} in Appendix.

With this in mind, the foundational question that motivated this article is whether this time-domain characterization of the frequency response can be extended to nonlinear systems, enabling frequency-based tools for analysis and design analogous to those used in linear systems theory.

\condVSpace{0.5em}

\subsubsection*{State of the Art}
\label{sec:intro:literature}

Significant efforts have been dedicated over the last three decades to the nonlinear enhancements of frequency-domain concepts.
One of the earliest extensions of the frequency response to nonlinear systems is provided through \emph{generalized frequency response functions}.
For convergent Volterra series representations of its input-output map describing the system, these functions can be deduced from the Fourier transforms of the Volterra kernels.
Although they extend the notion of frequency response, the presence of frequency coupling and harmonic interactions makes their interpretation considerably more complex than the linear case, see, \emph{e.g.},~\cite{george1959continuous,bayma2018analysis,boyd2003fading}.
These functions are challenging to compute and represent, limiting their practical applicability.

From a control-theoretic perspective, the attack on the problem of $H_\infty$-control of nonlinear systems, as presented in~\cite{isidori1992disturbance}, produced a tool that, based on the steady-state response of a forced system, can be recognized as an early characterization of the frequency response of nonlinear systems.
In~\cite{isidori1992disturbance}, the gain (or amplitude) of a nonlinear system has been given by the ratio of the $\mathcal{L}_2$-induced norm of the steady-state output and that of the periodic input, providing an instrumental tool for extending $H_\infty$-control methods to nonlinear systems.
In the context of moment matching,~\cite{astolfi2010model} characterized the frequency response of a nonlinear system under harmonic excitations as the ratio of the Fourier series (with nonnegative indices) of the steady-state output to that of the output of the harmonic oscillator.
As shown in~\cite{pavlov2007frequency}, convergent systems driven by harmonic inputs admit a global steady-state behavior, at various input amplitudes and frequencies.
This property led to the definition of gain as the maximum output-to-input amplitude ratio over a finite set of harmonic input amplitudes.
The gain in~\cite{pavlov2007frequency} was illustrated via a gain-frequency diagram, offering a Bode-like representation of how a convergent system amplifies sinusoidal inputs across frequencies and amplitudes.
Despite significant interest and advancements in gain for nonlinear systems, phase received much less attention in these studies.

The first interest in the phase of nonlinear systems was associated with frequency-domain approximations~\cite{chua1979frequency,billings1994analysing,rugh1981nonlinear}, and operator theory~\cite{zames1966inputII}.
Recently, there has been a resurgence of interest in the notion of phase for nonlinear systems from an operator perspective.
In~\cite{chen2021phasenonlinearsystems,chen2021singularanglenonlinearsystems}, the phase is defined as the angle between the output of the system and the complex signal constructed using the input and its Hilbert transform as its real and imaginary part, respectively.
The Hilbert transform has been the subject of many research areas, such as optics, fluids, and signal processing~\cite{king2009hilbert}.
A similar operator reasoning is employed in~\cite{chaffey2023graphical}, where the concepts of gain and phase are defined using a scaled relative graph~\cite{ryu2022scaled}, which opened to a nonlinear operator generalization of the Nyquist diagram, shedding new light on how the system amplifies the input and how it shifts the signal.
However, operator-based representations can still be less practical than first-order models, as capturing complex nonlinear behavior in operator form is often more challenging.

\condVSpace{0.5em}

\subsubsection*{Contributions}
\label{sec:intro:contribution}

Taking as a point of departure the invariance principle, a characterization of the frequency response function for nonlinear systems is given.
For a family of periodic nonlinear inputs parameterized by $\omega$, it is shown that the \emph{frequency response function} associated with the nonlinear system can be given as the complex-valued function
\begin{equation*}
	\boxed{ \Gamma(\omega) = \alpha(\omega)r(\omega) e^{j \vartheta(\omega)} , }
\end{equation*}
where $\alpha(\omega)$, $\vartheta(\omega)$, and $r(\omega)$, which are referred to as the $\omega$-gain, $\omega$-phase, and $\omega$-radius, respectively, are functions associated with the steady-state relation between input and output.
As a result, it is shown that most of the features of the frequency response of an LTI system are expressions of more general nonlinear principles.
Indeed, the introduced notion captures amplitude-dependent nonlinear effects in both amplitude and phase by reflecting attenuation or amplification of the input effect ($\omega$-gain), leads or lags of the input effect ($\omega$-phase), and nonlinear distortion of the input signal through the system ($\omega$-radius).
While $\omega$-gain and $\omega$-phase are enhancements of the notions of gain and phase of LTI systems, the $\omega$-radius is a function, purely resulting from nonlinear dynamics, which quantifies the distortion introduced by the system in the steady-state output compared to the periodic input.

Together, these frequency response tools allow for graphical representation of input frequency and amplitude.
This provides a full picture of the frequency response function in the form of a nonlinear Bode diagram.
It also facilitates graphical analysis of nonlinear system dynamics and bridges the gap between linear approximations and true nonlinear behavior.
Finally, we propose a \emph{nonlinear loop-shaping} formulation in terms of a specification set which accounts for desired quantitative requirements on the feedback system.
The preliminary problem of shaping only the gain function was initialized in the conference article~\cite{moreschini2026onfrequency}.

\condVSpace{0.5em}

\subsubsection*{Organization}
\label{sec:intro:organization}
Section~\ref{sec:w-response} opens with the introduction of basic definitions and the setup of the framework.
Section~\ref{sec:nonlinear-freq-analysis} formalizes the notion of frequency response, along with the functions $\omega$-gain, $\omega$-phase, and $\omega$-radius, before illustrating nonlinear Bode diagrams through examples.
Section~\ref{sec:analysis_of_frf} analyzes the frequency response of dissipative systems, before presenting a weak form of the superposition principle.
Section~\ref{sec: loop-shaping} discusses the nonlinear loop-shaping problem and validates it in an example.
Final remarks and open challenges are in Section~\ref{sec:end}.

\condVSpace{0.5em}

\subsubsection*{Notation}
\label{sec:intro:notation}

The sets of real, natural, integer, and complex numbers are denoted by $\R$, $\N$, $\Z$, and $\C$, respectively, and $j \in \C$ denotes the imaginary unit.
Given a function $\mathcal{A}$ with domain $Q$ mapping into sets, $\bigtimes_{q \in Q} \mathcal{A}(q)$ denotes the Cartesian product of its values.
The set of strictly positive real numbers is denoted by $\R_+$, and the set of $n \times m$ real matrices by $\R^{n \times m}$.
Tuples of real numbers and column vectors are used interchangeably; when clear from the context, $0$ denotes the zero matrix of appropriate dimension.
For $\R^n$, $\norm*{} : \R^n \to \R$ denotes the Euclidean norm.
The space of square-integrable functions mapping $\mathcal{A} \subseteq \R^n$ to $\mathcal{B} \subseteq \R^m$ is denoted by $\mathcal{L}_2(\mathcal{A},\mathcal{B})$, \emph{i.e.}, the set of measurable functions $f : \mathcal{A} \to \mathcal{B}$ such that $\int_{\mathcal{A}} \norm*{f(x)}^2 dx < \infty$.
When $\mathcal{B}$ is clear from the context, $\mathcal{L}_2(\mathcal{A})$ is used in place of $\mathcal{L}_2(\mathcal{A},\mathcal{B})$.
For $f,g \in \mathcal{L}_2(\mathcal{A},\mathcal{B})$, the inner product and induced norm are, respectively, defined as
\begin{align*}
	\inner*{f}{g}_{\mathcal{L}_2(\mathcal{A})}
	&
	\coloneq
	\int_{\mathcal{A}} f(x)^\top g(x) dx \, ,
	&
	\norm*{f}_{\mathcal{L}_2(\mathcal{A})}
	&
	\coloneq
	\sqrt{\inner*{f}{f}_{\mathcal{L}_2(\mathcal{A})} } \, .
\end{align*}

\section{Basic Definitions}

\label{sec:w-response}

Consider a multi-input, multi-output nonlinear system described in state-space by the equations of the form
\begin{equation} \label{eq:driven_sys}
\begin{aligned}
	\dot{x} & = f(x,u), &
	y & = h(x,u),
\end{aligned}
\end{equation}
with state $x(t) \in \R^n$, input $u(t) \in \R^m$, and output $y(t) \in \R^p$.
To guarantee that, for all $x_0 \coloneq x(0) \in \mathcal{X} \subseteq \R^n$ the solution of~\eqref{eq:driven_sys} exists on some open interval containing $t=0$ and is unique, we assume that $f : \R^n \times \R^m \to \R^n$ and $h : \R^n \times \R^m \to \R^p$ are locally Lipschitz in $x$ and uniformly continuous in $u$, and that $u(t)$ is measurable and essentially bounded over any finite interval. Frequency response analysis for linear systems focuses on the steady-state response to harmonic input signals.
Hence, characterizing a nonlinear frequency response requires a corresponding notion of steady-state suitable for nonlinear systems.
Following Isidori and Byrnes~\cite{isidori2008steady}, the steady-state response for nonlinear systems of the form~\eqref{eq:driven_sys} can be defined under the assumption that the input $u(t)$ is a periodic signal generated by a signal generator described by the equations%
\footnote{
	Since frequency and amplitude for nonlinear oscillators might be contingent on the initial conditions other than system parameters (see, \emph{e.g.},~\cite{he2006some,cveticanin2018strong}), it is appropriate to parameterize $\ell$ by $\omega$.
	This is relevant in situations in which~\eqref{eq:driving_sys} can only be initialized within its omega-limit set to induce oscillatory behaviors. %
}
\begin{equation} \label{eq:driving_sys}
\begin{aligned}
	\dot{z} & = s(\omega, z), &
	u & = \ell(\omega, z),
\end{aligned}
\end{equation}
with state $z(t) \in \R^r$ and parameter vector $\omega \in \mathcal{N} \subset \R^q$, where $\mathcal{N}$ is a bounded subset.
For each fixed $\omega \in \mathcal{N}$, the associated initial condition $z(0) = z_0(\omega)$ are parameterized by $\omega$ and are assumed to belong to $\mathcal{Z}_\omega \subseteq \R^r$, where $\mathcal{Z}_\omega$ is a compact invariant subset chosen as a periodic orbit for~\eqref{eq:driving_sys}.
More precisely, for fixed $\omega$, the omega-limit set of $\mathcal{Z}_\omega$ coincides with $\mathcal{Z}_\omega$ itself; see~\cite{isidori2008steady}.
We assume that the functions $s : \R^r \times \R^q \to \R^r$ and $\ell : \R^r \times \R^q \to \R^m$ are continuous in both arguments, and that $s(z, \omega)$ is locally Lipschitz in $z$, uniformly in $\omega$.
\begin{assumption} \label{ass:z_u_periodic}
	For every $\omega \in \mathcal{N}$, the state $z$ and output $u = \ell(\omega,z)$ of~\eqref{eq:driving_sys} are continuous and periodic, with period $T_\omega \in \R_+$ depending on the parameter $\omega$.
\end{assumption}
To guarantee nontriviality of the input signal, we assume that $u$ is not identically zero.
That is, there exists a $\bar{t} > 0$ such that $\ell(\omega, z(\bar{t})) \ne 0$.
To streamline the presentation, the compact interval $\T_\omega \coloneq \ps*{0, T_\omega}$, representing the time domain from $0$ to $T_\omega$ is introduced.
\begin{lemma} \label{th:u_udot_0}
	For~\eqref{eq:driving_sys} it holds that $\inner[\big]{\dot{\ell}(\omega,z)}{\ell(\omega,z)}_{\mathcal{L}_2(\T_\omega)} = 0$.
\end{lemma}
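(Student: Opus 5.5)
The plan is to reduce the inner product to the integral of a total derivative over one period and then invoke periodicity. Write $u(t)=\ell(\omega,z(t))$, where $z(\cdot)$ is the solution of~\eqref{eq:driving_sys} with $z(0)=z_0(\omega)\in\mathcal{Z}_\omega$. Interpret $\dot{\ell}(\omega,z)$ as $\frac{d}{dt}u(t)$ along this trajectory. By the definition of the inner product in $\mathcal{L}_2(\T_\omega)$,
\begin{equation*}
	\inner[\big]{\dot{\ell}(\omega,z)}{\ell(\omega,z)}_{\mathcal{L}_2(\T_\omega)}
	= \int_0^{T_\omega} \dot{u}(t)^\top u(t)\, dt
	= \int_0^{T_\omega} \frac{1}{2}\frac{d}{dt}\norm*{u(t)}^2 \, dt .
\end{equation*}
The second equality is the pointwise identity $\frac{d}{dt}\norm*{u}^2 = 2\,u^\top\dot{u}$.

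Next, I apply the fundamental theorem of calculus, which gives $\tfrac12\pr*{\norm*{u(T_\omega)}^2-\norm*{u(0)}^2}$. By Assumption~\ref{ass:z_u_periodic}, $u$ is $T_\omega$-periodic, so $u(T_\omega)=u(0)$ and the expression vanishes. Since the input is nontrivial, $\norm*{u}_{\mathcal{L}_2(\T_\omega)}>0$. This positivity is not needed for the identity itself, but it confirms that the statement is not vacuous.

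The main obstacle is regularity. The statement implicitly requires $\dot{\ell}$ to exist, and the standing hypotheses give only continuity of $\ell$ together with a $C^1$ trajectory $z$.

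I would first assume, as the lemma implicitly does, that $t\mapsto \ell(\omega,z(t))$ is absolutely continuous, for instance because $\ell$ is locally Lipschitz or $C^1$ in $z$. With that assumption, $\dot u$ exists almost everywhere and lies in $\mathcal{L}_1$, and $\norm*{u}^2$ is absolutely continuous on the compact interval $\T_\omega$, since $u$ is bounded there. The fundamental theorem of calculus for absolutely continuous functions then justifies the computation above.

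If $\ell$ is $C^1$, I can also write $\dot u = \frac{\partial \ell}{\partial z}(\omega,z)\,s(\omega,z)$ explicitly, and the argument goes through unchanged.
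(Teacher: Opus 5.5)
Your proof is correct and is essentially the paper's argument. The paper integrates by parts to get $2\langle \dot u, u\rangle_{\mathcal{L}_2(\T_\omega)} = |u(T_\omega)|^2 - |u(0)|^2$, which is your identity $\dot u^\top u = \tfrac12 \tfrac{d}{dt}|u|^2$ integrated over $\T_\omega$, and both arguments conclude from $u(T_\omega) = u(0)$. Your regularity remark (absolute continuity of $t \mapsto \ell(\omega, z(t))$) is a welcome addition, since the paper relies on the same assumption without stating it.
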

\begin{proof} \label{th:u_udot_0::proof}
	Since $u = \ell(\omega,z)$, it follows that $\dot{u} = \dot{\ell}(\omega,z)$.
	Using integration by parts, we obtain
	\begin{equation*}
		\inner*{\dot{u}}{u}_{\mathcal{L}_2(\T_\omega)} = \norm*{u(T_\omega)}^2 - \norm*{u(0)}^2 - \inner*{u}{\dot{u}}_{\mathcal{L}_2(\T_\omega)}.
	\end{equation*}
	Since $u$ is periodic with period $T_\omega$, we have that $u(T_\omega) = u(0)$.
	Thus, by symmetry of the inner product, $2 \inner*{\dot{u}}{u}_{\mathcal{L}_2(\T_\omega)} = \norm*{u(T_\omega)}^2 - \norm*{u(0)}^2 = 0$, which concludes the proof.
\end{proof}

To make the subsequent derivations well-defined, we introduce a key assumption on the interconnected system~\eqref{eq:driven_sys}-\eqref{eq:driving_sys}.

\begin{assumption} \label{ass:compact_invariant_ultimately_bounded}
	For every $\omega \in \mathcal{N}$ and every initial condition $(z(0), x(0)) \in \mathcal{Z}_\omega \times \mathcal{X}$ the subset $\mathcal{X}$ of the state space of the system~\eqref{eq:driven_sys} is such that the trajectories of the interconnected system~\eqref{eq:driven_sys}-\eqref{eq:driving_sys} are ultimately bounded, and remain in $\mathcal{Z}_\omega \times \mathcal{X}$.
	Moreover, there exists a (sufficiently) smooth mapping $\varphi : \mathcal{N} \times \R^r \to \R^n$ such that the submanifold
	\begin{equation}\label{eq:inv_manifold}
		\mathcal{M}
		\coloneq
		\set[\big]{
			(\omega, z, x) \in \mathcal{N} \times \mathcal{Z}_\omega \times \mathcal{X}
			\st
			x = \varphi(\omega, z)
		}
	\end{equation}
	is well-defined and invariant under~\eqref{eq:driven_sys}-\eqref{eq:driving_sys}.
	That is, if $(\omega, z(\bar{t}), x(\bar{t})) \in \mathcal{M}$, then $(\omega, z(t), x(t)) \in \mathcal{M}$ for all $t \ge \bar{t}$.
\end{assumption}
Under Assumption~\ref{ass:compact_invariant_ultimately_bounded}, for all $(\omega, z_0, x_0) \in \mathcal{M}$, the steady-state response of the nonlinear system~\eqref{eq:driven_sys} to the class of periodic inputs governed by the signal generator~\eqref{eq:driving_sys} is fully determined by the invariant manifold $\mathcal{M}$, see~\cite{isidori2008steady}.
Furthermore, the invariance of $\mathcal{M}$ under the flow of the interconnected system necessitates that the mapping $\varphi$ satisfies the partial differential equation (PDE), known as the \emph{invariance equation},
\begin{equation} \label{eq:pi_pde}
	f(\varphi(\omega,z),\ell(\omega,z))
	=
	\frac{\partial \varphi}{\partial z} s(\omega,z).
\end{equation}

The invariance equation~\eqref{eq:pi_pde} ensures that the vector field is tangent to the manifold $\mathcal{M}$, and is equivalent to requiring that the nonlinear system entrains to the periodic input signal~\cite{carr2012a}.
As a result, its long-term behavior can be captured entirely by the mapping $\varphi$ from the generator state to the system state.
Under suitable stability assumptions on the system~\eqref{eq:driven_sys}, the invariant manifold $\mathcal{M}$ becomes \emph{attractive}.
Specifically, if the interconnected system~\eqref{eq:driven_sys}-\eqref{eq:driving_sys} is such that all trajectories originating by initial conditions in $\mathcal{Z}_\omega \times \mathcal{X}$ converge asymptotically to $\mathcal{M}$, then it is not required for $(z(0), x(0))$ to lie exactly on $\mathcal{M}$.
This property allows the nonlinear framework to capture the asymptotic behavior of a broader class of initial conditions, much like the role of asymptotic stability in LTI systems.

Recalling that the input $u = \ell(\omega, z)$ is continuous and $T_\omega$-periodic, it follows that the state trajectories $x$ and output trajectories $y$ of the interconnected system~\eqref{eq:driven_sys}-\eqref{eq:driving_sys} when restricted to the invariant manifold $\mathcal{M}$ are likewise $T_\omega$-periodic.
The state and output restricted on $\mathcal{M}$ are given by $x = \varphi(\omega, z)$ and $y = h(\varphi(\omega, z), \ell(\omega, z))$, respectively.
Analogous to the case of asymptotically stable LTI systems, the signals $u$ and $y$ on the manifold $\mathcal{M}$ share the same fundamental period.
In this regard, frequency-domain analysis of the steady-state behavior of the nonlinear interconnected system~\eqref{eq:driven_sys}-\eqref{eq:driving_sys} may be conducted by restricting attention to the input and output signals $u$ and $y$ on the invariant manifold $\mathcal{M}$, evaluated over~$\T_\omega$.

These arguments allow defining the time-domain frequency response of the nonlinear system in~\eqref{eq:driven_sys}, when it is driven by~\eqref{eq:driving_sys} with parameter $\omega \in \mathcal{N}$, as follows.
\begin{definition}[\emph{$\omega$-response}] \label{def:w-response}
	Given the nonlinear system~\eqref{eq:driven_sys} the \emph{$\omega$-response} to the input~\eqref{eq:driving_sys} for all $\omega\in\mathcal{N}$ is given by
	\begin{equation} \label{eq:w-response}
		Y(\omega,z) \coloneq h(\varphi(\omega,z), \ell(\omega,z)).
	\end{equation}
\end{definition}
The notion of $\omega$-response is here introduced to formalize the parametric dependence of $\omega$ in the steady-state output response of the system~\eqref{eq:driven_sys} under the input signal generated by~\eqref{eq:driving_sys}.
This $\omega$-response function fully captures the effect of the class of inputs generated by~\eqref{eq:LTI-exosystem}, indicating that it can be used to analyze how the system output varies with changes in $\omega$ and in the initial conditions $z_0(\omega) \in \mathcal{Z}_\omega$.

The following example illustrates how nonlinearities affect the $\omega$-response of a system driven by an exogenous signal.
We illustrate the evolution of the $\omega$-response with respect to its generating input, emphasizing its sensitivity to variations in $\omega$ and differing initial conditions $z(0) = z_0(\omega)$.

\begin{example} \label{ex:base}
	Let~\eqref{eq:driving_sys} be the harmonic oscillator~\eqref{eq:LTI-exosystem} with parameter $\omega = (\varpi, a_u) \in \R_+^2$, state $z = (z_1, z_2) \in \R^2$, and initial condition $z_0(\omega) = (a_u, 0) \in \R^2$.
	This signal generator generates sine waves with amplitude $a_u$ and angular frequency $\varpi$.
	The nonlinear system~\eqref{eq:driven_sys} has state $x = (x_1, x_2) \in \R^2$, and its dynamics are defined as
	\begin{align}
		f(x,u)
		&=
		\pr*{
		\begin{array}{c}
			-a_1 x_1 + u \\
			-a_2 x_1^2 - a_3 x_2 + b_1 x_1 u
		\end{array}
		},
		\label{eq:nonlinear-system_example_f}
		\\
		h(x,u) &= x_1 + c_1 x_1 x_2.
		\label{eq:nonlinear-system_example_h}
	\end{align}

	The local behavior of the nonlinear system~\eqref{eq:nonlinear-system_example_f}-\eqref{eq:nonlinear-system_example_h} near the hyperbolic equilibrium $x = 0$ is an LTI system (Hartman-Grobman theorem~\cite{hartman1982ordinary}) that can be obtained by setting $(a_2, b_1, c_1) = (0, 0, 0)$.
	To compute the $\omega$-response associated with the interconnected system~\eqref{eq:LTI-exosystem}-\eqref{eq:nonlinear-system_example_f}, with respect to the output~\eqref{eq:nonlinear-system_example_h}, we seek a mapping $\varphi(\omega, z) = (\varphi_1, \varphi_2)$ that satisfies the invariance equation~\eqref{eq:pi_pde}.
	In this case, the interconnected system~\eqref{eq:LTI-exosystem}-\eqref{eq:nonlinear-system_example_f} yields the following quadratic solution
	\begin{equation} \label{eq:nonlinear-system_example_phi}
		\varphi(\omega, z)
		=
		\pr*{
		\begin{array}{c}
			d_{1,\omega} z_1 + d_{2,\omega} z_2 \\
			d_{3,\omega} z_1 z_2 + d_{4,\omega} z_1^2 + d_{5,\omega} z_2^2
		\end{array}
		},
	\end{equation}
	where the coefficients $d_{i,\omega}$, $i \in \pc*{1, \ldots, 5}$, are functions of the parameter $\omega$ and determined by solving the invariance equation.
	With the mapping~\eqref{eq:nonlinear-system_example_phi} in hand, we can evaluate the $\omega$-response using the composition $h \circ \varphi$, as defined in~\eqref{eq:w-response}.

	To illustrate the sensitivity of the $\omega$-response, we consider the specific parameter configuration
	\begin{align} \label{eq:parameters_ex1}
		a_1 & = 0.5, &
		a_2 & = 1, &
		a_3 & = 1, &
		b_1 & = 1, &
		c_1 & = 1.
	\end{align}
	Figure~\ref{fig:ex:base:Y_planes} shows the surface of the $\omega$-response for three different input frequencies, \emph{i.e.}, $\omega = (0.01, 1)$, $\omega = (0.1, 1)$, and $\omega = (1, 1)$.
	The surface exhibits strongly nonlinear behavior for small values of $\varpi$, as evidenced by its curvature.
	As $\varpi$ increases, the surface gradually flattens and approaches a planar shape, indicating convergence toward behavior typical of a linear system.
	This trend suggests that, for large $\varpi$, the nonlinear effects become negligible, and the system behaves similarly to its linear approximation.

	Figure~\ref{fig:ex:base:ts_U} displays representative input signals $z_1$, and Figure~\ref{fig:ex:base:ts_Y} shows the corresponding $\omega$-response trajectories as a function of $a_u$ and fixed $\varpi = 0.1$.
	For small amplitudes $a_u$, the $\omega$-response closely tracks the input signal, indicating strong signal correlation and validation of the Hartman-Grobman theorem around the origin.
	However, as the input amplitude approaches $0.8$, the response exhibits deviation from the input trajectory.
	For a small $\varpi$, such as $\omega = 0.1$, the output shows both phase inversion and amplitude distortion, with changes in trajectory shape.
	These phenomena reflect the nonlinear system sensitivity to input magnitude and highlight how the $\omega$-response captures such effects. Therefore, despite its stability, the system performs differently for different excitations. 

	The framework presented in Section~\ref{sec:nonlinear-freq-analysis} unifies the analysis of all these phenomena and extends them through a suitable frequency response function for nonlinear systems.
\end{example}
\begin{figure*}[t]
	\centering
	\condVSpace{-1.5em}
	\begin{subfigure}[b]{0.325\linewidth}
	\centering
	\includegraphics[width=\linewidth]{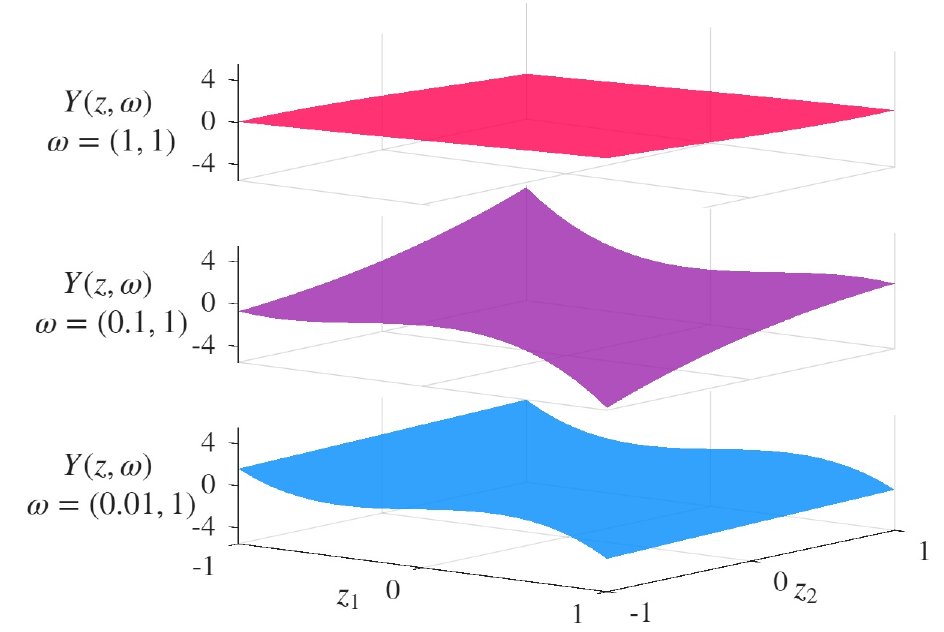}
	\condVSpace{-1.5em}
	\caption{ }
	\label{fig:ex:base:Y_planes}
	\end{subfigure}
	\hfill
	\begin{subfigure}[b]{0.325\linewidth}
	\centering
	\includegraphics[width=\linewidth]{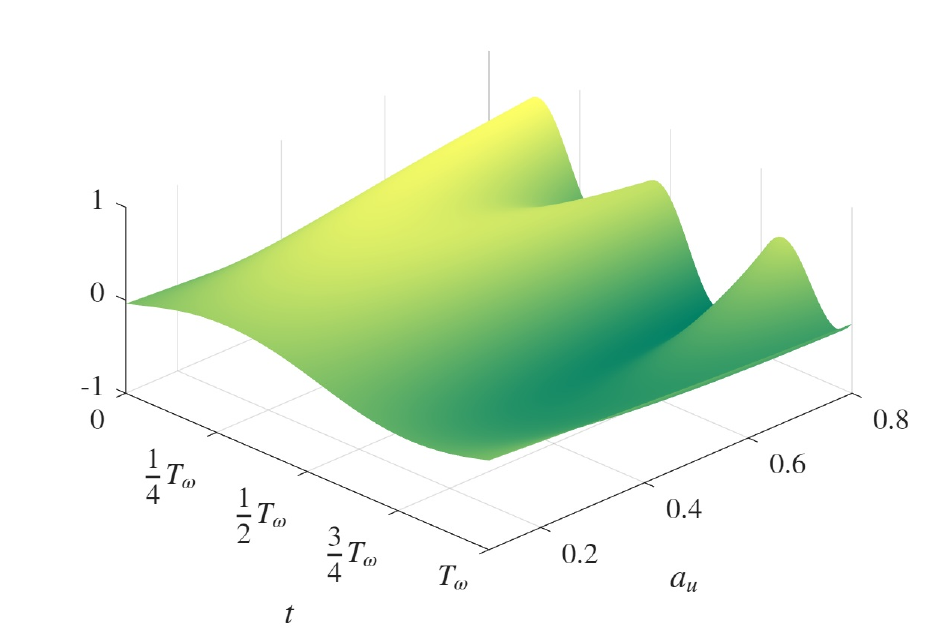}
	\condVSpace{-1.5em}
	\caption{ }
	\label{fig:ex:base:ts_Y}
	\end{subfigure}
	\hfill
	\begin{subfigure}[b]{0.325\linewidth}
	\centering
	\includegraphics[width=\linewidth]{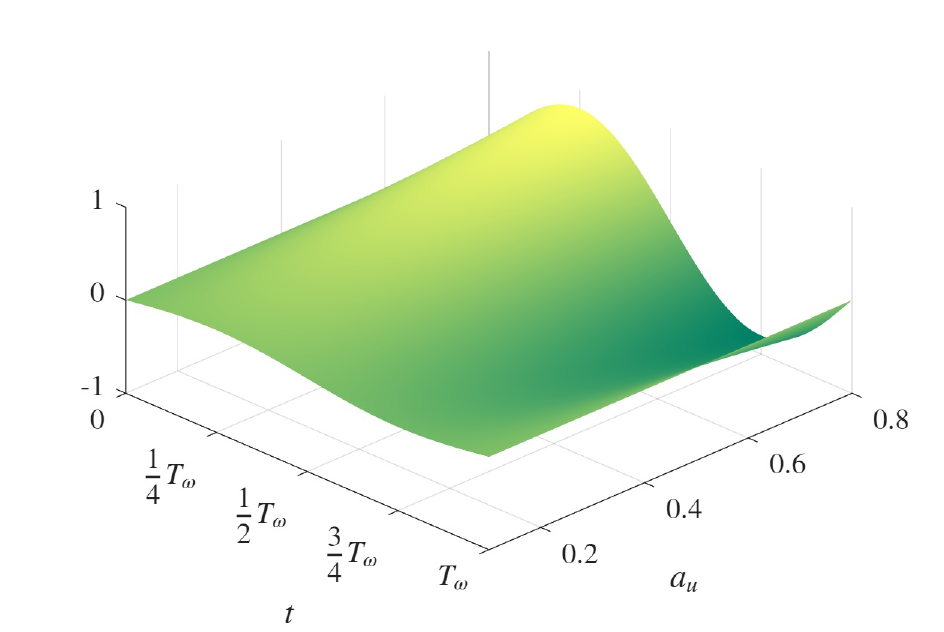}
	\condVSpace{-1.5em}
	\caption{ }
	\label{fig:ex:base:ts_U}
	\end{subfigure}
	\caption{
		Graphical analysis of the $\omega$-response of the nonlinear system~\eqref{eq:nonlinear-system_example_f}-\eqref{eq:nonlinear-system_example_h} to the input defined in~\eqref{eq:LTI-exosystem} corresponding to the parameter configuration~\eqref{eq:parameters_ex1}:
		(a) Surface of $Y(\omega, z)$ with $z \in [-1, 1]^2$ with $a_u = 1$ and $\varpi \in \pc*{10^{-2}, 10^{-1}, 1}$;
		(b) Time evolution of the function $Y(\omega, z(t))$ over the domain $t \in \T_\omega$, with $\varpi = 0.1$ and $a_u \in [0.1, 0.8]$;
		(c) Time evolution of the	input $\ell(\omega, z(t))$ under the same conditions as in (b).
	}
	\label{fig:nonlinear_w_response}
	\condVSpace{-1.5em}
\end{figure*}

\vspace{-1em}
\section{The Concept of Frequency Response of Nonlinear Systems}

\label{sec:nonlinear-freq-analysis}

We recall that the continuity and periodicity of $\ell(\omega, z)$ and $Y(\omega, z)$, with period $T_\omega \in \R_+$ depending on the parameter $\omega$, imply that both functions are bounded and square-integrable, and thus belong to well-defined function spaces.
Specifically, boundedness ensures finite amplitude over the domain, while square integrability guarantees finite energy in the $\mathcal{L}_2$ sense.

With this in mind, to characterize the behavior of the nonlinear system~\eqref{eq:driven_sys} with respect to the input~\eqref{eq:driving_sys} through the $\omega$-response, we define a series of functions which are instrumental to characterize the frequency response function.
\begin{definition}[\emph{$\omega$-gain}] \label{def:w_gain}
	The \emph{$\omega$-gain} function $\alpha : \mathcal{N} \to \R$ of the system~\eqref{eq:driven_sys} to the input~\eqref{eq:driving_sys} for all $\omega\in\mathcal{N}$ is given by
	\begin{equation}\label{eq:def_w-gain}
		\alpha(\omega)
		\coloneq
		\frac
		{ \norm[\big]{Y(\omega,z)}_{\mathcal{L}_2 \pr*{\T_\omega}} }
		{ \norm[\big]{\ell(\omega,z)}_{\mathcal{L}_2 \pr*{\T_\omega}} }
		.
	\end{equation}
\end{definition}
The $\omega$-gain function\footnote{
This gain function was used in~\cite{isidori1992disturbance} in the context of $H_\infty$-control, with the difference that here the $\mathcal{L}_2$ norm is defined over the compact $\T_\omega$. In~\cite{moreschini2026onfrequency}, a similar gain was given in terms of $\mathcal{L}_p$ norm.
}, as defined in~\eqref{eq:def_w-gain}, quantifies the amplification (or attenuation) of the input signal at every frequency \diff{associated with} $\omega$.
It describes the ratio between the energy (\emph{i.e.}, the $\mathcal{L}_2$ norm) of the $\omega$-response and that of the reference input signal $\ell(\omega, z)$, both measured over one period corresponding to $\omega$.
However, this function possesses a notable property, which is proved in the following statement.

\begin{proposition} \label{th:LTI-gain}
	Consider the LTI system~\eqref{eq:LTI-system} with transfer function $H : \C \to \C$, driven by the harmonic signal generator~\eqref{eq:LTI-exosystem} with parameter $\omega = (\varpi, a_u, \phi_u) \in \R_+^2 \times \R$ and initial condition $z_0(\omega) = (a_u \sin\phi_u, a_u \cos\phi_u) \in \R^2$.
	Then, the associated $\omega$-gain function~\eqref{eq:def_w-gain} satisfies
	\begin{equation} \label{eq:w-gain-linear}
		\alpha(\omega) = \norm*{ H\pr*{j \varpi} }.
	\end{equation}
\end{proposition}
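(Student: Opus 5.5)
In the linear setting the invariant manifold is linear, and both signals entering the $\omega$-gain are sinusoids of the same frequency, so the ratio of their $\mathcal{L}_2$ norms over a period reduces to a ratio of amplitudes. Concretely, by the discussion in the introduction, the invariance equation~\eqref{eq:pi_pde} for the cascade~\eqref{eq:LTI-system}-\eqref{eq:LTI-exosystem} is solved by $\varphi(\omega,z)=\Phi(\varpi)z$. Here $\Phi(\varpi)$ is the unique solution of the Sylvester equation~\eqref{eq:Sylvester}; it exists and is unique because $\sigma(A)\subset\C^-$ and $\sigma(S(\varpi))=\{\pm j\varpi\}$ are disjoint. Hence the $\omega$-response~\eqref{eq:w-response} is $Y(\omega,z)=C\Phi(\varpi)z$.

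Along the generator trajectory with $z_0(\omega)=(a_u\sin\phi_u,a_u\cos\phi_u)$ we have
\[
z(t)=a_u\bigl(\sin(\varpi t+\phi_u),\cos(\varpi t+\phi_u)\bigr),
\]
so $\ell(\omega,z(t))=a_u\sin(\varpi t+\phi_u)$ and $T_\omega=2\pi/\varpi$. Set $C\Phi(\varpi)=[p,\ q]$. Lemma~\ref{th:LTI-tf-Phi} gives $Y(\varpi)\iota=p+jq=H(j\varpi)$, so $p=\operatorname{Re}H(j\varpi)$ and $q=\operatorname{Im}H(j\varpi)$. Consequently
\[
Y(\omega,z(t))=a_u\bigl(p\sin(\varpi t+\phi_u)+q\cos(\varpi t+\phi_u)\bigr)=a_u\norm*{H(j\varpi)}\sin\bigl(\varpi t+\phi_u+\arg H(j\varpi)\bigr).
\]

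It remains to compute the norms in~\eqref{eq:def_w-gain}. Over a full period,
\[
\int_0^{T_\omega}\sin^2(\varpi t+\psi)\,dt=\frac{\pi}{\varpi}
\]
for every $\psi$. Therefore $\norm{\ell}^2_{\mathcal{L}_2(\T_\omega)}=a_u^2\pi/\varpi$ and $\norm{Y}^2_{\mathcal{L}_2(\T_\omega)}=a_u^2\norm*{H(j\varpi)}^2\pi/\varpi$. Since $a_u>0$, the input norm is nonzero and the ratio is well defined; it equals $\norm*{H(j\varpi)}$, which proves~\eqref{eq:w-gain-linear}.

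Alternatively, one can avoid the amplitude-phase form and use orthogonality directly. Let $s(t)=\sin(\varpi t+\phi_u)$ and $c(t)=\cos(\varpi t+\phi_u)$. Then $\inner{s}{c}_{\mathcal{L}_2(\T_\omega)}=0$, which is the content of Lemma~\ref{th:u_udot_0} since $\dot u\propto c$, and $\norm{s}=\norm{c}$ over $\T_\omega$. Hence $\norm{ps+qc}^2=(p^2+q^2)\norm{s}^2$, giving the same conclusion.

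The main obstacle is the identification of $C\Phi(\varpi)$ with $(\operatorname{Re}H(j\varpi),\operatorname{Im}H(j\varpi))$, including its sign and ordering conventions relative to $S(\varpi)$ and $\iota$. I take this from Lemma~\ref{th:LTI-tf-Phi}. If a direct check is needed, multiply~\eqref{eq:Sylvester} on the right by $\iota$: since $S(\varpi)\iota=j\varpi\iota$, this gives $(j\varpi I-A)\Phi\iota=B L\iota=B$, hence $C\Phi\iota=H(j\varpi)$. Even if the convention were conjugated, the modulus $\sqrt{p^2+q^2}=\norm*{H(j\varpi)}$ would be unchanged, so the result does not depend on it.
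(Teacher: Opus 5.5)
Your proof is correct and follows essentially the paper's argument. The paper writes $Y(\omega,z)=C\Phi_1(\varpi)z_1+C\Phi_2(\varpi)z_2$, identifies $C\Phi_1(\varpi)+jC\Phi_2(\varpi)=H(j\varpi)$ via Lemma~\ref{th:LTI-tf-Phi}, and gets the norm of $Y$ from the orthogonality $\inner{z_1}{z_2}_{\mathcal{L}_2(\T_\omega)}=0$ of Lemma~\ref{th:u_udot_0}, which is exactly your alternative route; your primary amplitude--phase form $a_u\norm*{H(j\varpi)}\sin(\varpi t+\phi_u+\arg H(j\varpi))$ is the same computation packaged as a single sinusoid.
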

\begin{proof} \label{th:LTI-gain::proof}
	The state of the harmonic input with initial condition $z_0(\omega) = (a_u \sin\phi_u, a_u \cos\phi_u)$ and parameter $\omega = (\varpi, a_u, \phi_u)$ is given by $z_1(t) = a_u \sin(\varpi t + \phi_u)$ and $z_2(t) = a_u \cos(\varpi t + \phi_u)$ with period $T_\omega = \frac{2\pi}{\varpi}$.
	Thus, we also have $\ell(\omega, z) = z_1$, $\dot{\ell}(\omega, z) = \varpi z_2$ and, as shown in Lemma~\ref{th:u_udot_0},
	\begin{equation} \label{eq:LTI:<z1,z2>=0}
		\inner{z_1}{z_2}_{\mathcal{L}_2 \pr*{\T_\omega}}
		=
		\varpi^{-1} \inner[\big]{\ell(\omega, z)}{\dot{\ell}(\omega, z)}_{\mathcal{L}_2 \pr*{\T_\omega}}
		=
		0.
	\end{equation}
	Therefore, the input $\ell(\omega, z) = z_1$ has norm
	\begin{equation*}
		\norm*{\ell(\omega, z)}_{\mathcal{L}_2(\T_\omega)}
		=
		a_u \sqrt{\frac{\pi}{\varpi}}.
	\end{equation*}
	Since~\eqref{eq:driven_sys} is also an LTI system of the form~\eqref{eq:LTI-system} we have that $Y(\omega, z) = C\Phi_1(\varpi) z_1 + C\Phi_2(\varpi) z_2$, where $\Phi(\varpi) \coloneq [\Phi_1(\varpi), \Phi_2(\varpi)] \in \R^{n \times 2}$ satisfies the Sylvester equation~\eqref{eq:Sylvester}.
	From Lemma~\ref{th:LTI-tf-Phi}, recall that $C \Phi_1(\varpi) + j C \Phi_2(\varpi) = H(j\varpi)$.
	Then, since~\eqref{eq:LTI:<z1,z2>=0} holds, the $\mathcal{L}_2$-norm of $Y(\omega, z)$ over $\T_\omega$ yields
	\begin{align}
		\norm[\big]{Y(\omega, z)}_{\mathcal{L}_2 \pr*{\T_\omega}}
		&
		= \norm[\big]{C \Phi_1(\varpi)z_1 + C \Phi_2(\varpi)z_2}_{\mathcal{L}_2 \pr*{\T_\omega}}
		\nonumber
		\\
		&
		= a_u \sqrt{\frac{\pi}{\varpi}} \sqrt{\pr*{C \Phi_1(\varpi)}^2 + \pr*{C \Phi_2(\varpi)}^2}
		\nonumber
		\\
		&
		= a_u \sqrt{\frac{\pi}{\varpi}} \norm{H(j\varpi)},
		\label{eq:LTI_normY}
	\end{align}
	and thus the $\omega$-gain function~\eqref{eq:def_w-gain} coincides with the amplitude change induced by its frequency response $H(j\varpi)$, that is~\eqref{eq:w-gain-linear}.
\end{proof}
\begin{definition}[\emph{$\omega$-phase}] \label{def:w_phase}
	The \emph{$\omega$-phase} function $\vartheta : \mathcal{N} \to \R$ of the system~\eqref{eq:driven_sys} to the input~\eqref{eq:driving_sys} for all $\omega\in\mathcal{N}$ is given by
	\begin{equation}\label{eq:def_w-phase}
		\vartheta(\omega)
		\coloneq
		\arg(\mathfrak{Re}(\omega) + j\mathfrak{Im}(\omega))
	\end{equation}
	where
	\begin{align*}
		\mathfrak{Re}(\omega)
		&
		\coloneq
			\frac{
				\inner[\big]{\ell(\omega,z)}{Y(\omega,z)}_{\mathcal{L}_2 \pr*{\T_\omega}}
			}{
				\norm[\big]{\ell(\omega,z)}_{\mathcal{L}_2 \pr*{\T_\omega}}\norm[\big]{Y(\omega,z)}_{\mathcal{L}_2 \pr*{\T_\omega}}
			},
		\\
		\mathfrak{Im}(\omega)
		&
		\coloneq
			\frac{
				\inner[\big]{\dot{\ell}(\omega,z)}{Y(\omega,z)}_{\mathcal{L}_2 \pr*{\T_\omega}}
			}{
				\norm[\big]{\dot{\ell}(\omega,z)}_{\mathcal{L}_2 \pr*{\T_\omega}}\norm[\big]{Y(\omega,z)}_{\mathcal{L}_2 \pr*{\T_\omega}}
			}.
	\end{align*}
\end{definition}
The notion of $\omega$-phase function as defined in~\eqref{eq:def_w-phase} provides a geometric characterization of the phase angle between the $\omega$-response and the subspace spanned by the functions $\ell(\omega,z)$ and $\dot{\ell}(\omega,z)$ at all $\omega$.
Specifically, the $\omega$-phase function $\vartheta(\omega)$ is defined as the argument (angle) of the resulting complex-valued projection where the real part $\mathfrak{Re}(\omega)$ corresponds to the inner product with the input $\ell(z, \omega)$, capturing the in-phase component, and the imaginary part $\mathfrak{Im}(\omega)$ corresponds to the inner product with the time derivative $\dot{\ell}(z, \omega)$, capturing the orthogonal component.
Since $\ell(\omega,z)$ and $\dot{\ell}(\omega,z)$ form an orthogonal pair (see Lemma~\ref{th:u_udot_0}), the functions $\mathfrak{Im}(\omega)$ and $\mathfrak{Re}(\omega)$ form an orthogonal basis\footnote{A similar argument can be obtained applying the Hilbert transform on the input signal $u(t)$, which produces an orthogonal signal~\cite{king2009hilbert}.} 
of a subspace of the function space $\mathcal{L}_2 \pr*{\T_\omega}$.
This orthogonality indicates that they represent independent components of $Y(\omega,z)$, guaranteeing that $\vartheta(\omega)$ is a well-defined phase angle without cross-term interference.
Consequently, the sign and magnitude determine the direction and the exact value of the $\omega$-phase in the complex plane spanned by the function $\ell(\omega,z)$ and $\dot{\ell}(\omega,z)$.
\begin{proposition} \label{th:LTI-phase}
	Consider the LTI system~\eqref{eq:LTI-system} with transfer function $H : \C \to \C$, driven by the harmonic signal generator~\eqref{eq:LTI-exosystem} with parameter $\omega = (\varpi, a_u, \phi_u) \in \R_+^2 \times \R$ and initial condition $z_0(\omega) = (a_u \sin\phi_u, a_u \cos\phi_u) \in \R^2$.
	Then, the associated $\omega$-phase function~\eqref{eq:def_w-phase} satisfies
	\begin{equation} \label{eq:w-phase-linear}
		\vartheta(\omega) = \arg \pr*{ H(j\omega) }.
	\end{equation}
\end{proposition}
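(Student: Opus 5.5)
The plan is to reuse the computations in the proof of Proposition~\ref{th:LTI-gain} and evaluate $\mathfrak{Re}(\omega)$ and $\mathfrak{Im}(\omega)$ in closed form. Throughout, $H(j\omega)$ in~\eqref{eq:w-phase-linear} is read as $H(j\varpi)$, with $\varpi$ the frequency component of $\omega$.

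\textbf{Step 1: the signals and their norms.} As in Proposition~\ref{th:LTI-gain}, the generator gives $z_1(t)=a_u\sin(\varpi t+\phi_u)$ and $z_2(t)=a_u\cos(\varpi t+\phi_u)$ on $\T_\omega=[0,2\pi/\varpi]$. The input is $\ell(\omega,z)=z_1$ with $\dot\ell(\omega,z)=\varpi z_2$. The steady-state output is $Y(\omega,z)=C\Phi_1(\varpi)z_1+C\Phi_2(\varpi)z_2$, where $\Phi$ solves~\eqref{eq:Sylvester}. The two facts needed are:
\begin{itemize}
\item the orthogonality $\inner{z_1}{z_2}_{\mathcal{L}_2(\T_\omega)}=0$ from~\eqref{eq:LTI:<z1,z2>=0}, which comes from Lemma~\ref{th:u_udot_0};
\item the equality of norms $\norm{z_1}_{\mathcal{L}_2(\T_\omega)}=\norm{z_2}_{\mathcal{L}_2(\T_\omega)}=a_u\sqrt{\pi/\varpi}$.
\end{itemize}
The norm equality follows by direct integration of $\sin^2$ and $\cos^2$ over a full period. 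Alternatively, $z_1^2+z_2^2\equiv a_u^2$ together with $\int_{\T_\omega}(z_1^2-z_2^2)\,dt=-\int_{\T_\omega}a_u^2\cos(2\varpi t+2\phi_u)\,dt=0$ gives the same conclusion. It follows that $\norm{\dot\ell}_{\mathcal{L}_2(\T_\omega)}=\varpi a_u\sqrt{\pi/\varpi}$. Also, from~\eqref{eq:LTI_normY}, $\norm{Y}_{\mathcal{L}_2(\T_\omega)}=a_u\sqrt{\pi/\varpi}\,\norm{H(j\varpi)}$.

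\textbf{Step 2: the inner products.} By bilinearity and orthogonality,
\begin{align*}
\inner{\ell}{Y}_{\mathcal{L}_2(\T_\omega)} &= C\Phi_1(\varpi)\,a_u^2\pi/\varpi, &
\inner{\dot\ell}{Y}_{\mathcal{L}_2(\T_\omega)} &= \varpi\, C\Phi_2(\varpi)\,a_u^2\pi/\varpi .
\end{align*}
Dividing by the norms from Step 1 gives
\begin{align*}
\mathfrak{Re}(\omega) &= \frac{C\Phi_1(\varpi)}{\norm{H(j\varpi)}}, &
\mathfrak{Im}(\omega) &= \frac{C\Phi_2(\varpi)}{\norm{H(j\varpi)}}.
\end{align*}
In the second expression the factor $\varpi$ cancels between numerator and denominator.

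\textbf{Step 3: conclude.} We have $\mathfrak{Re}(\omega)+j\mathfrak{Im}(\omega)=\norm{H(j\varpi)}^{-1}\bigl(C\Phi_1(\varpi)+jC\Phi_2(\varpi)\bigr)$. By Lemma~\ref{th:LTI-tf-Phi} the bracket equals $H(j\varpi)$. Multiplying by the positive real factor $\norm{H(j\varpi)}^{-1}$ does not change the argument, so $\vartheta(\omega)=\arg H(j\varpi)$.

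\textbf{Main obstacle.} The algebra is routine; the care goes into two points.
\begin{itemize}
\item The normalizations of $\mathfrak{Re}$ and $\mathfrak{Im}$ must give exactly the same positive factor. This holds only because $\norm{z_1}=\norm{z_2}$, and the factor $\varpi$ in $\dot\ell$ cancels only because $\dot\ell$ is normalized by its own norm.
\item The degenerate case $H(j\varpi)=0$ must be handled. There $Y\equiv 0$, so both $\vartheta$ and $\arg H(j\varpi)$ are undefined, and the statement is understood for $H(j\varpi)\neq 0$.
\end{itemize}
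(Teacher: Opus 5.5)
Your proof is correct and follows essentially the same route as the paper. Both arguments use $\inner{z_1}{z_2}_{\mathcal{L}_2(\T_\omega)}=0$ and the equal norms of $z_1,z_2$ to write $\mathfrak{Re}(\omega)$ and $\mathfrak{Im}(\omega)$ as one common positive factor times $C\Phi_1(\varpi)$ and $C\Phi_2(\varpi)$, then identify $C\Phi_1(\varpi)+jC\Phi_2(\varpi)=H(j\varpi)$ via Lemma~\ref{th:LTI-tf-Phi}. The differences are cosmetic: you substitute $\norm{Y(\omega,z)}_{\mathcal{L}_2(\T_\omega)}$ from~\eqref{eq:LTI_normY} where the paper leaves it symbolic (only its positivity matters), and you explicitly flag the degenerate case $H(j\varpi)=0$, which the paper leaves implicit.
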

\begin{proof} \label{th:LTI-phase::proof}
	As shown at the beginning of the proof of Proposition~\ref{th:LTI-gain}, in these settings, we have $T_\omega = \frac{2\pi}{\varpi}$, $\ell(\omega,z) = z_1$, $\dot{\ell}(\omega,z) = \varpi z_2$, and $\inner*{z_1}{z_2}_{\mathcal{L}_2 \pr*{\T_\omega}} = 0$.
	Additionally, the $\mathcal{L}_2$-norm of $\ell(\omega,z)$ and $\dot{\ell}(\omega,z)$ over $\T_\omega$ satisfy
	\begin{align*}
		\norm[\big]{\dot{\ell}(\omega,z)}_{\mathcal{L}_2 \pr*{\T_\omega}}
		&
		=
		a_u \sqrt{\frac{\pi}{\varpi}},
		&
		\norm[\big]{\dot{\ell}(\omega,z)}_{\mathcal{L}_2 \pr*{\T_\omega}}
		&
		=
		a_u \sqrt{\pi \varpi}.
	\end{align*}
	Now, recall that, since the system~\eqref{eq:driven_sys} is LTI, $Y(\omega, z) = C\Phi_1(\varpi) z_1 + C\Phi_2(\varpi) z_2$, where $\Phi(\varpi) \coloneq [\Phi_1(\varpi), \Phi_2(\varpi)] \in \R^{n \times 2}$ satisfies the Sylvester equation~\eqref{eq:Sylvester}.
	Then, we have
	\begin{align*}
		\inner[\big]{\ell(\omega,z)}{Y(\omega,z)}_{\mathcal{L}_2 \pr*{\T_\omega}}
		\hspace{-8ex}
		&
		\\
		&
		=
		\inner[\big]{z_1}{C\Phi_1(\varpi) z_1 + C\Phi_2(\varpi) z_2 }_{\mathcal{L}_2 \pr*{\T_\omega}}
		\\[-0.25em]
		&
		=
		C\Phi_1(\varpi) \inner[\big]{z_1}{z_1}_{\mathcal{L}_2 \pr*{\T_\omega}}
		=
		C\Phi_1(\varpi) \frac{\pi a_u^2}{\varpi},
	\end{align*}
	and similarly,
	\begin{align*}
		\inner[\big]{\dot{\ell}(\omega,z)}{Y(\omega,z)}_{\mathcal{L}_2 \pr*{\T_\omega}}
		\hspace{-8ex}
		&
		\\
		&
		=
		\inner[\big]{\varpi z_2}{C\Phi_1(\varpi) z_1 + C\Phi_2(\varpi) z_2 }_{\mathcal{L}_2 \pr*{\T_\omega}}
		\\
		&
		=
		\varpi C\Phi_2(\varpi) \inner[\big]{z_2}{z_2}_{\mathcal{L}_2 \pr*{\T_\omega}}
		=
		C\Phi_2(\varpi) \pi a_u^2.
	\end{align*}
	Then, by definition of $\mathfrak{Re}(\omega)$ and $\mathfrak{Im}(\omega)$, it follows that
	\begin{align*}
		\mathfrak{Re}(\omega)
		&
		= \frac{C \Phi_1(\varpi) a_u}{\norm[\big]{Y(\omega,z)}_{\mathcal{L}_2 \pr*{\T_\omega}}} \sqrt{\frac{\pi}{\varpi}},
		\\
		\mathfrak{Im}(\omega)
		&
		= \frac{ C \Phi_2(\varpi) a_u}{\norm[\big]{Y(\omega,z)}_{\mathcal{L}_2 \pr*{\T_\omega}}} \sqrt{\frac{\pi}{\varpi}}.
	\end{align*}
	Therefore, we obtain
	\begin{align}
		\mathfrak{Re}(\omega) + j \mathfrak{Im}(\omega)
		&
		=
		a_u
		\sqrt{\frac{\pi}{\varpi}}
		\frac
		{ C \Phi_1(\varpi) + j C \Phi_1(\varpi) }
		{ \norm[\big]{Y(\omega,z)}_{\mathcal{L}_2 \pr*{\T_\omega}}}
		\nonumber
		\\
		&
		=
		a_u
		\sqrt{\frac{\pi}{\varpi}}
		\frac
		{ H(j \varpi) }
		{ \norm[\big]{Y(\omega,z)}_{\mathcal{L}_2 \pr*{\T_\omega}} }
		\label{eq:LTI_ReIm}
	\end{align}
	Hence, the $\omega$-phase of the LTI system in response to a harmonic input coincides with the phase shift introduced by its frequency response $H(j\omega)$, \emph{i.e.},
	\begin{equation*}
		\vartheta(\omega)
		=
		\arg \pr*{ \mathfrak{Re}(\omega) + j \mathfrak{Im}(\omega) }
		=
		\arg \pr*{ H(j\diff{\varpi}) },
	\end{equation*}
	that is~\eqref{eq:w-phase-linear}.
\end{proof}
\begin{definition}[\emph{$\omega$-radius}] \label{def:w_radius}
	The \emph{$\omega$-radius} function $r : \mathcal{N} \to \R$ of the system~\eqref{eq:driven_sys} to the input~\eqref{eq:driving_sys} for all $\omega\in\mathcal{N}$ is given by
	\begin{equation} \label{eq:def_w-radius}
		r(\omega)
		\coloneq
		\norm{ \mathfrak{Re}(\omega) + j \mathfrak{Im}(\omega) }.
	\end{equation}
\end{definition}
Before discussing the meaning of the $\omega$-radius, let us first introduce the following theorem, which proves important properties of this function.
\begin{theorem} \label{th:w-radius}
	For~\eqref{eq:def_w-radius}, it holds that:
	\begin{enumerate}[label=\textbf{\textit{(\roman*)}}, ref=\textbf{\textit{(\roman*)}}]
		\item \label{th:w-radius::r01} $r(\omega) \in (0, 1]$,
		\item \label{th:w-radius::fraction} the squared $\omega$-radius is equal to
		\begin{equation*}
			r(\omega)^2
			=
			1 -
			\frac
			{ \norm[\big]{ Y(\omega,z) - \overline{Y}(\omega,z) }_{\diff{\mathcal{L}_2 \pr*{\T_\omega}}}^2 }
			{ \norm[\big]{ Y(\omega,z)}_{\mathcal{L}_2 \pr*{\T_\omega}}^2 } ,
		\end{equation*}
		where $\overline{Y}(\omega,z)$ is the orthogonal projection of $Y(\omega,z)$ onto the subspace $\overline{\mathcal{L}_2} \pr*{\T_\omega}$ of $\mathcal{L}_2 \pr*{\T_\omega}$ that spans $\ell(\omega,z)$ and $\dot{\ell}(\omega,z)$.
		\item \label{th:w-radius::r=1} $r(\omega) = 1$ if and only if $Y(\omega,z) = \overline{Y}(\omega,z)$.
	\end{enumerate}

\end{theorem}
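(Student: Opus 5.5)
The plan is to reduce everything to Pythagoras' theorem in the Hilbert space $\mathcal{L}_2\pr*{\T_\omega}$, using the orthogonal pair provided by Lemma~\ref{th:u_udot_0}.

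First I build an orthonormal basis of $\overline{\mathcal{L}_2}\pr*{\T_\omega}$. Set
\begin{align*}
e_1 &\coloneq \ell(\omega,z)/\norm{\ell(\omega,z)}_{\mathcal{L}_2(\T_\omega)}, &
e_2 &\coloneq \dot\ell(\omega,z)/\norm{\dot\ell(\omega,z)}_{\mathcal{L}_2(\T_\omega)}.
\end{align*}
These normalizations are legitimate because Definition~\ref{def:w_phase} already presupposes nonzero norms for $\ell$, $\dot\ell$ and $Y$. By Lemma~\ref{th:u_udot_0}, $\inner{e_1}{e_2}_{\mathcal{L}_2(\T_\omega)}=0$, so $\{e_1,e_2\}$ is an orthonormal basis of $\overline{\mathcal{L}_2}\pr*{\T_\omega}$. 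Rewriting Definition~\ref{def:w_phase} in this basis gives
\begin{align*}
\mathfrak{Re}(\omega) &= \frac{\inner{e_1}{Y}}{\norm{Y}}, &
\mathfrak{Im}(\omega) &= \frac{\inner{e_2}{Y}}{\norm{Y}},
\end{align*}
with all norms and inner products taken in $\mathcal{L}_2(\T_\omega)$. Hence $r(\omega)^2 = \pr*{\inner{e_1}{Y}^2+\inner{e_2}{Y}^2}/\norm{Y}^2$.

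Second, I identify this with the orthogonal projection. Since $\{e_1,e_2\}$ is orthonormal,
\begin{equation*}
\overline{Y}=\inner{e_1}{Y}e_1+\inner{e_2}{Y}e_2 ,
\end{equation*}
and by Parseval $\norm{\overline{Y}}^2=\inner{e_1}{Y}^2+\inner{e_2}{Y}^2$. Therefore $r(\omega)^2=\norm{\overline Y}^2/\norm{Y}^2$. The residual $Y-\overline{Y}$ is orthogonal to $\overline{\mathcal{L}_2}\pr*{\T_\omega}$, so Pythagoras gives $\norm{Y}^2=\norm{\overline Y}^2+\norm{Y-\overline Y}^2$. Dividing by $\norm{Y}^2$ yields item~\ref{th:w-radius::fraction}.

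Third, item~\ref{th:w-radius::r=1} follows immediately. By~\ref{th:w-radius::fraction}, $r(\omega)=1$ if and only if $\norm{Y-\overline Y}_{\mathcal{L}_2(\T_\omega)}=0$, which means $Y=\overline Y$ as elements of $\mathcal{L}_2$. Because both functions are continuous, this is also pointwise equality on $\T_\omega$. The upper bound $r(\omega)\le 1$ in~\ref{th:w-radius::r01} also comes from~\ref{th:w-radius::fraction}, since the subtracted term is nonnegative (equivalently, Bessel's inequality).

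The main obstacle is the strict lower bound $r(\omega)>0$. From the second step, $r(\omega)=0$ exactly when $\overline{Y}=0$, that is, when $Y$ is orthogonal to both $\ell$ and $\dot\ell$. No earlier assumption obviously excludes this. I would therefore tie positivity to the well-posedness of Definition~\ref{def:w_phase}: $\vartheta(\omega)=\arg(\mathfrak{Re}+j\mathfrak{Im})$ is only defined when $\mathfrak{Re}+j\mathfrak{Im}\neq 0$. Whenever the $\omega$-phase is well defined, this forces $\overline Y\neq0$ and hence $r(\omega)>0$. I would make this standing requirement explicit, together with $\dot\ell\not\equiv0$ and $Y\not\equiv0$, rather than attempt to derive it from Assumptions~\ref{ass:z_u_periodic}--\ref{ass:compact_invariant_ultimately_bounded}, which I do not expect to be possible in general.
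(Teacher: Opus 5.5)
Your argument is correct. For items (ii), (iii) and the bound $r(\omega)\le 1$ it is the paper's argument, written more cleanly. The paper sets $\overline{Y}=\lambda_1 u+\lambda_2\dot u$ with $\lambda_1=\langle u,Y\rangle/\|u\|^2$ and $\lambda_2=\langle \dot u,Y\rangle/\|\dot u\|^2$. It then expands $\|Y-\overline{Y}\|^2$ using $\langle u,\dot u\rangle=0$ and identifies $\lambda_1^2\|u\|^2=\|Y\|^2\mathfrak{Re}(\omega)^2$ and $\lambda_2^2\|\dot u\|^2=\|Y\|^2\mathfrak{Im}(\omega)^2$. This gives $\|Y-\overline{Y}\|^2=\|Y\|^2(1-r(\omega)^2)$, which is your Parseval/Pythagoras step in coordinates. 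Item (iii) then follows from continuity, as in your proposal.

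The only divergence is the strict bound $r(\omega)>0$, and there your caution is justified; it is not a gap in your proposal. The paper claims that this identity gives $r(\omega)=0$ if and only if $u\equiv 0$. In fact it only gives $r(\omega)=0$ if and only if $\overline{Y}=0$, i.e.\ $Y\perp u$ and $Y\perp \dot u$, which has nothing to do with $u$ vanishing.

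This case does occur under the paper's assumptions. Take $\dot x=-x+u^2$, $y=x$, driven by the harmonic oscillator with $u=a_u\sin(\varpi t+\phi_u)$. The invariance equation has a quadratic solution $\varphi(\omega,z)=p z_1^2+q z_1 z_2+s z_2^2$ with $p+s=1$. So on $\mathcal{M}$ the response $Y$ is the constant $a_u^2/2$ plus a $2\varpi$-harmonic. That signal is orthogonal over $\T_\omega$ to both $\sin(\varpi t+\phi_u)$ and $\cos(\varpi t+\phi_u)$. Hence $\alpha(\omega)>0$ while $\mathfrak{Re}(\omega)=\mathfrak{Im}(\omega)=0$, so $r(\omega)=0$ and $\vartheta(\omega)$ is undefined. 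The static squarer $y=u^2$ behaves the same way.

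So item (i) holds only under an extra nondegeneracy hypothesis such as yours: $\overline{Y}\neq 0$, equivalently $\mathfrak{Re}(\omega)+j\,\mathfrak{Im}(\omega)\neq 0$, i.e.\ $\vartheta(\omega)$ is well defined. The correct unconditional statement is $r(\omega)\in[0,1]$, with $r(\omega)>0$ exactly when $\overline{Y}\neq 0$. Stating that hypothesis explicitly, together with $\dot\ell\not\equiv 0$ and $Y\not\equiv 0$, is the right way to repair the theorem.
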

\begin{proof} \label{th:w-radius::proof}
	For brevity, let us recall that $u = \ell(\omega,z)$ and $\dot{u} = \dot{\ell}(\omega,z)$.
	For the definition of $\overline{Y}(\omega,z)$, we know that
	\begin{equation*}
		\overline{Y}(\omega,z)
		=
		\lambda_1 u
		+
		\lambda_2 \dot{u},
	\end{equation*}
	where
	\begin{align*}
		\lambda_1
		&
		=
		\frac
		{ \inner*{u}{Y(\omega,z)}_{\mathcal{L}_2 \pr*{\T_\omega}} }
		{ \norm*{u}_{\mathcal{L}_2 \pr*{\T_\omega}}^2 },
		&
		\lambda_2
		&
		=
		\frac
		{ \inner*{\dot{u}}{Y(\omega,z)}_{\mathcal{L}_2 \pr*{\T_\omega}} }
		{ \norm*{\dot{u}}_{\mathcal{L}_2 \pr*{\T_\omega}}^2 }.
	\end{align*}
	Then, recalling that $\inner*{u}{\dot{u}}_{\mathcal{L}_2(\T_\omega)} = 0$ (see Lemma~\ref{th:u_udot_0}), we have
	\begin{align*}
		\norm[\big]{ Y(\omega,z) - \overline{Y}(\omega,z) }_{\diff{\mathcal{L}_2 \pr*{\T_\omega}}}^2
		=
		\hspace{-15ex}
		&
		\\
		&
		\norm[\big]{ Y(\omega,z) }_{\mathcal{L}_2 \pr*{\T_\omega}}^2
		+
		\lambda_1^2 \norm[\big]{ u }_{\mathcal{L}_2 \pr*{\T_\omega}}^2
		+
		\lambda_2^2 \norm[\big]{ \dot{u} }_{\mathcal{L}_2 \pr*{\T_\omega}}^2
		\\
		&
		-
		2 \lambda_1
		\inner[\big]{ Y(\omega,z) }{u}_{\mathcal{L}_2 \pr*{\T_\omega}}
		-
		2 \lambda_2
		\inner[\big]{ Y(\omega,z) }{\dot{u}}_{\mathcal{L}_2 \pr*{\T_\omega}} .
	\end{align*}
	From the definitions of $\lambda_1$, $\lambda_2$, $\mathfrak{Re}(\omega)$, and $\mathfrak{Im}(\omega)$, we have
	\begin{align*}
		\lambda_1^2 \norm[\big]{ u }_{\mathcal{L}_2 \pr*{\T_\omega}}^2
		&
		\! =
		\lambda_1
		\inner[\big]{ Y(\omega,z) }{u}_{\mathcal{L}_2 \pr*{\T_\omega}}
		\! =
		\norm[\big]{ Y(\omega,z) }_{\mathcal{L}_2 \pr*{\T_\omega}}^2
		\mathfrak{Re}(\omega)^2 \! ,
		\\
		\lambda_2^2 \norm[\big]{ \dot{u} }_{\mathcal{L}_2 \pr*{\T_\omega}}^2
		&
		\! =
		\lambda_2
		\inner[\big]{ Y(\omega,z) }{\dot{u}}_{\mathcal{L}_2 \pr*{\T_\omega}}
		\! =
		\norm[\big]{ Y(\omega,z) }_{\mathcal{L}_2 \pr*{\T_\omega}}^2
		\mathfrak{Im}(\omega)^2 \! .
	\end{align*}
	Thus, since $r(\omega)^2 = \mathfrak{Re}(\omega)^2 + \mathfrak{Im}(\omega)^2$, we have
	\begin{align}
		\norm[\big]{
			Y(\omega,z)
			- \lambda_1 u
			- \lambda_2 \dot{u}
		}_{\mathcal{L}_2 \pr*{\T_\omega}}^2
		\nonumber
		\hspace{-25ex}
		&
		\\
		&
		=\norm[\big]{ Y(\omega,z) }_{\mathcal{L}_2 \pr*{\T_\omega}}^2
		\pr[\big]{1 - \mathfrak{Re}(\omega)^2 - \mathfrak{Im}(\omega)^2}
		\nonumber
		\\
		&
		=\norm[\big]{ Y(\omega,z) }_{\mathcal{L}_2 \pr*{\T_\omega}}^2
		\pr[\big]{1 - r(\omega)^2} .
		\label{th:w-radius::proof:eq:normY}
	\end{align}
	Recall that $r(\omega) \ge 0$ by definition; however, since~\eqref{th:w-radius::proof:eq:normY} holds, then $r(\omega) = 0$ if and only if $ u(t) = 0$ for all $t\in\T_\omega$, which is not true by assumption.
	Hence, $r(\omega) > 0$.
	Moreover, we conclude that~\eqref{th:w-radius::proof:eq:normY} holds only for $1 - r(\omega)^2 \ge 0$, which implies that $r(\omega) \le 1$, proving statement~\ref{th:w-radius::r01}.
	Statement~\ref{th:w-radius::fraction} is obtained by manipulating~\eqref{th:w-radius::proof:eq:normY}.
	Finally,~\ref{th:w-radius::fraction} readily implies statement~\ref{th:w-radius::r=1}, because the $\mathcal{L}_2 \pr*{\T_\omega}$ norm of a continuous signal is equal to $0$ if and only if the signal is identically~$0$.
\end{proof}
\begin{figure}[t]
\condVSpace{-1.5em}
\centering
\tdplotsetmaincoords{75}{120}
\begin{tikzpicture}[tdplot_main_coords]

	\coordinate (O) at (0,0,0);
	\coordinate (u) at (4.75,0,0);
	\coordinate (h) at (0,5,0);

	\pgfmathsetmacro{\Yx}{4}
	\pgfmathsetmacro{\Yy}{3}
	\pgfmathsetmacro{\Yz}{3.5}

	\coordinate (Y) at (\Yx,\Yy,\Yz);
	\coordinate (Yb) at (\Yx,\Yy,0);
	\coordinate (Re) at (\Yx,0,0);
	\coordinate (Im) at (0,\Yy,0);

	\coordinate (ur) at (-1,0,0);
	\coordinate (hr) at (0,-1,0);

	\path[draw, dashed] (Yb) -- (Re);
	\path[draw, dashed] (Yb) -- (Im);

	\path[draw, dashed] (O) -- (ur);
	\path[draw, dashed] (O) -- pic [draw, black, fill=black!20, solid, thick, angle radius=0.35cm, angle eccentricity=1.6] {right angle = ur--O--hr} (hr);

	\path[draw, ->, thick] (O) -- node[pos=1, anchor=north]{$\ell(\omega, z)$} (u);
	\path[draw, ->, thick] (O) -- node[pos=1, anchor=north]{$\dot{\ell}(\omega, z)$} (h);

	\path[draw, ->, ultra thick, BrickRed] (O) -- node[pos=1, anchor=south]{$Y(\omega, z)$} (Y);

	\path[draw, ->, ultra thick, NavyBlue] (O) --
	node[pos=1, anchor=north]{$\overline{Y}(\omega, z)$}
	pic [draw, ->, red, thick, angle radius=0.35cm, pic text={$\vartheta(\omega)$}, angle eccentricity=1.6] {angle = u--O--Yb}
	(Yb)
	;

	\path[draw, ->, ultra thick, PineGreen] (O) -- node[pos=0.8, anchor=south east]{$\lambda_1 \ell(\omega, z)$} (Re);
	\path[draw, ->, ultra thick, Plum] (O) -- node[pos=1, anchor=south]{$\lambda_2 \dot{\ell}(\omega, z)$} (Im);

	\path[draw, dashed, very thick, Violet] (Y) -- node[pos=0.25, anchor=west]{\diff{$\norm*{Y(\omega, z)}_{\mathcal{L}_2 \pr*{\T_\omega}}\sqrt{1 - r(\omega)^2}$}} (Yb);

\end{tikzpicture}
\caption{
	Illustration of the space $\overline{\mathcal{L}_2} \pr*{\T_\omega}$ that spans $\ell(\omega, z)$ and $\dot{\ell}(\omega, z)$ in relation with $Y(\omega, z)$, $\overline{Y}(\omega, z)$, the $\vartheta(\omega)$, and $r(\omega)$.
}
\condVSpace{-1.5em}
\label{fig:3dvec}
\end{figure}
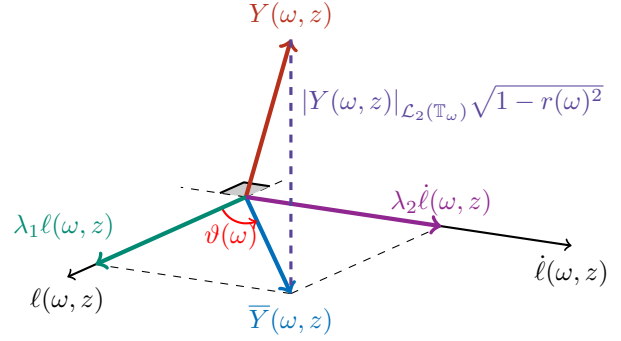
As represented in Figure~\ref{fig:3dvec}, by virtue of Theorem~\ref{th:w-radius}, the $\omega$-radius shows the fraction of the norm of the $\omega$-response that lies on the finite dimensional subspace $\overline{\mathcal{L}_2} \pr*{\T_\omega}$ of $\mathcal{L}_2 \pr*{\T_\omega}$ that spans the input $\ell(\omega,z)$ and its derivative $\dot{\ell}(\omega,z)$.
Therefore, the $\omega$-radius function takes values between two bounds:
$r(\omega) = 1$ when the $\omega$-response function is an element of $\overline{\mathcal{L}_2} \pr*{\T_\omega}$,
while $r(\omega) = 0$ when the $\omega$-response is orthogonal to both the input and its derivative, which, as stated in Theorem~\ref{th:w-radius}, is impossible.
Hence, the $\omega$-radius function can be viewed as a quantitative measure of the distortion of the input induced by the system.
Indeed, when $r(\omega) = 1$, the $\omega$-response function differs from the input only in amplitude, as described by the $\omega$-gain, and in phase lag, as described by the $\omega$-phase.
In the case of a sine wave input (generated from a harmonic oscillator), $r(\omega) = 1$ implies that the output is again a sine wave, with a different amplitude and phase.
An additional interpretation of $r(\omega) = 1$ is also provided in Lemma~\ref{th:r1_convolution} in Appendix.
Hence, $r(\omega)$ in~\eqref{eq:def_w-radius} can thus be used to determine classes of systems that preserve the input class in the output, as studied in~\cite{chaffey2025amplitude}.

An example of a system that satisfies this property is an LTI system fed with a harmonic oscillator, as shown in the following proposition.
\begin{proposition} \label{th:LTI-radius}
	Consider the LTI system~\eqref{eq:LTI-system} with transfer function $H : \C \to \C$, driven by the harmonic signal generator~\eqref{eq:LTI-exosystem} with parameter $\omega = (\varpi, a_u, \phi_u) \in \R_+^2 \times \R$ and initial condition $z_0(\omega) = (a_u \sin\phi_u, a_u \cos\phi_u) \in \R^2$.
	Then, the associated $\omega$-radius function~\eqref{eq:def_w-radius} satisfies
	\begin{equation} \label{eq:w-radius-linear}
		r(\omega) = 1.
	\end{equation}
\end{proposition}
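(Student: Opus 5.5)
The most direct route goes through statement (iii) of Theorem~\ref{th:w-radius}: it suffices to show that the $\omega$-response $Y(\omega,z)$ lies in the subspace $\overline{\mathcal{L}_2}\pr*{\T_\omega}$ spanned by $\ell(\omega,z)$ and $\dot{\ell}(\omega,z)$. For the LTI setting I reuse the facts from the proof of Proposition~\ref{th:LTI-gain}: $T_\omega = 2\pi/\varpi$, $\ell(\omega,z) = z_1$, $\dot{\ell}(\omega,z) = \varpi z_2$, and $Y(\omega,z) = C\Phi_1(\varpi) z_1 + C\Phi_2(\varpi) z_2$, where $\Phi = [\Phi_1,\Phi_2]$ solves the Sylvester equation~\eqref{eq:Sylvester}.

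Since $\varpi>0$, I would rewrite the $\omega$-response as
\begin{equation*}
Y(\omega,z) = C\Phi_1(\varpi)\,\ell(\omega,z) + \varpi^{-1} C\Phi_2(\varpi)\,\dot{\ell}(\omega,z),
\end{equation*}
which is a linear combination of $\ell$ and $\dot{\ell}$ with constant (time-independent) coefficients. Hence $Y(\omega,z)\in\overline{\mathcal{L}_2}\pr*{\T_\omega}$. Its orthogonal projection therefore satisfies $\overline{Y}(\omega,z) = Y(\omega,z)$, and item (iii) of Theorem~\ref{th:w-radius} gives $r(\omega)=1$.

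As an independent check, I would compute directly from the proof of Proposition~\ref{th:LTI-phase}. Equation~\eqref{eq:LTI_ReIm} gives $\mathfrak{Re}(\omega)+j\mathfrak{Im}(\omega) = a_u\sqrt{\pi/\varpi}\,H(j\varpi)/\norm{Y}_{\mathcal{L}_2(\T_\omega)}$. Equation~\eqref{eq:LTI_normY} gives $\norm{Y}_{\mathcal{L}_2(\T_\omega)} = a_u\sqrt{\pi/\varpi}\,\norm{H(j\varpi)}$. Taking moduli yields $r(\omega) = \norm{H(j\varpi)}/\norm{H(j\varpi)} = 1$.

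The only delicate point is nondegeneracy. Both the definition of $\mathfrak{Re}$, $\mathfrak{Im}$ and the division in the direct check require $\norm{Y}_{\mathcal{L}_2(\T_\omega)}\neq 0$, i.e.\ $H(j\varpi)\neq 0$. Minimality of the realization does not by itself exclude transmission zeros on the imaginary axis. I would therefore note explicitly that the radius is well defined exactly when $H(j\varpi)\ne0$, which is implicitly assumed whenever $\mathfrak{Re}$ and $\mathfrak{Im}$ are defined, and that under this condition the conclusion $r(\omega)=1$ holds. Apart from this caveat, the argument is immediate.
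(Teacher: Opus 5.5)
Your proof is correct. Your ``independent check'' is exactly the paper's argument: the paper takes the modulus of \eqref{eq:LTI_ReIm} and substitutes \eqref{eq:LTI_normY}. Your primary argument, however, takes a genuinely different route. Instead of computing $\mathfrak{Re}(\omega)$ and $\mathfrak{Im}(\omega)$, you note that $Y(\omega,z)=C\Phi_1(\varpi)\,\ell(\omega,z)+\varpi^{-1}C\Phi_2(\varpi)\,\dot{\ell}(\omega,z)$ already lies in $\overline{\mathcal{L}_2}(\T_\omega)$. It therefore coincides with its orthogonal projection $\overline{Y}(\omega,z)$, and item (iii) of Theorem~\ref{th:w-radius} closes the proof. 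This route needs neither Lemma~\ref{th:LTI-tf-Phi} nor the explicit integrals of Propositions~\ref{th:LTI-gain} and~\ref{th:LTI-phase}. It also explains \emph{why} $r(\omega)=1$: the LTI steady state is a constant-coefficient combination of the input and its derivative. Consequently it applies unchanged to any system/generator pair whose steady-state output has this property. It relies only on the Pythagorean identity behind item (ii), which requires $\ell$, $\dot{\ell}$ and $Y$ to have nonzero $\mathcal{L}_2(\T_\omega)$ norm; here $\dot{\ell}=\varpi z_2$ has norm $a_u\sqrt{\pi\varpi}>0$. The paper's route, by contrast, is a one-line corollary of formulas already derived for the gain and phase. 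It also delivers the sharper identity $\mathfrak{Re}(\omega)+j\mathfrak{Im}(\omega)=H(j\varpi)/\norm{H(j\varpi)}$, which packages $r(\omega)=1$ and $\vartheta(\omega)=\arg H(j\varpi)$ together for Proposition~\ref{th:LTI-freq}. Your nondegeneracy caveat is accurate and applies equally to the paper's proof, which cancels $\norm{H(j\varpi)}$ without comment. At an imaginary-axis transmission zero, $Y\equiv 0$ and $r(\omega)$ is undefined.
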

\begin{proof}
	Recalling~\eqref{eq:LTI_ReIm} in the proof of Proposition~\ref{th:LTI-phase}, we have
	\begin{align*}
		r(\omega)
		=
		\norm{
			\mathfrak{Re}(\omega)
			+
			j\mathfrak{Im}(\omega)
		}
		=
		a_u
		\sqrt{\frac{\pi}{\varpi}}
		\frac
		{ \norm*{H(j \varpi)} }
		{ \norm[\big]{Y(\omega,z)}_{\mathcal{L}_2 \pr*{\T_\omega}} }
		.
	\end{align*}
	Then, recalling~\eqref{eq:LTI_normY} in the proof of Proposition~\ref{th:LTI-phase}, we obtain
	\begin{equation*}
		r(\omega)
		=
		a_u
		\sqrt{\frac{\pi}{\varpi}}
		\frac
		{ \norm*{H(j \varpi)} }
		{ a_u \sqrt{\frac{\pi}{\varpi}} \norm{H(j\varpi)} }
		=
		1,
	\end{equation*}
	which concludes the proof.
\end{proof}
Using the rationale behind the definition of frequency response function, we can employ the notion of $\omega$-gain, $\omega$-phase, and $\omega$-radius to define the nonlinear frequency response of the system~\eqref{eq:driven_sys} to the input~\eqref{eq:driving_sys} as follows.
\begin{definition}[\emph{Frequency response}]
	The \emph{frequency response} function $\Gamma : \mathcal{N} \to \C$ of the system~\eqref{eq:driven_sys} to the input~\eqref{eq:driving_sys} for all $\omega\in\mathcal{N}$ is given by
	\begin{equation} \label{eq:def_freq_resp}
			\Gamma(\omega) \coloneq \alpha(\omega) r(\omega) e^{j \vartheta(\omega)}.
	\end{equation}
\end{definition}
The nonlinear frequency response function~\eqref{eq:def_freq_resp} takes the standard phasor form, expressed in terms of the functions $\omega$-gain, $\omega$-phase, and $\omega$-radius associated with the nonlinear system~\eqref{eq:driven_sys} and the nonlinear input~\eqref{eq:driving_sys}. In the general nonlinear case, the $\omega$-radius function plays a role in balancing the relation between the cosine and sine of the $\omega$-phase and the Euler identity. Specifically, for a given $\vartheta(\omega)$, the sine and cosine satisfy
\begin{align} \label{eq:cos_sin}
	\cos(\vartheta(\omega))
	&
	=
	\frac{ \mathfrak{Re}(\omega) }{ r(\omega) } \ ,
	&
	\sin(\vartheta(\omega))
	&
	= \frac{\mathfrak{Im}(\omega)}{r(\omega)} \ ,
\end{align}
respectively.
Hence, the frequency response function $\Gamma(\omega)$ can be expressed by Euler's formula in the form
\begin{equation} \label{eq:tiangular_frequency}
\begin{aligned}
	\Gamma(\omega)
	&
	=
	\alpha(\omega) r(\omega) \cos(\vartheta(\omega))
	+
	j \alpha(\omega) r(\omega) \sin(\vartheta(\omega))
	\\
	&
	=
	\alpha(\omega)\mathfrak{Re}(\omega)
	+
	j
	\alpha(\omega) \mathfrak{Im}(\omega) \ .
\end{aligned}
\end{equation}

A geometric representation of the frequency response function is illustrated in Figure~\ref{fig:3dvec}.
It should also be noted that the frequency response function satisfies
\begin{equation} \label{eq:def_mag_freq_resp}
	\norm{\Gamma(\omega)} = \norm{\alpha(\omega) r(\omega)} \diff{ = \alpha(\omega) r(\omega) },
\end{equation}
which emphasizes that the magnitude of the frequency response differs from its gain.
This is because the magnitude depends not only on the $\omega$-gain function, but also on the effect of the $\omega$-radius, \diff{which may attenuates it as $r(\omega) \in (0, 1]$ (see Point~\ref{th:w-radius::r=1} of Theorem~\ref{th:w-radius})}.
However, when the input class is preserved by the system, \emph{i.e.}, \diff{$r(\omega)=1$}, the magnitude and gain coincide, as is typical for an LTI system with sinusoidal inputs. With all the pieces in place, the frequency response function~\eqref{eq:def_freq_resp} can be finally specialized in the case of LTI systems.

\begin{proposition} \label{th:LTI-freq}
	Consider the LTI system~\eqref{eq:LTI-system} with transfer function $H : \C \to \C$, driven by the harmonic signal generator~\eqref{eq:LTI-exosystem} with parameter $\omega = (\varpi, a_u, \phi_u) \in \R_+^2 \times \R$ and initial condition $z_0(\omega) = (a_u \sin\phi_u, a_u \cos\phi_u) \in \R^2$.
	Then, the associated $\omega$-radius function~\eqref{eq:def_w-radius} satisfies
	\begin{equation} \label{eq:w-freq-linear}
		\Gamma(\omega) = H(j\omega).
	\end{equation}
\end{proposition}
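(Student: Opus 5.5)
The plan is to combine Propositions~\ref{th:LTI-gain}, \ref{th:LTI-phase}, and \ref{th:LTI-radius} through the phasor form~\eqref{eq:def_freq_resp}. The statement writes $H(j\omega)$; it should be read as $H(j\varpi)$, with $\varpi$ the frequency component of $\omega = (\varpi, a_u, \phi_u)$, exactly as in the conclusion of the proof of Proposition~\ref{th:LTI-phase}.

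The steps, in order:
\begin{enumerate*}[label=(\alph*)]
\item Under the stated hypotheses, Proposition~\ref{th:LTI-gain} gives $\alpha(\omega) = \norm*{H(j\varpi)}$.
\item Proposition~\ref{th:LTI-radius} gives $r(\omega)=1$.
\item Proposition~\ref{th:LTI-phase} gives $\vartheta(\omega) = \arg\pr*{H(j\varpi)}$.
\end{enumerate*}
Substituting these into~\eqref{eq:def_freq_resp} yields
\begin{equation*}
\Gamma(\omega) = \norm*{H(j\varpi)} \cdot 1 \cdot e^{j\arg(H(j\varpi))} = H(j\varpi),
\end{equation*}
by the polar representation of a complex number.

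A more robust alternative avoids any ambiguity of $\arg$. Use~\eqref{eq:tiangular_frequency}, namely $\Gamma(\omega) = \alpha(\omega)\pr*{\mathfrak{Re}(\omega) + j\mathfrak{Im}(\omega)}$. By~\eqref{eq:LTI_ReIm},
\begin{equation*}
\mathfrak{Re}(\omega) + j\mathfrak{Im}(\omega) = a_u\sqrt{\pi/\varpi}\, H(j\varpi)\,\norm*{Y(\omega,z)}_{\mathcal{L}_2(\T_\omega)}^{-1},
\end{equation*}
and by~\eqref{eq:LTI_normY}, $\norm*{Y(\omega,z)}_{\mathcal{L}_2(\T_\omega)} = a_u\sqrt{\pi/\varpi}\,\norm*{H(j\varpi)}$. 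The ratio therefore equals $H(j\varpi)/\norm*{H(j\varpi)}$, and multiplying by $\alpha(\omega)=\norm*{H(j\varpi)}$ gives $H(j\varpi)$.

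The only delicate point is the degenerate case $H(j\varpi)=0$. There $Y\equiv 0$, and $\mathfrak{Re}$, $\mathfrak{Im}$ (hence $\vartheta$ and $r$) are not defined because of division by zero. I would handle it by noting that $\alpha(\omega)=0$ still holds, so $\Gamma(\omega)=0=H(j\varpi)$ under the natural convention that $\alpha r e^{j\vartheta}=0$ whenever $\alpha=0$. Alternatively, I would state explicitly that the result assumes $H(j\varpi)\neq 0$, as is implicit in the earlier propositions. Beyond this, no real obstacle is expected, since all the computation was already done in the proofs of the three preceding propositions.
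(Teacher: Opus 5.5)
Your main argument is the paper's proof: you substitute $\alpha(\omega)=\lvert H(j\varpi)\rvert$, $r(\omega)=1$, and $\vartheta(\omega)=\arg(H(j\varpi))$ from Propositions~\ref{th:LTI-gain}, \ref{th:LTI-radius}, and \ref{th:LTI-phase} into the phasor form~\eqref{eq:def_freq_resp}. Two further points you raise are correct and go beyond the paper's one-line proof: reading $H(j\omega)$ as $H(j\varpi)$, and observing that the case $H(j\varpi)=0$ needs a convention, since there $Y\equiv 0$ and $\mathfrak{Re}$, $\mathfrak{Im}$ are undefined.
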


\begin{proof}
	Building on Proposition~\ref{th:LTI-gain}, Proposition~\ref{th:LTI-phase}, and Proposition~\ref{th:LTI-radius}, we have $\Gamma(\omega)
		=
		\alpha(\omega) r(\omega) e^{j \vartheta(\omega)}
		=
		\norm*{ H(j\omega) }
		e^{j \arg\pr*{H(j\omega)} }
		=
		H(j\omega)$, that is~\eqref{eq:w-freq-linear}.
\end{proof}

\begin{figure*}[t]
	\centering
	\condVSpace{-1.5em}
	\begin{subfigure}{0.325\linewidth}
	\centering
	\includegraphics[width=\linewidth]{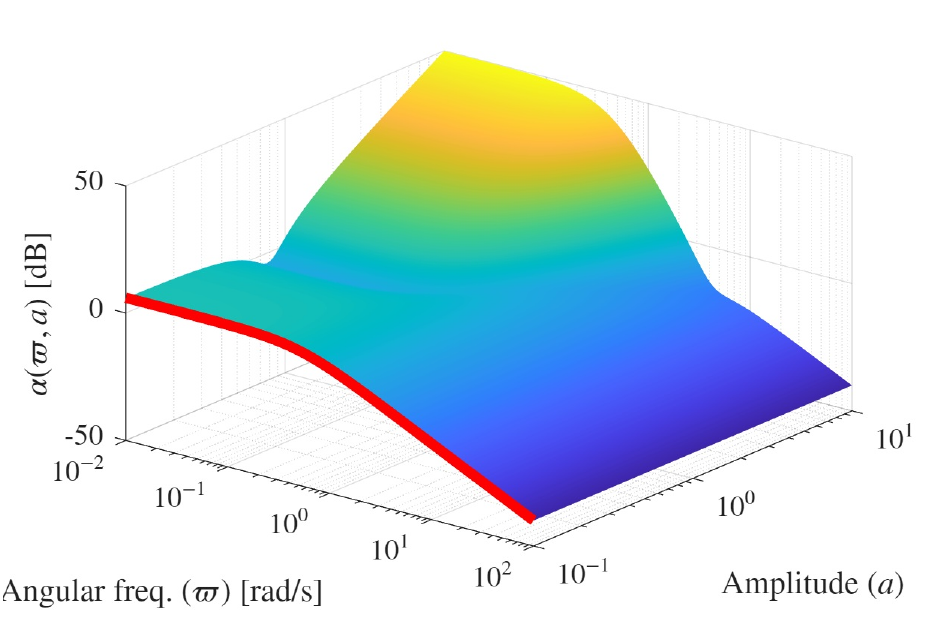}
	\condVSpace{-1.5em}
	\caption{$\omega$-gain}
	\label{fig:bode:ex_base::gain}
	\end{subfigure}
	\hfill
	\begin{subfigure}{0.325\linewidth}
	\centering
	\includegraphics[width=\linewidth]{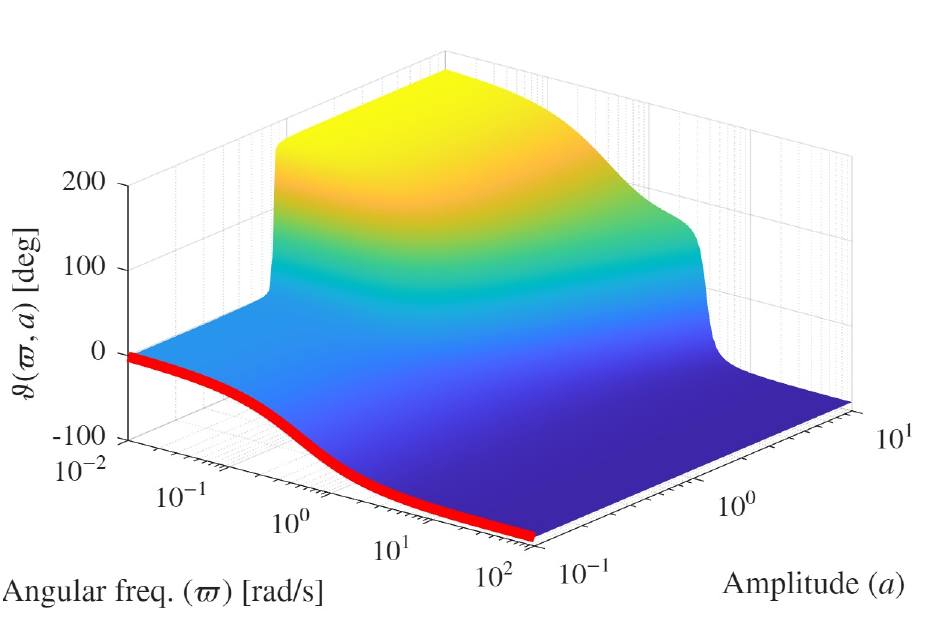}
	\condVSpace{-1.5em}
	\caption{$\omega$-phase}
	\label{fig:bode:ex_base::phase}
	\end{subfigure}
	\hfill
	\begin{subfigure}{0.325\linewidth}
	\centering
	\includegraphics[width=\linewidth]{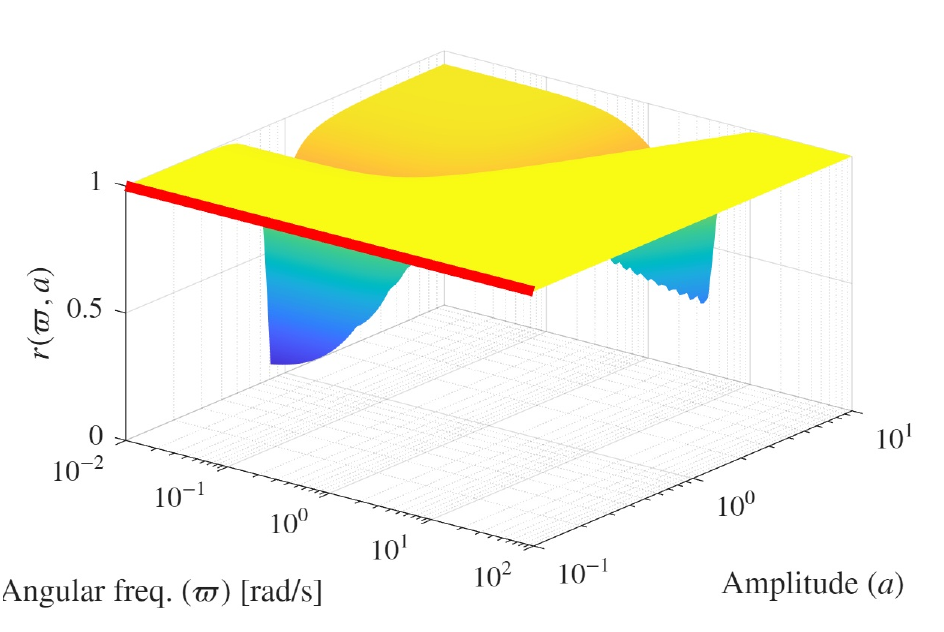}
	\condVSpace{-1.5em}
	\caption{$\omega$-radius}
	\label{fig:bode:ex_base::radius}
	\end{subfigure}
	\caption{
		Bode-like diagram of the nonlinear system~\eqref{eq:nonlinear-system_example_f}-\eqref{eq:nonlinear-system_example_h}, corresponding to the parameter configuration~\eqref{eq:parameters_ex1}, in response to the input~\eqref{eq:LTI-exosystem}, with angular frequency $\varpi \in [10^{-2}, 10^2]$ and amplitude $a_u \in [10^{-2}, 10^2]$.
		The red curve corresponds to the $\omega$-gain, $\omega$-phase, and $\omega$-radius of the linearized system about the equilibrium $x = 0$.
	}
	\label{fig:bode:ex_base}
	\condVSpace{-1.25em}
\end{figure*}

\begin{figure*}[t]
	\centering
	\begin{subfigure}{0.325\linewidth}
	\centering
	\includegraphics[width=\linewidth]{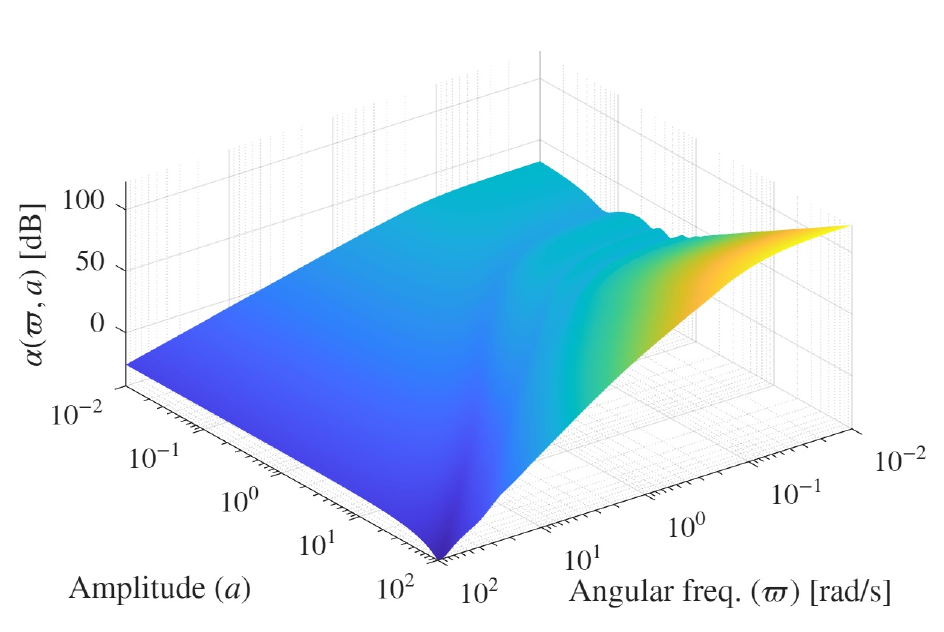}
	\condVSpace{-1.5em}
	\caption{$\omega$-gain}
	\label{fig:bode:ex_cveticanin::gain}
	\end{subfigure}
	\hfill
	\begin{subfigure}{0.325\linewidth}
	\centering
	\includegraphics[width=\linewidth]{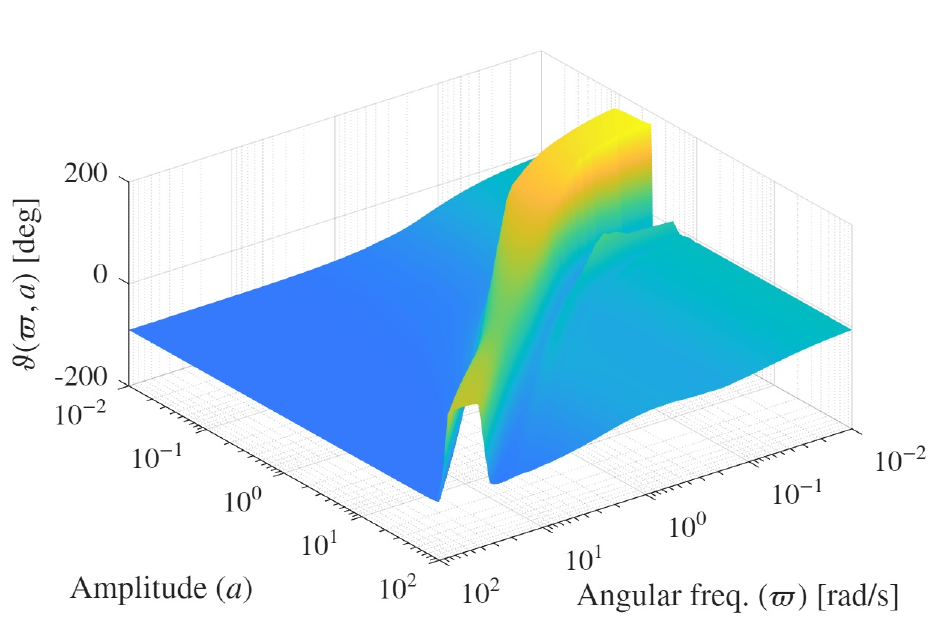}
	\condVSpace{-1.5em}
	\caption{$\omega$-phase}
	\label{fig:bode:ex_cveticanin::phase}
	\end{subfigure}
	\hfill
	\begin{subfigure}{0.325\linewidth}
	\centering
	\includegraphics[width=\linewidth]{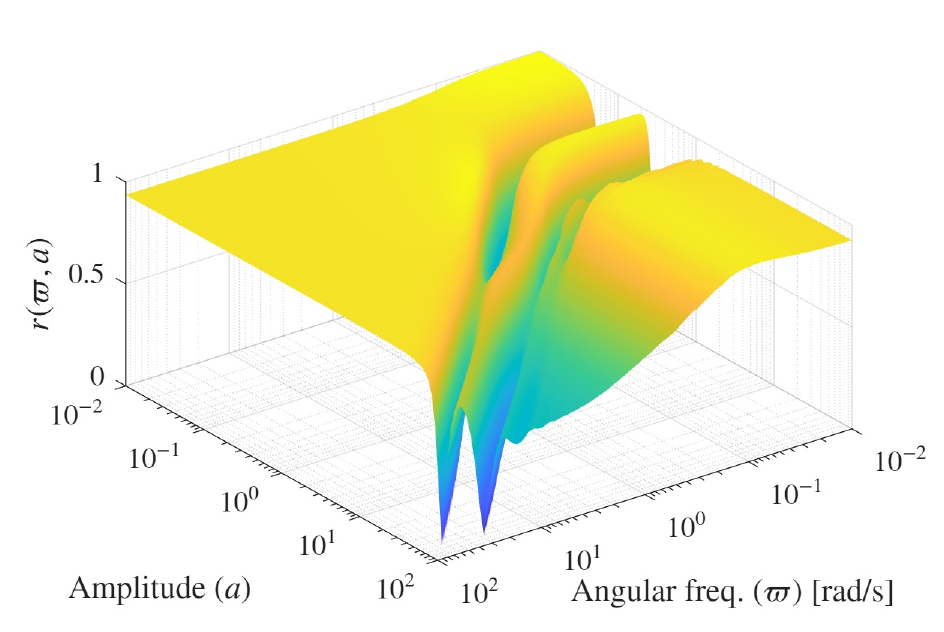}
	\condVSpace{-1.25em}
	\caption{$\omega$-radius}
	\label{fig:bode:ex_cveticanin::radius}
	\end{subfigure}
	\caption{
		Bode-like diagram of the system in Example~\ref{ex:cveticanin_bode}, with angular frequency $\varpi \in [10^{-2}, 10^2]$ and amplitude $a_u \in [10^{-2}, 10^2]$.
	}
	\label{fig:bode:ex_cveticanin}
	\condVSpace{-1.25em}
\end{figure*}

\begin{figure*}[t]
	\centering
	\begin{subfigure}{0.325\linewidth}
	\centering
	\includegraphics[width=\linewidth]{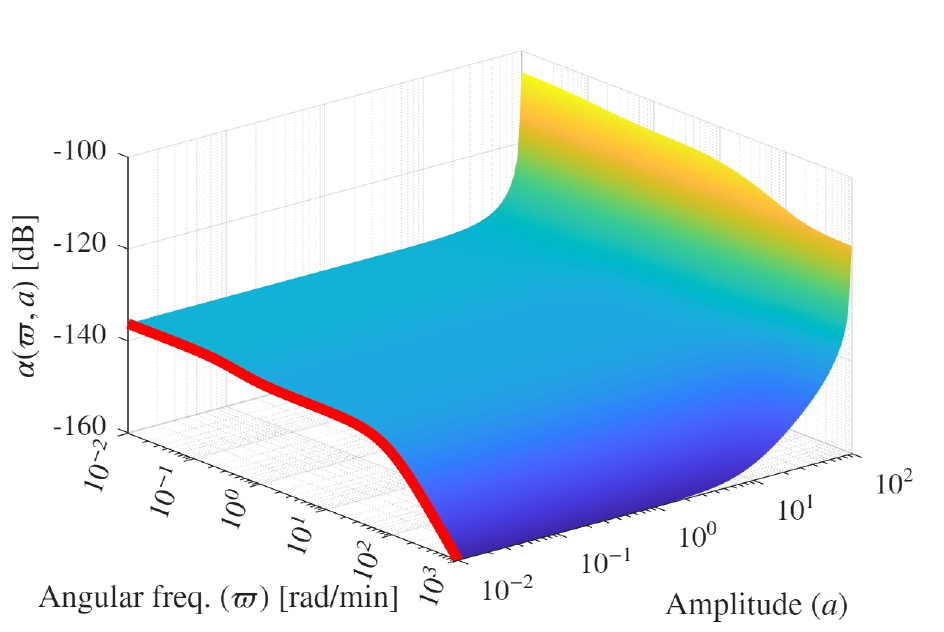}
	\condVSpace{-1.5em}
	\caption{$\omega$-gain}
	\label{fig:bode:ex_bio::gain}
	\end{subfigure}
	\hfill
	\begin{subfigure}{0.325\linewidth}
	\centering
	\includegraphics[width=\linewidth]{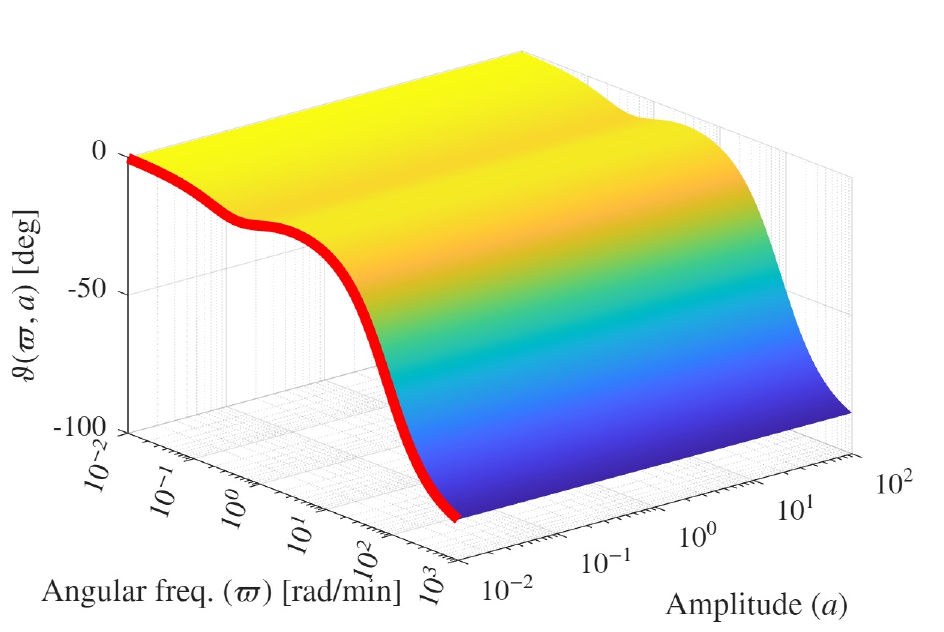}
	\condVSpace{-1.5em}
	\caption{$\omega$-phase}
	\label{fig:bode:ex_bio::phase}
	\end{subfigure}
	\hfill
	\begin{subfigure}{0.325\linewidth}
	\centering
	\includegraphics[width=\linewidth]{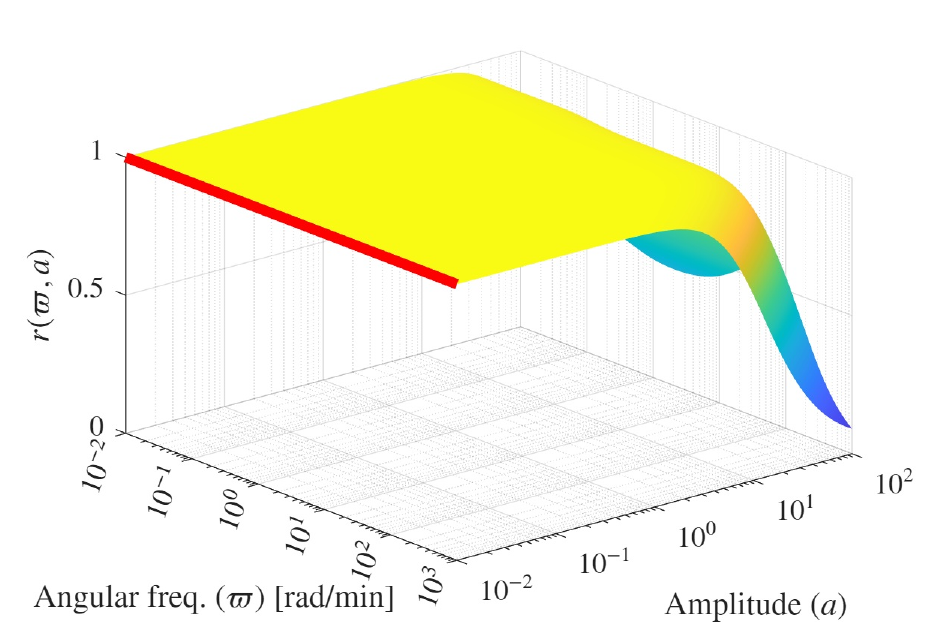}
	\condVSpace{-1.5em}
	\caption{$\omega$-radius}
	\label{fig:bode:ex_bio::radius}
	\end{subfigure}
	\caption{
		Bode-like diagram of the system in Example~\ref{ex:bio_bode}, with angular frequency $\varpi \in [10^{-2}, 10^2]$ and amplitude $a_u \in [10^{-2}, 10^2]$.
		The red curve corresponds to the $\omega$-gain, $\omega$-phase, and $\omega$-radius of the linearized system about the equilibrium $x = 0$.
	}
	\label{fig:bode:ex_bio}
	\condVSpace{-1.5em}
\end{figure*}

\begin{example} \label{ex:base_bode}
	We revisit Example~\ref{ex:base} and analyze the additional information encoded in the associated frequency response function~\eqref{eq:def_freq_resp}.
	In particular, we compute $\alpha(\omega)$, $\vartheta(\omega)$, and $r(\omega)$ from~\eqref{eq:def_w-gain},~\eqref{eq:def_w-phase}, and~\eqref{eq:def_w-radius} respectively, and then perform a Bode-like diagram over input frequency and amplitude.
	Specifically, Figures~\ref{fig:bode:ex_base::gain},~\ref{fig:bode:ex_base::phase} and~\ref{fig:bode:ex_base::radius} present, respectively, the nonlinear Bode-like gain diagram, the nonlinear Bode-like phase diagram and the nonlinear Bode-like radius diagram, derived using the $\omega$-response function under the same parameter configuration specified in equation~\eqref{eq:parameters_ex1}. Figure~\ref{fig:bode:ex_base::gain} shows that for very small input amplitudes, the $\alpha(\omega)$ closely matches the magnitude response of the linearized system.
	As the input amplitude $a_u$ increases and for low-frequency values $\varpi$, near $a_u \approx 0.75$, the $\omega$-gain function $\alpha(\omega)$ exhibits attenuation not present in the linear model.
	For $a_u > 1$, amplification increases with input amplitude.
	As $\omega$ increases, the amplitude response converges to that of the linearized system even at higher amplitudes $a_u$.
	Similarly, Figure~\ref{fig:bode:ex_base::phase} shows that the phase response matches the linearized model at small amplitudes.
	However, near $a_u \approx 0.75$ and low $\omega$, a phase shift of about $-180^\circ$ occurs, consistent with phase inversion and trajectory changes observed in Figure~\ref{fig:ex:base:ts_Y}.
	As $\omega$ increases, the phase again aligns with the linearized response over the same amplitude range.
	Instead, Figure~\ref{fig:bode:ex_base::radius} shows that, at low amplitudes $a_u$ or at large frequencies $\varpi$, the $\omega$-radius is close to $1$, indicating that the nonlinear system does not distort the input.
	In particular, since the input is a sine wave generated by a harmonic oscillator, the $\omega$-radius shows that the output is a sine wave that differs from the input in amplitude and phase (as shown in~\ref{fig:bode:ex_base::gain} and~\ref{fig:bode:ex_base::phase}).
	Instead, around $a_u \approx 0.75$ and low frequencies, the $\omega$-radius significantly decreases, highlighting that the output is a heavily distorted version of the input.
	This is also confirmed by Figure~\ref{fig:ex:base:ts_Y}, where the time series of $\omega$-response is graphically represented.
\end{example}
\begin{example} \label{ex:cveticanin_bode}
	Consider the nonsmooth nonlinear system~\eqref{eq:driven_sys} with state $x = (x_1, x_2) \in \R^2$, and
	\begin{align*}
		f(x, u)
		&
		=
		\pr*{
			\begin{array}{c}
				- 0.1 x_1 + u \\
				- 10 \norm*{\sin(x_1)} - 3 x_2 + 2 x_1
			\end{array}
		},
		\\
		h(x, u)
		&
		= 5 x_1 + 3 x_1 x_2.
	\end{align*}
	We are interested in analyzing its frequency response when fed with the nonlinear, nonsmooth oscillator analyzed in~\cite{Cveticanin2009a}, which can be represented in the form~\eqref{eq:driving_sys} with $\omega = (\varpi, a_u)\in \R^2$, $z_0(\omega) = (a_u, 0) \in \R^2$ and
	\begin{align*}
		s(\omega, z)
		&
		=
		\pr*{
			\begin{array}{c}
				z_2 \\
				-\pr*{\frac{\varpi k_\alpha}{2 \pi}}^2 z_1 \norm*{\frac{z_1}{a_u}}^{\alpha - 1}
			\end{array}
		}, 
		&
		\ell(\omega, z)
		&
		= z_1
	\end{align*}
	where $k_\alpha = \frac{4}{\sqrt{2 (\alpha + 1 )}} B\pr*{ \frac{1}{\alpha+1},\frac{1}{2} }$, $\alpha = 30$, and $B$ is the Euler's beta function~\cite[Sec 8.17]{Olver2010a}.
	Then, the input $u$ has angular frequency equal to $\varpi$, and it oscillates within the interval $[-a_u, a_u]$. Note that neither the system nor the signal generator admits a linearization around the origin as they are both non-differentiable, and hence representations by Bode diagrams are not allowed.
	On the other hand, the $\omega$-gain, $\omega$-phase, and $\omega$-radius are still well-defined quantities that they can be used for system analysis.
	The system performance at different input angular frequency $\varpi$ and amplitude $a_u$ are illustrated in the Bode-like diagram of Figure~\ref{fig:bode:ex_cveticanin}.
\end{example}
\begin{example} \label{ex:bio_bode}
	Consider the gene regulatory model in~\cite[Eq.~6.3]{del2015biomolecular}, in which a protein $x_1$ binds reversibly to a downstream promoter $p$, forming the complex $x_2$ with association and dissociation rates $k_{\mathrm{on}}>0$ and $k_{\mathrm{off}}>0$, and total promoter concentration $p_{\mathrm{tot}} = p + x_2$.
	Here, we additionally assume saturation~\cite[Ch.~1]{keener2025mathematical} in both production and degradation of the protein $x_1$.
	The transcription factor input is given by $u(t) = \bar{u} + \tilde{u}(t)$, with $\tilde{u}(t) = a_u \sin(\varpi t)$ and $\bar v > a_u > 0$, and the dynamics of $x_1$ and $x_2$ with saturating production and degradation yields
	\condVSpace{-0.25em}
	\begin{align*}
		\dot{x}_1 & = w(x_1,x_2) - \frac{\gamma x_1}{K + x_1} + \frac{\bar{u} + \tilde{u}}{K_{\mathrm{u}} + \bar{u} + \tilde{u}},\\[-0.5mm]
		\dot{x}_2 & = - w(x_1,x_2), \\
		y & = x_1,
	\end{align*}
	\condVSpace{-0.25em}
	with retroactive effect of the promoter $w(x_1,x_2) = k_{\mathrm{off}} x_2 - k_{\mathrm{on}} (p_{\mathrm{tot}} - x_2)x_1$, decay rate $\gamma>0$, and saturation weights $K>0$ and $K_{\mathrm{u}}>0$.
	For zero frequency ($\varpi = 0$), the system has a globally asymptotically stable equilibrium given by $x_1^\ast = \frac{K \bar{u}}{\gamma (K_{\mathrm{u}} + \bar{u}) - \bar{u}}$ and $x_2^\ast = \frac{k_{\mathrm{on}} p_{\mathrm{tot}} x_1^\ast}{k_{\mathrm{off}} + k_{\mathrm{on}} x_1^\ast}$.
	However, to perform frequency response analysis we consider a shifted version of the system, so that the equilibrium lies at the origin and the frequency response function can be computed from input $\tilde{u}(t) = u(t) - \bar{u}$ and output $\tilde{y}(t) = x_1(t) - x_1^\ast$.
	\diff{In this example, we consider the case with $\gamma=5$, $k_{\mathrm{on}} = 1$, $k_{\mathrm{off}} = 0.5$, $p_{\mathrm{tot}}=20$, $\bar{u}=100.5$, $K = 0.05$ and $K_{\mathrm{u}}=0.1$.}

	Bode-like diagrams illustrating the performance at various angular frequency $\varpi$ and amplitude $a_u$ are reported in Figure~\ref{fig:bode:ex_bio}.
	Specifically, Figure~\ref{fig:bode:ex_bio::gain} shows that the $\omega$-gain is consistent with that of the linearized system until the effect of saturation becomes visible and dominates the linear effect.
	This is consistent with the shape of the $\omega$-radius in Figure~\ref{fig:bode:ex_bio::radius}, which proves that, at high amplitudes, the system bends the shape of the input, achieving high nonlinearity at high frequencies.
	Finally, Figure~\ref{fig:bode:ex_bio::phase} confirms that the phase remains unchanged throughout the input amplitudes.
\end{example}

\section{Analysis of Frequency Response of Nonlinear Systems}

\label{sec:analysis_of_frf}

\subsection{Input-Output Properties of the Frequency Response}

One of the most widely studied input-output properties of nonlinear systems is that of \emph{dissipativity}.
The foundational theory of dissipativity was introduced by Willems~\cite{willems1972dissipative}.
Since then, it has been recognized as a unifying framework for characterizing both input-output and internal qualitative behaviors of dynamical systems~\cite{hill1980dissipative}.
For the purpose of this section, we consider dissipativity in an input-output sense and adopt a weaker notion that does not require the existence of a storage function.
We say that the system~\eqref{eq:driven_sys} is \emph{dissipative}, with continuous function $s : \R^m \times \R^m \to \R$ referred to as the supply rate%
\footnote{A supply rate is such that $s(u,y)$ is a locally Lebesgue integrable function for all admissible input $u$ and initial state $x(0)$.},
if for all admissible inputs $u \in \mathcal{L}_2(\R, \R^m)$, the corresponding output $y \in \mathcal{L}_2(\R, \R^m)$ satisfies the condition
\begin{equation} \label{eq:dissipation_inequality}
	\forall u \in \mathcal{L}_2(\R, \R^m),
	\hspace{1ex}
	\liminf_{\tau\to\infty} \int_0^\tau s(u(t), y(t)) \, dt > -\infty.
\end{equation}
This condition ensures that the system does not dissipate an unbounded amount of energy over an infinite time horizon.
However, on its own, it does not assume or imply the existence of a storage function that quantifies stored energy, see~\cite{brogliato2020dissipative}.

In addition to considering dissipativity with respect to all inputs, we introduce an input-dependent notion of dissipativity tailored to the class of inputs generated by~\eqref{eq:driving_sys}. 
\begin{definition}
	The system~\eqref{eq:driven_sys} is said \emph{dissipative relative to~\eqref{eq:driving_sys}} if, for all $\omega \in \mathcal{N}$ and inputs of the form $u=\ell(\omega, z)$, the corresponding output $y \in \mathcal{L}_2(\R, \R^m)$ satisfies the condition
	\begin{equation} \label{eq:dissipativity_wrt_input}
		\forall \omega \in \mathcal{N},
		\hspace{3ex}
		\liminf_{\tau\to\infty} \int_0^\tau s(\ell(\omega, z(t)), y(t)) \, dt > -\infty.
	\end{equation}
\end{definition}
This notion generalizes the concept of dissipativity with respect to arbitrary inputs, in the sense that~\eqref{eq:dissipation_inequality} implies the input-dependent condition~\eqref{eq:dissipativity_wrt_input}, whereas the converse does not necessarily hold.

In what follows, we demonstrate how dissipativity implies properties of the $\omega$-response.
\begin{lemma}\label{prop:beta_implies_zero}
	Let the system~\eqref{eq:driven_sys} be dissipative relative to~\eqref{eq:driving_sys} with supply rate $s$, and suppose that the invariant submanifold $\mathcal{M}$ in~\eqref{eq:inv_manifold} for the interconnected system~\eqref{eq:driven_sys}-\eqref{eq:driving_sys} is attractive.
	Then, the $\omega$-response of the system~\eqref{eq:driven_sys} to the input~\eqref{eq:driving_sys} satisfies
	\begin{equation}\label{eq:dissipativity_on_manifold}
		\forall \omega \in \mathcal{N},
		\hspace{3ex}
		\int_{\T_\omega} s(\ell(\omega, z(t)), Y(\omega, z(t))) \, dt \ge 0.
	\end{equation}
\end{lemma}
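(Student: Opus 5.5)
The plan is to argue by contradiction, using periodicity of the steady state to turn a negative integral over one period into an integral that diverges to $-\infty$ over long horizons. Fix $\omega \in \mathcal{N}$. Suppose that
\[
\int_{\T_\omega} s(\ell(\omega, z(t)), Y(\omega, z(t)))\,dt = -c
\]
for some $c>0$.

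\emph{Step 1: trajectories starting on the manifold.} Take an initial condition $(z_0, x_0)$ with $x_0 = \varphi(\omega, z_0)$, so that $(\omega,z_0,x_0) \in \mathcal{M}$. By invariance of $\mathcal{M}$ (Assumption~\ref{ass:compact_invariant_ultimately_bounded}), the output is $y(t) = Y(\omega, z(t))$ for all $t \ge 0$. By Assumption~\ref{ass:z_u_periodic}, both $\ell(\omega,z(t))$ and $Y(\omega,z(t))$ are continuous and $T_\omega$-periodic, so the integrand $g(t) \coloneq s(\ell(\omega,z(t)), Y(\omega,z(t)))$ is continuous, periodic and bounded, say $|g| \le M$.

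\emph{Step 2: divergence of the supply integral.} Write $\tau = kT_\omega + \rho$ with $k\in\N$ and $\rho \in [0,T_\omega)$. Then
\[
\int_0^\tau g\,dt = -kc + \int_0^\rho g\,dt \le -kc + MT_\omega ,
\]
which tends to $-\infty$ as $\tau\to\infty$. This contradicts~\eqref{eq:dissipativity_wrt_input}, and the contradiction gives~\eqref{eq:dissipativity_on_manifold}.

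\emph{Step 3: the main obstacle.} The definition of relative dissipativity is stated for inputs $u=\ell(\omega,z)$ and the corresponding output, with the tacit requirement $y\in\mathcal{L}_2(\R,\R^m)$. A nonzero periodic output is not in $\mathcal{L}_2(\R)$, so I would read the condition as holding for every admissible initial state $x(0)\in\mathcal{X}$, including states on $\mathcal{M}$. This is where attractivity enters. If the definition only applies to trajectories from generic $x(0)\in\mathcal{X}$, I would use attractivity of $\mathcal{M}$ together with ultimate boundedness to get $\|y(t) - Y(\omega,z(t))\| \to 0$. Continuity of $s$ on the compact set containing the ultimately bounded signals gives uniform continuity there, so the supply along $y$ differs from $g(t)$ by a term $\epsilon(t)\to 0$. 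Choose $t_0$ with $|\epsilon(t)|\le c/(2T_\omega)$ for $t\ge t_0$. Then each subsequent period contributes at most $-c/2$, while the finite transient on $[0,t_0]$ is bounded. The integral again diverges to $-\infty$, contradicting~\eqref{eq:dissipativity_wrt_input}. Making this uniform-continuity estimate rigorous, together with the bookkeeping over the transient interval, is the only delicate part of the argument.
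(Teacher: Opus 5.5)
Your proof is correct and follows the same route as the paper. Both argue by contradiction: split the supply integral into period-length pieces, and use invariance plus attractivity of $\mathcal{M}$ to make each piece converge to $\int_{\T_\omega} s(\ell(\omega,z),Y(\omega,z))\,dt<0$, so the integral diverges to $-\infty$, contradicting~\eqref{eq:dissipativity_wrt_input}. Your Step 3 is exactly the paper's $\Delta_k \to c_\omega$ argument, with the uniform-continuity justification the paper leaves implicit, while your Steps 1--2 are a shortcut that needs no attractivity when initial states on $\mathcal{M}$ count as admissible.
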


\begin{proof}
	By assumption, the system is dissipative relative to~\eqref{eq:driving_sys}, that is~\eqref{eq:dissipativity_wrt_input} holds.
	Note that, over the interval $[0, \tau]$, the integral in~\eqref{eq:dissipation_inequality} can be equivalently written as
	\begin{equation*}
		\int_0^\tau s(\ell(\omega, z(t)), y(t)) \, dt
		=
		\sum_{k=0}^{\bar{\tau}} \Delta_k,
	\end{equation*}
	where $\tau = \bar{\tau} T_\omega + \tau_0$, $\bar{\tau}$ is the unique integer such that $\tau_0 \in [0, T_\omega)$, and the sequence $\Delta_k : \N \to \R$ is defined by
	\begin{equation}\label{eq:sequence_delta}
		\begin{aligned}
			\Delta_0
			&
			\coloneq \int_0^{\tau_0} s(\ell(\omega, z(t)), y(t))\, dt,
			\\
			\Delta_k
			&
			\coloneq \int_{\tau_0 + (k-1) T_\omega}^{\tau_0 + k T_\omega} s(\ell(\omega, z(t)), y(t))\, dt, \quad \forall k \ge 1 .
		\end{aligned}
	\end{equation}
	Continuity of the functions $s$ and $\ell$, as well as of the trajectories $z$ and $y$, ensures boundedness of $\Delta_k$, that is there exist constants $\underline{q}$ and $\overline{q}$ such that $\underline{q} \le \Delta_k \le \overline{q}$ for all $k$.
	Furthermore, by the assumption that the submanifold $\mathcal{M}$ is invariant and attractive (along the trajectories of the interconnected system~\eqref{eq:driven_sys}-\eqref{eq:driving_sys}), we have that the sequence $\Delta_k$ converges to the constant
	\begin{equation}\label{eq:c_omega}
		c_{\omega} \coloneq \int_{\T_\omega} s(\ell(\omega, z(t)), Y(\omega, z(t))) \, dt, \hspace{3ex} \forall \omega \in \mathcal{N}.
	\end{equation}
	To prove the statement, we need to show that $ c_{\omega} \ge 0$ under the given assumptions.
	Therefore, \emph{ex absurdo}, consider the case in which $c_{\omega} < 0$.
	Then, since the sequence $\Delta_{k}$ converges to $c_{\omega}$, it follows that
	\begin{equation*}
		\lim_{k\to\infty} \Delta_{k} = c_{\omega} < 0.
	\end{equation*}
	This implies the existence of a constant $b>0$ and an index $K \in \N$ such that for all $k\ge K$ we have $\Delta_{k}<-b$.
	Consequently, the infinite sum of the series diverges to negative infinity, \emph{i.e.},
	\begin{equation*}
		\lim_{\tau \to +\infty} \sum_{k = 0}^{\tau} \Delta_{k} = -\infty,
	\end{equation*}
	which contradicts the lower boundedness of the integral~\eqref{eq:dissipativity_wrt_input}.
	On the contrary, for $c_{\omega} \ge 0$, we infer that
	\begin{equation*}
		\lim_{k\to\infty} \Delta_{k} = \int_{\T_\omega} s(\ell(\omega, z(t)), Y(\omega, z(t))) \, dt \ge 0, 
	\end{equation*}
	which implies that the infinite sum is lower bounded, and thus proves the claim.
\end{proof}

\begin{lemma}\label{prop:zero_implies_beta}
	Let the system~\eqref{eq:driven_sys} be such that its $\omega$-response to the input~\eqref{eq:driving_sys} satisfies~\eqref{eq:dissipativity_on_manifold}.
	Suppose that the invariant submanifold $\mathcal{M}$ in~\eqref{eq:inv_manifold} is attractive.
	Then, the system~\eqref{eq:driven_sys} is dissipative relative to~\eqref{eq:driving_sys} with supply rate $s$.
\end{lemma}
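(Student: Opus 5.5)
The plan is to prove the converse of Lemma~\ref{prop:beta_implies_zero} by reusing its period-wise decomposition of the integral. Fix $\omega \in \mathcal{N}$ and an initial condition $(z(0),x(0)) \in \mathcal{Z}_\omega \times \mathcal{X}$. For $\tau = \bar{\tau} T_\omega + \tau_0$ with $\tau_0 \in [0,T_\omega)$, we write
\begin{equation*}
	\int_0^\tau s(\ell(\omega,z(t)),y(t))\,dt = \sum_{k=0}^{\bar{\tau}} \Delta_k ,
\end{equation*}
with $\Delta_k$ as in~\eqref{eq:sequence_delta}. By attractivity and invariance of $\mathcal{M}$, we have $\Delta_k \to c_\omega$, where $c_\omega$ is defined in~\eqref{eq:c_omega}; the hypothesis~\eqref{eq:dissipativity_on_manifold} gives $c_\omega \ge 0$. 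We must then show that $\liminf_{\tau\to\infty}$ of this sum is $> -\infty$.

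The key step is to control the \emph{rate} at which $\Delta_k$ approaches $c_\omega$, not only the limit. Mere convergence with $c_\omega \ge 0$ is not enough: when $c_\omega = 0$, the error terms $\Delta_k - c_\omega$ could be negative and non-summable, for instance behaving like $-1/k$. So I would split
\begin{equation*}
	\sum_{k=0}^{\bar{\tau}} \Delta_k = (\bar{\tau}+1)\,c_\omega + \sum_{k=0}^{\bar{\tau}} (\Delta_k - c_\omega).
\end{equation*}
The first term is $\ge 0$. If $c_\omega > 0$, the conclusion is immediate: there is $K$ with $\Delta_k > c_\omega/2 > 0$ for all $k \ge K$. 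Hence the sum is bounded below by the finite quantity $\sum_{k<K}\Delta_k$, which is finite because each $\Delta_k$ is bounded (the bound $\underline{q} \le \Delta_k \le \overline{q}$ from the previous proof). The partial interval $\Delta_0$ is handled the same way.

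The main obstacle is the case $c_\omega = 0$. Here I would require summability of $\Delta_k - c_\omega$, which I would obtain from exponential attractivity of $\mathcal{M}$ (the natural reading of ``attractive'' in the Isidori--Byrnes setting, where trajectories approach $\mathcal{M}$ exponentially under hyperbolicity). Concretely, I would use the bound $\lvert x(t) - \varphi(\omega,z(t))\rvert \le M e^{-\lambda t}$. Then I would combine it with uniform continuity of $s$ and local Lipschitzness of $h$ on the compact set containing the ultimately bounded trajectories (Assumption~\ref{ass:compact_invariant_ultimately_bounded}). This should give $\lvert y(t) - Y(\omega,z(t))\rvert \le M' e^{-\lambda t}$. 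For $s$ that is locally Lipschitz in its second argument, it follows that $\lvert \Delta_k - c_\omega\rvert \le M'' e^{-\lambda k T_\omega}$, which is summable. The sum of errors is therefore bounded, and the $\liminf$ is finite.

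If only plain attractivity is available, I would first try to exploit $c_\omega \ge 0$ more directly: show $\liminf$ is bounded below by comparing with the trajectory on $\mathcal{M}$. I would then flag that some rate assumption seems needed and state it explicitly in the proof. Since $\omega$ and the initial condition are arbitrary, this yields~\eqref{eq:dissipativity_wrt_input}, i.e., dissipativity relative to~\eqref{eq:driving_sys}.
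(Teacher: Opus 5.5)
Your period-wise decomposition and your treatment of the case $c_\omega>0$ are the same as the paper's proof. For $c_\omega=0$, the paper only asserts that $\lim_{\tau\to\infty}\sum_k\Delta_k<\infty$ and that this gives a lower bound. Neither step follows from $\Delta_k\to0$, and an upper bound is not a lower bound anyway. So the objection you raise is exactly the weak point of the published argument.

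In fact the lemma is false under two readings taken from the paper. The first is that ``attractive'' means mere asymptotic convergence to $\mathcal{M}$, which is the paper's definition. The second is that dissipativity is required along trajectories starting off $\mathcal{M}$, which is the only reading under which attractivity matters. A counterexample: take $\dot x=-x^3$, $y=h(x,u)=x$, $s(u,y)=-y$, driven by~\eqref{eq:LTI-exosystem}.
\begin{itemize}
\item $\varphi\equiv0$ solves~\eqref{eq:pi_pde}, $\mathcal{M}$ is invariant and attractive, and $Y\equiv0$.
\item Hence~\eqref{eq:dissipativity_on_manifold} holds with $c_\omega=0$.
\item Yet for $x(0)=x_0>0$ one gets $x(t)=x_0/\sqrt{1+2x_0^2t}$ and $\int_0^\tau s\,dt=-(\sqrt{1+2x_0^2\tau}-1)/x_0\to-\infty$.
\end{itemize}
If one insists on $y\in\mathcal{L}_2$ as written in the definition, use $\dot x=-x\lvert x\rvert^{4/3}$ instead, for which $x(t)\sim t^{-3/4}$.

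So your proposal does not prove the statement as written, but no argument can. What it proves is a corrected lemma, and both of your added hypotheses are genuinely needed. Exponential attractivity alone is not enough. Take $\dot x=-x$, $y=x$, and the continuous supply rate $s(u,y)=-1/\ln(1/\lvert y\rvert)$ for $0<\lvert y\rvert\le e^{-1}$, extended by $0$ at $y=0$ and by $-1$ for $\lvert y\rvert\ge e^{-1}$. Then $\mathcal{M}=\{x=0\}$ is exponentially attractive and $c_\omega=0$, yet the initial state $x_0=e^{-1}$ gives $\int_0^\tau s\,dt=-\ln(1+\tau)\to-\infty$.

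A Lipschitz (or H\"older) modulus of $s$ in $y$ suffices. It holds for all the quadratic supply rates in the subsequent corollaries, so the substantive addition is the convergence rate. Alternatively, plain attractivity does suffice if~\eqref{eq:dissipativity_on_manifold} is required with strict inequality, which is your first case.

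Accordingly, drop the fallback of ``exploiting $c_\omega\ge0$ more directly'' and state the extra hypotheses up front. One detail to make explicit in either version: since $\tau_0$ depends on $\tau$, you need $\Delta_k\to c_\omega$ uniformly in $\tau_0\in[0,T_\omega)$. This follows from $y(t)-Y(\omega,z(t))\to0$ and uniform continuity of $s$ on the compact set containing the trajectories.
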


\begin{proof}
	Let us recall the sequence $\Delta$ and its limit $c_\omega \in \R$ as defined in~\eqref{eq:sequence_delta} and~\eqref{eq:c_omega}.
	By assumption that the $\omega$-response to the input~\eqref{eq:driving_sys} satisfies~\eqref{eq:dissipativity_on_manifold}, we have that $c_{\omega} \ge 0$.
	Moreover, following the boundedness of the sequence $\Delta_k$, we have that, if $c_\omega > 0$,
	\begin{equation*}
		\lim_{\tau \to \infty} \int_0^\tau s(\ell(\omega, z(t)), y(t)) \, dt
		=
		\lim_{\tau \to \infty} \sum_{k = 0}^{\tau} \Delta_{k} = \infty,
	\end{equation*}
	and, if $c_\omega = 0$,
	\begin{align*}
		\lim_{\tau \to \infty} \int_0^\tau s(\ell(\omega, z(t)), y(t)) \, dt
		=
		\lim_{\tau \to \infty} \sum_{k = 0}^{\tau} \Delta_{k} < \infty.
	\end{align*}
	Hence, since both the above conditions imply the existence of a lower bound for the cumulative sum, we infer that the inequality~\eqref{eq:dissipativity_wrt_input} holds, thereby proving the claim.
\end{proof}
Since Lemma~\ref{prop:beta_implies_zero} establishes that
\diff{$\eqref{eq:dissipativity_wrt_input} \Rightarrow~\eqref{eq:dissipativity_on_manifold}$},
and Lemma~\ref{prop:zero_implies_beta} establishes that
$\eqref{eq:dissipativity_wrt_input} \Leftarrow~\eqref{eq:dissipativity_on_manifold}$,
we infer that $\eqref{eq:dissipativity_wrt_input} \Leftrightarrow~\eqref{eq:dissipativity_on_manifold}$,
which yields a necessary and sufficient condition for dissipativity relative to~\eqref{eq:driving_sys}, characterized in terms of the $\omega$-response.
\begin{theorem} \label{th:iff}
	Consider the system~\eqref{eq:driven_sys} and the input~\eqref{eq:driving_sys}.
	Suppose that the invariant submanifold $\mathcal{M}$ in~\eqref{eq:inv_manifold} is attractive.
	Then, the system~\eqref{eq:driven_sys} is dissipative relative to~\eqref{eq:driving_sys} with supply rate $s$ if and only if the $\omega$-response of the system~\eqref{eq:driven_sys} to the input~\eqref{eq:driving_sys} satisfies~\eqref{eq:dissipativity_on_manifold}.
\end{theorem}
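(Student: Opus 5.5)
Theorem~\ref{th:iff} is the conjunction of Lemma~\ref{prop:beta_implies_zero} and Lemma~\ref{prop:zero_implies_beta}, so the plan is to prove the two implications separately, both under the standing hypothesis that $\mathcal{M}$ is attractive. Both directions use the same device, already set up in the proof of Lemma~\ref{prop:beta_implies_zero}. Fix $\omega \in \mathcal{N}$. Write $\tau = \bar{\tau} T_\omega + \tau_0$ and decompose the cumulative supply $\int_0^\tau s(\ell(\omega,z(t)),y(t))\,dt$ into the period-length blocks $\Delta_k$ of~\eqref{eq:sequence_delta}.

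The analytic core is a single convergence step. By attractivity, $\norm{x(t)-\varphi(\omega,z(t))}\to 0$. Since $h$ and $s$ are continuous and the trajectories are ultimately bounded by Assumption~\ref{ass:compact_invariant_ultimately_bounded}, $s$ and $h$ are uniformly continuous on the relevant compact set. Hence $y(t)-Y(\omega,z(t)) \to 0$, and $\Delta_k$ converges to $c_\omega$ in~\eqref{eq:c_omega}.

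One bookkeeping point needs care: each block starts at $\tau_0$, not at $0$. Because $t\mapsto s(\ell(\omega,z(t)),Y(\omega,z(t)))$ is $T_\omega$-periodic by Assumption~\ref{ass:z_u_periodic}, its integral over any window of length $T_\omega$ equals the integral over $\T_\omega$, so the limit is still $c_\omega$. I expect this convergence step, made uniform in the window offset, to be the main obstacle. Everything else is elementary series arithmetic.

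For the direction ``dissipative $\Rightarrow$~\eqref{eq:dissipativity_on_manifold}'', argue by contradiction. If $c_\omega<0$, then $\Delta_k<c_\omega/2$ for all $k\ge K$. Since each $\Delta_k$ is bounded, the partial sums tend to $-\infty$ linearly in $\bar\tau$. This holds along every sequence $\tau\to\infty$, so the $\liminf$ in~\eqref{eq:dissipativity_wrt_input} is $-\infty$, a contradiction. Therefore $c_\omega\ge0$ for every $\omega$, which is~\eqref{eq:dissipativity_on_manifold}.

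For the converse, assume $c_\omega\ge0$. If $c_\omega>0$, the same tail argument shows the partial sums tend to $+\infty$, so the $\liminf$ is finite from below. If $c_\omega=0$, convergence $\Delta_k\to0$ alone is not enough to bound the partial sums from below. The natural attempt is to take $\Delta_k = c_\omega + \epsilon_k$, where $\epsilon_k$ is controlled by the deviation of $y$ from $Y$ over the $k$-th block. One then needs $\sum_k \epsilon_k$ to be bounded below, which holds for instance if the convergence to $\mathcal{M}$ is exponential and $s$ is locally Lipschitz, making $\epsilon_k$ summable. I would first try to extract this from the attractivity hypothesis. Failing that, I would adopt the paper's reading, in which attractivity is taken strong enough that the transient contribution $\sum_k(\Delta_k-c_\omega)$ is finite; with that, the lower bound in~\eqref{eq:dissipativity_wrt_input} follows.

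Combining the two directions for every $\omega\in\mathcal{N}$ yields the equivalence stated in Theorem~\ref{th:iff}.
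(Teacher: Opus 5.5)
Your strategy is the paper's own. Theorem~\ref{th:iff} is proved there as the conjunction of Lemma~\ref{prop:beta_implies_zero} and Lemma~\ref{prop:zero_implies_beta}, using the period blocks $\Delta_k\to c_\omega$ of~\eqref{eq:sequence_delta}--\eqref{eq:c_omega}. Your necessity argument is correct, and so is the case $c_\omega>0$ of the converse; you are also more careful than the paper about the offset $\tau_0$.

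The case $c_\omega=0$, however, is a genuine gap, and the paper does not close it either. The proof of Lemma~\ref{prop:zero_implies_beta} merely asserts $\lim_{\tau\to\infty}\sum_k\Delta_k<\infty$. That does not follow from $\Delta_k\to 0$, and an upper bound is not the lower bound that is needed anyway. So there is no stronger ``paper's reading'' of attractivity for you to adopt. Your first plan, extracting summability of $\Delta_k-c_\omega$ from attractivity alone, cannot succeed, because the ``if'' direction is false under that hypothesis.

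Here is a counterexample. Take $n=m=p=1$, $f(x,u)=-x|x|$, $h(x,u)=-xu$, the harmonic oscillator~\eqref{eq:LTI-exosystem} as generator, and the passivity supply rate $s(u,y)=uy$.
\begin{itemize}
\item $\varphi\equiv 0$ solves~\eqref{eq:pi_pde}.
\item $\mathcal{M}$ is invariant and attractive, since every solution is $x(t)=x_0/(1+|x_0|t)\to 0$.
\item $Y\equiv 0$, hence $c_\omega=0$ and~\eqref{eq:dissipativity_on_manifold} holds.
\item Yet for $x(0)=1$ one gets $\int_0^\tau s\,dt=-\int_0^\tau u(t)^2/(1+t)\,dt\to-\infty$, because $u^2$ is periodic with positive mean. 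So~\eqref{eq:dissipativity_wrt_input} fails.
\end{itemize}
Note that $y\in\mathcal{L}_2$ here, so even the output requirement in the definition is met. This is exactly the $\alpha(\omega)=0$ branch of Corollary~\ref{th:passivity}.

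The missing ingredient is a quantitative rate of attraction, and you must state it as an additional assumption rather than derive it. Your own suggestion works. Suppose $|x(t)-\varphi(\omega,z(t))|\le Ce^{-\lambda t}$ and $s$ is locally Lipschitz in $y$ (the paper already assumes $h$ is locally Lipschitz in $x$). Then, on the compact set where the bounded trajectories live, $|s(\ell(\omega,z),y)-s(\ell(\omega,z),Y(\omega,z))|\le Le^{-\lambda t}$. Hence $\int_0^\tau s\,dt$ differs from $\bar\tau c_\omega$ by a bounded amount, and the case $c_\omega=0$ follows.

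Alternative fixes:
\begin{itemize}
\item assume directly that $\int_0^\infty|s(\ell,y)-s(\ell,Y)|\,dt<\infty$; or
\item restrict the sufficiency claim to strict inequality in~\eqref{eq:dissipativity_on_manifold}, which mere attractivity does handle.
\end{itemize}
Without one of these, only the ``only if'' half of Theorem~\ref{th:iff} can be proved.
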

This characterization of dissipativity in terms of the $\omega$-response forms the foundation for analyzing bounds on the nonlinear frequency response~\eqref{eq:def_freq_resp} for various supply rate $s$.

\begin{corollary}[$\mathcal{L}_2$ stability] \label{th:L2_gain}
	The system~\eqref{eq:driven_sys} is $\mathcal{L}_2$ stable relative to~\eqref{eq:driving_sys} with $\gamma > 0$, \emph{i.e.},~\eqref{eq:dissipativity_wrt_input} holds with
	\begin{equation} \label{eq:def_L2_gain}
		s(u, y) = \gamma^2 u^\top u - y^\top y,
	\end{equation}
	and $u = \ell(\omega, z)$, if and only if, for all $\omega \in \mathcal{N}$, $\alpha(\omega) \in \ps{0, \gamma}$.
\end{corollary}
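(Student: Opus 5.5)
The plan is to apply Theorem~\ref{th:iff} with the supply rate~\eqref{eq:def_L2_gain}. The corollary thereby reduces to a pointwise-in-$\omega$ inequality on the $\omega$-response. By Theorem~\ref{th:iff}, under attractivity of $\mathcal{M}$, dissipativity relative to~\eqref{eq:driving_sys} with this $s$ holds if and only if, for every $\omega\in\mathcal{N}$,
\begin{equation*}
	\int_{\T_\omega} \pr*{\gamma^2 \ell(\omega,z(t))^\top \ell(\omega,z(t)) - Y(\omega,z(t))^\top Y(\omega,z(t))} dt \ge 0 .
\end{equation*}

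The first step is to rewrite this integral using the inner product and norm of $\mathcal{L}_2(\T_\omega)$ from the notation section. The condition becomes
\begin{equation*}
	\gamma^2 \norm{\ell(\omega,z)}_{\mathcal{L}_2(\T_\omega)}^2 - \norm{Y(\omega,z)}_{\mathcal{L}_2(\T_\omega)}^2 \ge 0 .
\end{equation*}
Both norms are finite by continuity and periodicity, as noted at the start of Section~\ref{sec:nonlinear-freq-analysis}.

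The second step is to divide by $\norm{\ell(\omega,z)}_{\mathcal{L}_2(\T_\omega)}^2$. This is legitimate because the input is assumed not identically zero. Since $\ell$ is continuous, $\ell(\omega,z(\bar t))\neq 0$ forces a positive $\mathcal{L}_2(\T_\omega)$ norm. Dividing gives $\alpha(\omega)^2\le\gamma^2$ by Definition~\ref{def:w_gain}. Because $\alpha(\omega)\ge 0$ as a ratio of norms and $\gamma>0$, this is equivalent to $\alpha(\omega)\in[0,\gamma]$. Every step is an equivalence, so the two implications follow together.

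The main (mild) obstacle is bookkeeping rather than analysis. Theorem~\ref{th:iff} needs the standing assumption that $\mathcal{M}$ is attractive, which I will state as inherited from the theorem. The positivity of $\norm{\ell}$ also has to be justified by the nontriviality assumption on $u$ together with continuity on the periodic orbit.
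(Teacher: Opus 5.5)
Your proposal is correct and follows essentially the same route as the paper: apply Theorem~\ref{th:iff} with the supply rate~\eqref{eq:def_L2_gain}, rewrite the periodic integral as $\gamma^2\norm{\ell(\omega,z)}_{\mathcal{L}_2(\T_\omega)}^2 - \norm{Y(\omega,z)}_{\mathcal{L}_2(\T_\omega)}^2 \ge 0$, and divide by the nonzero input norm to obtain $\alpha(\omega)\le\gamma$. Your extra bookkeeping fills in details the paper leaves implicit: you make explicit the attractivity hypothesis inherited from Theorem~\ref{th:iff}, you justify $\norm{\ell(\omega,z)}_{\mathcal{L}_2(\T_\omega)}>0$ via nontriviality, continuity and periodicity, and you note $\alpha(\omega)\ge 0$ so the bound lands in $[0,\gamma]$.
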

\begin{proof}
	By substituting~\eqref{eq:def_L2_gain} into~\eqref{eq:dissipativity_wrt_input} and applying Theorem~\ref{th:iff}, we obtain the following condition in terms of the $\omega$-response
	$
	0 \le \gamma^2 \norm[\big]{\ell(\omega, z)}_{\mathcal{L}_2(\T_\omega)}^2 - \norm[\big]{Y(\omega, z)}_{\mathcal{L}_2(\T_\omega)}^2,
	$
	which implies
	$
	\norm[\big]{Y(\omega, z)}_{\mathcal{L}_2(\T_\omega)} \le \gamma \norm[\big]{\ell(\omega, z)}_{\mathcal{L}_2(\T_\omega)}.
	$
	Since $\norm{\ell(\omega, z)}_{\mathcal{L}_2(\T_\omega)} \ne 0$, dividing both sides by $\norm{\ell(\omega, z)}_{\mathcal{L}_2(\T_\omega)}$ and using~\eqref{eq:def_w-gain}, we conclude that the $\omega$-gain is bounded by $\gamma$.
\end{proof}

\begin{corollary}[Passivity] \label{th:passivity}
	The system~\eqref{eq:driven_sys} is passive relative to~\eqref{eq:driving_sys}, \emph{i.e.},~\eqref{eq:dissipativity_wrt_input} holds with
	\begin{equation}\label{eq:def_passivity}
		s(u, y) = u^\top y,
	\end{equation}
	and $u = \ell(\omega, z)$, if and only if for all $\omega \in \mathcal{N}$, $\alpha(\omega) = 0$, or
	\begin{equation*}
		\vartheta(\omega) \in \ps*{-\frac{\pi}{2}, \ \frac{\pi}{2}}.
	\end{equation*}
\end{corollary}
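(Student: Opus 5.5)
The plan mirrors the proof of Corollary~\ref{th:L2_gain}. Substitute the supply rate~\eqref{eq:def_passivity} into~\eqref{eq:dissipativity_wrt_input} and invoke Theorem~\ref{th:iff}, which applies because $\mathcal{M}$ is attractive. This makes passivity relative to~\eqref{eq:driving_sys} equivalent to
\begin{equation*}
	\int_{\T_\omega} \ell(\omega,z(t))^\top Y(\omega,z(t))\,dt
	=
	\inner[\big]{\ell(\omega,z)}{Y(\omega,z)}_{\mathcal{L}_2(\T_\omega)} \ge 0
	\qquad \forall \omega\in\mathcal{N}.
\end{equation*}
The task then reduces to rewriting this inner-product sign condition in terms of $\alpha$ and $\vartheta$.

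Fix $\omega$ and split into two cases. If $\norm{Y(\omega,z)}_{\mathcal{L}_2(\T_\omega)} = 0$, then $\alpha(\omega)=0$ by~\eqref{eq:def_w-gain}. The inner product vanishes, so the condition holds trivially, and this is exactly the first alternative in the statement. Otherwise $\alpha(\omega)>0$, and $\mathfrak{Re}(\omega)$ is well defined. Since $\norm{\ell(\omega,z)}_{\mathcal{L}_2(\T_\omega)}\neq 0$ by the nontriviality assumption on $u$, dividing by the positive product of norms shows that the condition is equivalent to $\mathfrak{Re}(\omega)\ge 0$.

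Next, use~\eqref{eq:cos_sin}, namely $\cos(\vartheta(\omega)) = \mathfrak{Re}(\omega)/r(\omega)$. By Theorem~\ref{th:w-radius}\ref{th:w-radius::r01}, $r(\omega)\in(0,1]$, so $r(\omega)>0$ and $\mathfrak{Re}(\omega)\ge 0$ is equivalent to $\cos(\vartheta(\omega))\ge 0$. Taking the principal branch of $\arg$ in~\eqref{eq:def_w-phase}, with values in $(-\pi,\pi]$, this is in turn equivalent to $\vartheta(\omega)\in[-\pi/2,\pi/2]$. Both implications are reversible, so combining the two cases over all $\omega\in\mathcal{N}$ gives the ``if and only if''.

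The main obstacle is the well-definedness step in the degenerate case $Y\equiv 0$. There $\mathfrak{Re}$, $\mathfrak{Im}$, and hence $\vartheta$ and $r$, are undefined because their normalization divides by $\norm{Y}$. This is precisely why the statement carries the separate alternative $\alpha(\omega)=0$. The same issue affects Theorem~\ref{th:w-radius}\ref{th:w-radius::r01}: its claim $r>0$ implicitly assumes $Y\not\equiv 0$, so I would invoke it only in the case $\alpha(\omega)>0$. A smaller point to make explicit is the choice of principal branch for $\arg$, so that $\cos\vartheta\ge 0$ corresponds exactly to the closed interval $[-\pi/2,\pi/2]$.
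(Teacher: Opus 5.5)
Your argument is the paper's own proof almost line for line. You substitute $s(u,y)=u^\top y$, use Theorem~\ref{th:iff} to reduce passivity to $\inner{\ell(\omega,z)}{Y(\omega,z)}_{\mathcal{L}_2(\T_\omega)}\ge 0$, dispose of $\alpha(\omega)=0$, and pass through $\mathfrak{Re}(\omega)$ and \eqref{eq:cos_sin} to $\cos\vartheta(\omega)\ge 0$. Fixing the principal branch of $\arg$ is a welcome addition.

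The step that fails is your resolution of the ``main obstacle'', and it fails in the paper's proof too. Restricting to $\alpha(\omega)>0$, i.e.\ $Y(\omega,z)\not\equiv 0$, does \emph{not} yield $r(\omega)>0$. By definition $r(\omega)=0$ exactly when $Y(\omega,z)$ is orthogonal in $\mathcal{L}_2(\T_\omega)$ to both $\ell(\omega,z)$ and $\dot{\ell}(\omega,z)$, and nonzero outputs can be. For example, a squarer $y=u^2$ driven by $u=a_u\sin(\varpi t+\phi_u)$ gives $Y=\frac{a_u^2}{2}(1-\cos(2\varpi t+2\phi_u))$. This is orthogonal over $\T_\omega$ to $\sin(\varpi t+\phi_u)$ and $\cos(\varpi t+\phi_u)$. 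The same happens for any steady-state output made only of a constant term and harmonics $k\varpi$ with $k\neq 1$. The paper's proof of Theorem~\ref{th:w-radius}\ref{th:w-radius::r01} does not exclude this. When $r(\omega)=0$ one has $\lambda_1=\lambda_2=0$, and \eqref{th:w-radius::proof:eq:normY} collapses to the identity $\norm{Y}^2=\norm{Y}^2$, which imposes nothing on $u$.

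In that situation $\alpha(\omega)>0$ and $\mathfrak{Re}(\omega)=0$, so the passivity condition holds with equality. Yet $\vartheta(\omega)=\arg(0)$ is undefined and \eqref{eq:cos_sin} cannot be invoked. Your equivalence $\mathfrak{Re}(\omega)\ge 0\iff\cos\vartheta(\omega)\ge 0$ breaks exactly there, and the right-hand side of the corollary is not even meaningful.

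The repair is cheap. Treat $r(\omega)=0$ (equivalently $\mathfrak{Re}(\omega)=\mathfrak{Im}(\omega)=0$) as a separate case in which the inequality holds trivially. Then either adopt the convention $\arg(0)=0$, so that $\vartheta(\omega)=0\in[-\pi/2,\pi/2]$ and the statement stands as written, or add $r(\omega)=0$ to the list of alternatives. Once $r(\omega)>0$ is a case hypothesis rather than a consequence of Theorem~\ref{th:w-radius}\ref{th:w-radius::r01}, the rest of your argument goes through unchanged.
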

\begin{proof}
	Applying~\eqref{eq:def_passivity} in~\eqref{eq:dissipativity_wrt_input} and invoking Theorem~\ref{th:iff}, we obtain the necessary and sufficient condition
	\begin{equation} \label{th:passivity::cond}
		\inner[\big]{\ell(\omega,z)}{Y(\omega,z)}_{\mathcal{L}_2(\T_\omega)} \ge 0.
	\end{equation}
	If $\alpha(\omega) = 0$, then $Y(\omega,z) = 0$ and~\eqref{th:passivity::cond} holds.
	In all the other cases, $\norm*{Y(\omega,z)} > 0$ and the $\omega$-phase is well-defined.
	Recalling~\eqref{eq:cos_sin} and the definition of $\mathfrak{Re}(\omega)$,~\eqref{th:passivity::cond} holds if and only if $\cos(\vartheta(\omega)) \ge 0$.
	Then, we conclude the proof by noting that the cosine is non-negative if and only if $\vartheta(\omega) \in \ps{-\frac{\pi}{2}, \ \frac{\pi}{2}}$.
\end{proof}

\begin{corollary}[Counterclockwise dynamics] \label{th:counterclockwise}
	The system~\eqref{eq:driven_sys} is with counterclockwise input-output dynamics relative to~\eqref{eq:driving_sys}, \emph{i.e.},~\eqref{eq:dissipativity_wrt_input} holds with
	\begin{equation} \label{eq:def_counterclockwise}
		s(u, y) = u^\top \dot{y},
	\end{equation}
	and $u = \ell(\omega, z)$, if and only if for all $\omega \in \mathcal{N}$, $\alpha(\omega) = 0$, or
	\begin{equation*}
		\vartheta(\omega)\in \ps*{-\pi, \ 0}.
	\end{equation*}
\end{corollary}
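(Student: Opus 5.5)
The argument mirrors the proof of Corollary~\ref{th:passivity}, with the supply rate~\eqref{eq:def_counterclockwise} in place of~\eqref{eq:def_passivity}.

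By Theorem~\ref{th:iff}, counterclockwise dynamics relative to~\eqref{eq:driving_sys} is equivalent to
\begin{equation*}
	\int_{\T_\omega} \ell(\omega,z(t))^\top \dot{Y}(\omega,z(t))\, dt \ge 0 \quad \text{for all } \omega \in \mathcal{N},
\end{equation*}
where $\dot{Y}$ is the time derivative of the $\omega$-response along the trajectories on $\mathcal{M}$.

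The key step is to move the derivative from $Y$ onto $\ell$ by integration by parts over one period, as in Lemma~\ref{th:u_udot_0}. Since both $u = \ell(\omega,z)$ and $Y(\omega,z)$ are $T_\omega$-periodic, the boundary term $u(T_\omega)^\top Y(T_\omega) - u(0)^\top Y(0)$ vanishes. This gives
\begin{equation*}
	\inner[\big]{\ell(\omega,z)}{\dot{Y}(\omega,z)}_{\mathcal{L}_2(\T_\omega)} = -\inner[\big]{\dot{\ell}(\omega,z)}{Y(\omega,z)}_{\mathcal{L}_2(\T_\omega)}.
\end{equation*}
The dissipation condition is therefore equivalent to $\inner{\dot{\ell}(\omega,z)}{Y(\omega,z)}_{\mathcal{L}_2(\T_\omega)} \le 0$.

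Next I split into cases. If $\alpha(\omega)=0$, then $Y(\omega,z)\equiv 0$ and the inequality holds trivially. Otherwise $\norm{Y(\omega,z)}_{\mathcal{L}_2(\T_\omega)}>0$. Moreover $\norm{\dot{\ell}(\omega,z)}_{\mathcal{L}_2(\T_\omega)}>0$, since a continuous periodic $u$ with $\dot{u}\equiv 0$ would be constant. I would add that this case must be assumed excluded, or handled like the trivial case. With both norms positive, $\mathfrak{Im}(\omega)$ is well defined and has the sign of the inner product. By~\eqref{eq:cos_sin} and $r(\omega)>0$ from Theorem~\ref{th:w-radius}, the condition becomes $\sin(\vartheta(\omega))\le 0$. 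Under the principal-argument convention this is equivalent to $\vartheta(\omega)\in[-\pi,0]$ (modulo the identification of $\pi$ with $-\pi$).

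The main obstacle is justifying the integration by parts. It needs $Y(\omega,z(t))$ to be absolutely continuous in $t$, or at least its derivative to be meaningful. This holds if $h$ is $C^1$ and $\ell$ is differentiable along the flow. For $h(x,u)$ with a direct feedthrough, $\dot{Y}$ also involves $\dot{\ell}$, but this does not affect the argument. I would state this smoothness as standing: $\varphi$ is smooth by Assumption~\ref{ass:compact_invariant_ultimately_bounded}, and the supply rate itself requires $\dot{y}$ to exist.
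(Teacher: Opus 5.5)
Your proposal is correct and follows the paper's proof step by step. The paper also applies Theorem~\ref{th:iff} and integrates by parts, with the periodic boundary terms cancelling, to reach $\langle \dot{\ell}(\omega,z), Y(\omega,z)\rangle_{\mathcal{L}_2(\T_\omega)} \le 0$; it then dismisses the case $\alpha(\omega)=0$ and reads off $\sin(\vartheta(\omega)) \le 0$ from \eqref{eq:cos_sin} and the definition of $\mathfrak{Im}(\omega)$. Your side remarks (nonvanishing of $\dot{\ell}$, the regularity needed for $\dot{Y}$, and the $\pm\pi$ endpoint convention) cover points the paper passes over silently, and they only sharpen the argument.
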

\begin{proof}
	Substituting~\eqref{eq:def_counterclockwise} into~\eqref{eq:dissipativity_wrt_input} and applying Theorem~\ref{th:iff}, we obtain the following necessary and sufficient condition $\inner[\big]{\ell(\omega,z)}{\dot{Y}(\omega,z)}_{\mathcal{L}_2(\T_\omega)} \ge 0$.
	Since $\ell(\omega,z(t))$ and $Y(\omega,z(t))$ are $T_\omega$-periodic, integration by parts over $\T_\omega$ yields
	\begin{align*}
		\inner[\big]{\ell(\omega,z)}{\dot{Y}(\omega,z)}_{\mathcal{L}_2(\T_\omega)}
		=
		&
		- \inner[\big]{\dot{\ell}(\omega,z)}{Y(\omega,z)}_{\mathcal{L}_2(\T_\omega)}
		\\
		&
		+ \ell(\omega,z(T_\omega))^\top Y(\omega,z(T_\omega)) \\
		&
		- \ell(\omega,z(0))^\top Y(\omega,z(0)) .
	\end{align*}
	Due to periodicity, the boundary terms cancel, and thus
	\begin{equation*}
		\inner[\big]{\ell(\omega,z)}{\dot{Y}(\omega,z)}_{\mathcal{L}_2(\T_\omega)}
		=
		- \inner[\big]{\dot{\ell}(\omega,z)}{Y(\omega,z)}_{\mathcal{L}_2(\T_\omega)},
	\end{equation*}
	which implies
	\begin{equation} \label{th:counterclockwise::cond}
		\inner[\big]{\dot{\ell}(\omega,z)}{Y(\omega,z)}_{\mathcal{L}_2(\T_\omega)} \le 0.
	\end{equation}
	If $\alpha(\omega) = 0$, then $Y(\omega,z) = 0$ and~\eqref{th:passivity::cond} holds.
	In all the other cases, $\norm*{Y(\omega,z)} > 0$ and the $\omega$-phase is well-defined.
	Recalling~\eqref{eq:cos_sin} and the definition of $\mathfrak{Im}(\omega)$,~\eqref{th:counterclockwise::cond} holds if and only if $\sin(\vartheta(\omega)) \le 0$.
	Then, we conclude the proof by noting that the sine is non-positive if and only if $\vartheta(\omega) \in \ps{-\pi, 0}$.
\end{proof}

\begin{corollary}[Output strict passivity] \label{th:output_strictly_passivity}
	The system~\eqref{eq:driven_sys} is output strictly passive relative to~\eqref{eq:driving_sys} with $\gamma_1 > 0$, \emph{i.e.},~\eqref{eq:dissipativity_wrt_input} holds with
	\begin{equation} \label{eq:def_output_strictly_passivity}
		s(u, y) = u^\top y - \gamma_1 y^\top y,
	\end{equation}
	and $u = \ell(\omega, z)$, if and only if, for all $\omega \in \mathcal{N}$, $\alpha(\omega) = 0 $ or
	\begin{equation*}
		\alpha(\omega) \in
		\left( 0, \frac{r(\omega)\cos(\vartheta(\omega))}{\gamma_1} \right]
		\hspace{1ex}
		\land
		\hspace{1ex}
		\vartheta(\omega)\in \pr*{ -\frac{\pi}{2}, \frac{\pi}{2} }.
	\end{equation*}
\end{corollary}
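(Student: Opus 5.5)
Following the pattern of Corollaries~\ref{th:L2_gain} and~\ref{th:passivity}, I substitute the supply rate~\eqref{eq:def_output_strictly_passivity} into~\eqref{eq:dissipativity_wrt_input} and apply Theorem~\ref{th:iff}. This reduces output strict passivity relative to~\eqref{eq:driving_sys} to the condition, for every $\omega \in \mathcal{N}$,
\begin{equation*}
	\inner[\big]{\ell(\omega,z)}{Y(\omega,z)}_{\mathcal{L}_2(\T_\omega)}
	-
	\gamma_1 \norm[\big]{Y(\omega,z)}_{\mathcal{L}_2(\T_\omega)}^2
	\ge 0 .
\end{equation*}
I then translate this inequality into the frequency-response quantities.

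I split into two cases. If $\alpha(\omega)=0$, then $Y(\omega,z)\equiv 0$ by continuity, both terms vanish, and the inequality holds trivially. Otherwise $\norm{Y(\omega,z)}_{\mathcal{L}_2(\T_\omega)}>0$, so $\mathfrak{Re}(\omega)$, $r(\omega)$ and $\vartheta(\omega)$ are well defined. By the definition of $\mathfrak{Re}(\omega)$ and~\eqref{eq:cos_sin},
\begin{equation*}
	\inner{\ell}{Y}_{\mathcal{L}_2(\T_\omega)} = \norm{\ell}_{\mathcal{L}_2(\T_\omega)}\norm{Y}_{\mathcal{L}_2(\T_\omega)}\, r(\omega)\cos(\vartheta(\omega)),
\end{equation*}
where the arguments are suppressed. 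Substituting $\norm{Y}_{\mathcal{L}_2(\T_\omega)} = \alpha(\omega)\norm{\ell}_{\mathcal{L}_2(\T_\omega)}$ from~\eqref{eq:def_w-gain} and dividing by the positive quantity $\alpha(\omega)\norm{\ell}_{\mathcal{L}_2(\T_\omega)}^2$ gives the equivalent scalar condition
\begin{equation*}
	r(\omega)\cos(\vartheta(\omega)) - \gamma_1 \alpha(\omega) \ge 0,
\end{equation*}
that is, $\alpha(\omega) \le r(\omega)\cos(\vartheta(\omega))/\gamma_1$.

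The final step is to show that this scalar condition matches the stated set. Since $\alpha(\omega)>0$ and $\gamma_1>0$, the condition forces $r(\omega)\cos(\vartheta(\omega))>0$. Theorem~\ref{th:w-radius}\ref{th:w-radius::r01} gives $r(\omega)>0$, so $\cos(\vartheta(\omega))>0$, which means $\vartheta(\omega)\in(-\pi/2,\pi/2)$ with the argument taken in its principal branch. Hence $\alpha(\omega)\in(0, r(\omega)\cos(\vartheta(\omega))/\gamma_1]$. Conversely, if $\vartheta(\omega)$ lies in the open interval and $\alpha(\omega)$ lies in that half-open interval, the scalar inequality holds, and reversing the chain of equivalences gives dissipativity.

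The only delicate point is the strictness. The open phase interval here, in contrast with the closed one in Corollary~\ref{th:passivity}, comes from $\alpha(\omega)>0$: at $\cos(\vartheta(\omega))=0$ the inequality would force $\alpha(\omega)\le 0$. The argument relies on $r(\omega)>0$ from Theorem~\ref{th:w-radius} and on the non-triviality assumption $\norm{\ell}_{\mathcal{L}_2(\T_\omega)}\neq 0$, which was already used in Corollary~\ref{th:L2_gain}.
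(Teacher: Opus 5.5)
Your proposal is correct and follows essentially the same route as the paper. You reduce via Theorem~\ref{th:iff} to $\inner{\ell(\omega,z)}{Y(\omega,z)}_{\mathcal{L}_2(\T_\omega)} \ge \gamma_1 \norm{Y(\omega,z)}_{\mathcal{L}_2(\T_\omega)}^2$, treat $\alpha(\omega)=0$ separately, divide by $\norm{\ell(\omega,z)}_{\mathcal{L}_2(\T_\omega)}\norm{Y(\omega,z)}_{\mathcal{L}_2(\T_\omega)}$ to get $\gamma_1\alpha(\omega) \le \mathfrak{Re}(\omega) = r(\omega)\cos(\vartheta(\omega))$, and use $r(\omega)>0$ to force $\cos(\vartheta(\omega))>0$, exactly as the paper does, with your remarks on the principal branch and on why the phase interval is open (unlike Corollary~\ref{th:passivity}) being minor clarifications the paper leaves implicit.
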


\begin{proof}
	Substituting~\eqref{eq:def_output_strictly_passivity} into~\eqref{eq:dissipativity_wrt_input} and using Theorem~\ref{th:iff} yields
	\begin{equation} \label{th:output_strictly_passivity::cond}
		\inner{\ell(\omega,z)}{Y(\omega,z)}_{\mathcal{L}_2(\T_\omega)}
		\ge
		\gamma_1 \norm{Y(\omega,z)}_{\mathcal{L}_2(\T_\omega)}^2.
	\end{equation}
	If $\alpha(\omega) = 0$, then $Y(\omega,z) = 0$ and the inequality~\eqref{th:output_strictly_passivity::cond} becomes $\inner{\ell(\omega,z)}{0}_{\mathcal{L}_2(\T_\omega)} \ge 0$ which is true.
	In all the other cases, $\norm*{Y(\omega,z)} > 0$ and the $\omega$-phase is well-defined.
	Therefore, we can divide both side of~\eqref{th:output_strictly_passivity::cond} by $\norm{\ell(\omega,z)}_{\mathcal{L}_2(\T_\omega)}\norm{Y(\omega,z)}_{\mathcal{L}_2(\T_\omega)}$, obtaining
	\begin{align}
		\gamma_1
		\frac
		{\norm{Y(\omega,z)}_{\mathcal{L}_2(\T_\omega)}}
		{\norm{\ell(\omega,z)}_{\mathcal{L}_2(\T_\omega)}}
		&
		\le
		\frac{\inner{\ell(\omega,z)}{Y(\omega,z)}_{\mathcal{L}_2(\T_\omega)}}{\norm{\ell(\omega,z)}_{\mathcal{L}_2(\T_\omega)}\norm{Y(\omega,z)}_{\mathcal{L}_2(\T_\omega)}}
		\nonumber
		\\
		&
		=
		\mathfrak{Re}(\omega)
		=
		r(\omega) \cos(\vartheta(\omega)),
		\label{th:output_strictly_passivity::cond2}
	\end{align}
	where we used the definition of $\mathfrak{Re}(\omega)$ and~\eqref{eq:cos_sin}.
	Since $\gamma_1$ and $r(\omega)$ are positive, we note that~\eqref{th:output_strictly_passivity::cond2} holds only if $\cos(\vartheta(\omega)) > 0$, and thus $\vartheta(\omega) = \pr*{ -\frac{\pi}{2}, \frac{\pi}{2} }$, see Corollary~\ref{th:passivity}.
	Finally, recalling the definition of $\alpha(\omega)$ and that $\gamma_1 > 0$, we end the proof by noting~\eqref{th:output_strictly_passivity::cond2} holds if and only if
	\begin{equation*}
		0
		<
		\alpha(\omega)
		=
		\frac{\norm{Y(\omega,z)}_{\mathcal{L}_2(\T_\omega)}}{\norm{\ell(\omega,z)}_{\mathcal{L}_2(\T_\omega)}}
		\le
		\frac{r(\omega)\cos(\vartheta(\omega))}{\gamma_1}.
	\end{equation*}
\end{proof}

\begin{corollary}[Input strict passivity] \label{th:input_strictly_passivity}
	The system~\eqref{eq:driven_sys} is input strictly passive relative to~\eqref{eq:driving_sys} with $\gamma_2 > 0$, \emph{i.e.},~\eqref{eq:dissipativity_wrt_input} holds with
	\begin{equation} \label{eq:def_input_strictly_passivity}
		s(u, y) = u^\top y - \gamma_2 u^\top u,
	\end{equation}
	and $u = \ell(\omega, z)$, if and only if, for all $\omega \in \mathcal{N}$,
	\begin{equation*}
		\alpha(\omega)
		\in
		\left[\frac{\gamma_2}{r(\omega)\cos(\vartheta(\omega))}, \ \infty \right)
		\hspace{1ex}
		\land
		\hspace{1ex}
		\vartheta(\omega) \in \pr*{-\frac{\pi}{2}, \frac{\pi}{2}}.
	\end{equation*}
\end{corollary}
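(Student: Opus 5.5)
The argument mirrors the proofs of Corollaries~\ref{th:passivity} and~\ref{th:output_strictly_passivity}. Substituting the supply rate~\eqref{eq:def_input_strictly_passivity} into~\eqref{eq:dissipativity_wrt_input} and invoking Theorem~\ref{th:iff}, input strict passivity relative to~\eqref{eq:driving_sys} becomes equivalent to
\begin{equation*}
	\inner[\big]{\ell(\omega,z)}{Y(\omega,z)}_{\mathcal{L}_2(\T_\omega)}
	\ge
	\gamma_2 \norm[\big]{\ell(\omega,z)}_{\mathcal{L}_2(\T_\omega)}^2
	\quad \forall \omega \in \mathcal{N}.
\end{equation*}

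\textbf{Excluding the degenerate case.} Unlike Corollary~\ref{th:output_strictly_passivity}, the case $\alpha(\omega)=0$ cannot occur. If $Y(\omega,z)=0$, the left-hand side vanishes, while the right-hand side is strictly positive because $\gamma_2>0$ and $u$ is not identically zero. So every $\omega$ satisfying the condition has $\norm{Y(\omega,z)}_{\mathcal{L}_2(\T_\omega)}>0$. Hence $\mathfrak{Re}(\omega)$, $\mathfrak{Im}(\omega)$, $\vartheta(\omega)$ and $r(\omega)\in(0,1]$ are well defined (Theorem~\ref{th:w-radius}), and~\eqref{eq:cos_sin} applies. This also explains why the statement has no alternative ``$\alpha(\omega)=0$''.

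\textbf{Normalizing.} Divide both sides by $\norm{\ell}_{\mathcal{L}_2(\T_\omega)}\norm{Y}_{\mathcal{L}_2(\T_\omega)}>0$. By the definition of $\mathfrak{Re}(\omega)$ and~\eqref{eq:cos_sin}, the left side becomes $r(\omega)\cos(\vartheta(\omega))$. By~\eqref{eq:def_w-gain}, the right side becomes $\gamma_2\norm{\ell}/\norm{Y}=\gamma_2/\alpha(\omega)$. The condition is therefore equivalent to
\begin{equation*}
	r(\omega)\cos(\vartheta(\omega)) \ge \frac{\gamma_2}{\alpha(\omega)} > 0 .
\end{equation*}

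\textbf{Extracting the two conditions.}
\begin{itemize}
	\item Since $r(\omega)>0$, the inequality forces $\cos(\vartheta(\omega))>0$, i.e.\ $\vartheta(\omega)\in\pr{-\frac{\pi}{2},\frac{\pi}{2}}$, taking the principal branch of $\arg$ as in Corollary~\ref{th:passivity}.
	\item With $r(\omega)\cos(\vartheta(\omega))>0$ and $\alpha(\omega)>0$, we can rearrange to $\alpha(\omega)\ge \gamma_2/(r(\omega)\cos(\vartheta(\omega)))$, which is the claimed interval.
\end{itemize}

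\textbf{Converse.} Suppose $\vartheta(\omega)\in\pr{-\frac{\pi}{2},\frac{\pi}{2}}$ and the lower bound on $\alpha$ hold. Then $\alpha(\omega)>0$, so $Y\neq 0$, and the same chain of equivalences run backwards recovers the integral inequality. Theorem~\ref{th:iff} then gives dissipativity.

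\textbf{Main obstacle.} The only delicate point is the nondegeneracy step. The condition must rule out $Y=0$ before any phase quantity is used, since $\vartheta$ is undefined when $Y=0$. For the converse, one must check that the hypothesis itself implicitly guarantees $\alpha(\omega)>0$, so that the division is legitimate.
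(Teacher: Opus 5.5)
Your proposal is correct and follows the paper's proof essentially step for step. You reduce via Theorem~\ref{th:iff} to $\langle \ell(\omega,z), Y(\omega,z)\rangle_{\mathcal{L}_2(\T_\omega)} \ge \gamma_2 \|\ell(\omega,z)\|^2_{\mathcal{L}_2(\T_\omega)}$, rule out $Y=0$, normalize to $\gamma_2/\alpha(\omega) \le r(\omega)\cos(\vartheta(\omega))$, and read off both $\cos(\vartheta(\omega))>0$ and the lower bound on $\alpha(\omega)$. Your separate converse paragraph, which observes that the hypothesis itself forces $\alpha(\omega)>0$, only spells out what the paper packs into its final ``if and only if''.
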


\begin{proof}
	Substituting~\eqref{eq:def_input_strictly_passivity} into~\eqref{eq:dissipativity_wrt_input} and using Theorem~\ref{th:iff} yields
	\begin{equation} \label{th:input_strictly_passivity::cond}
		\inner{\ell(\omega,z)}{Y(\omega,z)}_{\mathcal{L}_2(\T_\omega)}
		\ge
		\gamma_2 \norm{\ell(\omega,z)}_{\mathcal{L}_2(\T_\omega)}^2.
	\end{equation}
	If $\alpha(\omega) = 0$, then $Y(\omega,z) = 0$ and the inequality~\eqref{th:output_strictly_passivity::cond} becomes $\inner{\ell(\omega,z)}{0}_{\mathcal{L}_2(\T_\omega)} \ge \gamma_2 \norm{\ell(\omega,z)}_{\mathcal{L}_2(\T_\omega)}^2 > 0$ which does not hold, because, by assumption, $\norm{\ell(\omega,z)}_{\mathcal{L}_2(\T_\omega)} > 0$.
	In all the other cases, $\norm*{Y(\omega,z)} > 0$ and the $\omega$-phase is well-defined.
	Therefore, we can divide both side of~\eqref{th:input_strictly_passivity::cond} by $\norm{\ell(\omega,z)}_{\mathcal{L}_2(\T_\omega)}\norm{Y(\omega,z)}_{\mathcal{L}_2(\T_\omega)}$, obtaining
	\begin{align}
		\gamma_2
		\frac
		{\norm{\ell(\omega,z)}_{\mathcal{L}_2(\T_\omega)}}
		{\norm{Y(\omega,z)}_{\mathcal{L}_2(\T_\omega)}}
		&
		\le
		\frac{\inner{\ell(\omega,z)}{Y(\omega,z)}_{\mathcal{L}_2(\T_\omega)}}{\norm{\ell(\omega,z)}_{\mathcal{L}_2(\T_\omega)}\norm{Y(\omega,z)}_{\mathcal{L}_2(\T_\omega)}}
		\nonumber
		\\
		&
		=
		\mathfrak{Re}(\omega)
		=
		r(\omega) \cos(\vartheta(\omega)),
		\label{th:input_strictly_passivity::cond2}
	\end{align}
	where we used the definition of $\mathfrak{Re}(\omega)$ and~\eqref{eq:cos_sin}.
	Since $\gamma_2$ and $r(\omega)$ are positive, we note that~\eqref{th:input_strictly_passivity::cond2} holds only if $\cos(\vartheta(\omega)) > 0$, and thus $\vartheta(\omega) = \pr*{ -\frac{\pi}{2}, \frac{\pi}{2} }$, see Corollary~\ref{th:passivity}.
	Then, recalling the definition of $\alpha(\omega)$ and that $\gamma_2 > 0$, we end the proof by noting that~\eqref{th:input_strictly_passivity::cond2} holds if and only if
	\begin{align*}
		0
		<
		\frac
		{\norm{\ell(\omega,z)}_{\mathcal{L}_2(\T_\omega)}}
		{\norm{Y(\omega,z)}_{\mathcal{L}_2(\T_\omega)}}
		&
		\le
		\frac{r(\omega)\cos(\vartheta(\omega))}{\gamma_2}
		\\
		&
		\implies
		\alpha(\omega) \in \left[\frac{\gamma_2}{r(\omega)\cos(\vartheta(\omega))}, \infty \right). 
	\end{align*}
\end{proof}

\begin{corollary}[Very strict passivity]\label{th:very_strictly_passivity}
	The system~\eqref{eq:driven_sys} is very strictly passive relative to~\eqref{eq:driving_sys} with $\gamma_2 > 0$ and $\gamma_1 > 0$, \emph{i.e.},~\eqref{eq:dissipativity_wrt_input} holds with
	\begin{equation}\label{eq:def_very_strictly_passivity}
		s(u, y)
		=
		u^\top y - \gamma_1 y^\top y - \gamma_2 u^\top u,
	\end{equation}
	and $u = \ell(\omega, z)$, if and only if for all $\omega \in \mathcal{N}$,
	\begin{align*}
		\alpha(\omega)
		&
		\in 
		\ps*{
			\frac{r(\omega)\cos(\omega)}{2\gamma_1} - \frac{\sqrt{\delta(\omega)}}{2\gamma_1},
			\frac{r(\omega)\cos(\omega)}{2\gamma_1} + \frac{\sqrt{\delta(\omega)}}{2\gamma_1}
		},
		\\
		\vartheta(\omega)
		&
		\in
		\pr*{-\frac{\pi}{2}, \frac{\pi}{2}},
	\end{align*}
	where $\delta(\omega) \coloneq r(\omega)^2\cos(\omega)^2 - 4\gamma_2\gamma_1 \ge 0$.
\end{corollary}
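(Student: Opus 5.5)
Following the pattern of Corollaries~\ref{th:output_strictly_passivity} and~\ref{th:input_strictly_passivity}, substitute the supply rate~\eqref{eq:def_very_strictly_passivity} into~\eqref{eq:dissipativity_wrt_input} and apply Theorem~\ref{th:iff}. This gives the necessary and sufficient condition
\begin{equation*}
	\inner{\ell(\omega,z)}{Y(\omega,z)}_{\mathcal{L}_2(\T_\omega)}
	\ge
	\gamma_1 \norm{Y(\omega,z)}_{\mathcal{L}_2(\T_\omega)}^2
	+ \gamma_2 \norm{\ell(\omega,z)}_{\mathcal{L}_2(\T_\omega)}^2 .
\end{equation*}

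Two preliminary observations follow immediately.
\begin{itemize}
\item The case $\alpha(\omega)=0$ is excluded. Then $Y=0$, and the condition reads $0 \ge \gamma_2\norm{\ell}^2>0$, which is false because $\ell$ is nontrivial. Hence $\norm{Y}>0$ and $\vartheta(\omega)$ is well defined.
\item The right-hand side is strictly positive. Therefore $\mathfrak{Re}(\omega)>0$, and by~\eqref{eq:cos_sin} together with $r(\omega)>0$ (Theorem~\ref{th:w-radius}) this forces $\cos\vartheta(\omega)>0$, i.e.\ $\vartheta(\omega)\in(-\pi/2,\pi/2)$.
\end{itemize}

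Next, divide both sides by $\norm{\ell}^2_{\mathcal{L}_2(\T_\omega)}>0$ rather than by the mixed product used in the earlier corollaries. This turns the condition into a quadratic inequality in $\alpha=\alpha(\omega)$. The left side becomes $\alpha\,\mathfrak{Re}(\omega) = \alpha\, r(\omega)\cos\vartheta(\omega)$, using the definitions of $\alpha$ and $\mathfrak{Re}$. The condition then reads
\begin{equation*}
	\gamma_1 \alpha^2 - r(\omega)\cos(\vartheta(\omega))\,\alpha + \gamma_2 \le 0 .
\end{equation*}
Since $\gamma_1>0$, this parabola opens upward. It is nonpositive somewhere if and only if its discriminant $\delta(\omega)=r(\omega)^2\cos(\vartheta(\omega))^2-4\gamma_1\gamma_2$ is nonnegative. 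In that case the solution set is exactly the closed interval between the roots $\bigl(r\cos\vartheta \pm\sqrt{\delta}\bigr)/(2\gamma_1)$. Both roots are positive because their product is $\gamma_2/\gamma_1>0$ and their sum is positive, so the interval is consistent with $\alpha>0$. The steps are reversible, which gives the converse direction.

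The statement writes $\cos(\omega)$ where $\cos(\vartheta(\omega))$ is clearly intended, and I will read it that way. The step needing most care is the ``if and only if'' bookkeeping. One must check three things: the condition $\delta(\omega)\ge0$ is part of the characterization rather than an extra hypothesis; the phase condition is implied by the interval membership, since a nonempty interval with positive endpoints requires $r\cos\vartheta>0$; and the equivalence survives division by the positive quantity $\norm{\ell}^2$. No other obstacle is expected, since all ingredients are Theorem~\ref{th:iff}, the definition of $\mathfrak{Re}$, and~\eqref{eq:cos_sin}.
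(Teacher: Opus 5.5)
Your proposal is correct and follows the paper's proof. Both arguments substitute the supply rate, invoke Theorem~\ref{th:iff}, rule out $\alpha(\omega)=0$, deduce $\cos(\vartheta(\omega))>0$, and reduce to $\gamma_1\alpha(\omega)^2 - r(\omega)\cos(\vartheta(\omega))\alpha(\omega) + \gamma_2 \le 0$, which is then solved via its discriminant. Dividing directly by $\norm{\ell(\omega,z)}_{\mathcal{L}_2(\T_\omega)}^2$ is the same as the paper's division by the product of the two norms followed by multiplication by $\alpha(\omega)$, and your checks on root positivity and reversibility only make explicit what the paper leaves implicit.
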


\begin{proof}
	Substituting~\eqref{eq:def_very_strictly_passivity} into~\eqref{eq:dissipativity_wrt_input} and using Theorem~\ref{th:iff} yields
	\begin{align}
		\inner{\ell(\omega,z)}{Y(\omega,z)}_{\mathcal{L}_2(\T_\omega)}
		\ge
		&
		\nonumber
		\\
		&
		\hspace{-10ex}
		\gamma_1 \norm{Y(\omega,z)}_{\mathcal{L}_2(\T_\omega)}^2
		+
		\gamma_2 \norm{\ell(\omega,z)}_{\mathcal{L}_2(\T_\omega)}^2.
		\label{th:very_strictly_passivity::cond}
	\end{align}
	If $\alpha(\omega) = 0$, then $Y(\omega,z) = 0$ and the inequality~\eqref{th:output_strictly_passivity::cond} becomes $\inner{\ell(\omega,z)}{0}_{\mathcal{L}_2(\T_\omega)} \ge \gamma_2 \norm{\ell(\omega,z)}_{\mathcal{L}_2(\T_\omega)}^2 > 0$ which does not hold, because, by assumption, $\norm{\ell(\omega,z)}_{\mathcal{L}_2(\T_\omega)} > 0$.
	In all the other cases, $\norm*{Y(\omega,z)} > 0$ and the $\omega$-phase is well-defined.
	Therefore, we can divide both side of~\eqref{th:input_strictly_passivity::cond} by $\norm{\ell(\omega,z)}_{\mathcal{L}_2(\T_\omega)}\norm{Y(\omega,z)}_{\mathcal{L}_2(\T_\omega)}$, obtaining
	\begin{align}
		\gamma_1
		\frac
		{\norm{Y(\omega,z)}_{\mathcal{L}_2(\T_\omega)}}
		{\norm{\ell(\omega,z)}_{\mathcal{L}_2(\T_\omega)}}
		+
		\gamma_2
		\frac
		{\norm{\ell(\omega,z)}_{\mathcal{L}_2(\T_\omega)}}
		{\norm{Y(\omega,z)}_{\mathcal{L}_2(\T_\omega)}}
		&
		\le
		r(\omega) \cos(\vartheta(\omega)),
		\label{th:very_strictly_passivity::cond2}
	\end{align}
	where we used the definition of $\mathfrak{Re}(\omega)$ and~\eqref{eq:cos_sin}.
	Since $\gamma_1$, $\gamma_2$ and $r(\omega)$ are positive, we note that~\eqref{th:very_strictly_passivity::cond2} holds only if $\cos(\vartheta(\omega)) > 0$, and thus $\vartheta(\omega) = \pr*{ -\frac{\pi}{2}, \frac{\pi}{2} }$, see Corollary~\ref{th:passivity}.
	Then, by multiplying each side of~\eqref{th:very_strictly_passivity::cond2} by $\alpha(\omega)$, we obtain the second order inequality
	\begin{equation*}
		\gamma_1 \alpha(\omega)^2
		-
		r(\omega)\cos(\vartheta(\omega)) \alpha(\omega)
		+
		\gamma_2
		\le
		0
	\end{equation*}
	to be satisfied for real $\alpha(\omega) > 0$ with $\cos(\vartheta(\omega)) > 0$.
	Then, the inequality gives real solutions within the set
	\begin{equation*}
		\ps*{
			\frac{r(\omega)\cos(\vartheta(\omega))}{2\gamma_1} - \frac{\sqrt{\delta(\omega)}}{2\gamma_1},
			\frac{r(\omega)\cos(\vartheta(\omega))}{2\gamma_1} + \frac{\sqrt{\delta(\omega)}}{2\gamma_1}
		}
	\end{equation*}
	provided that $\delta(\omega) \ge 0$, hence the claim.
\end{proof}

\subsection{A Weak Form of the Superposition Principle for Nonlinear Frequency Response}

The nonlinear frequency response~\eqref{eq:def_freq_resp} allows the exact characterization of the $\omega$-gain, $\omega$-phase, and $\omega$-radius of a nonlinear system under the input generated by~\eqref{eq:driving_sys} for each $\omega \in \mathcal{N}$.
A defining trait of the frequency response in an LTI system is that its response to a sum of harmonic inputs can be determined from the response to each harmonic, due to the superposition principle.
Consequently, the magnitude and phase response to multiple inputs can be obtained by summing the individual magnitude and phase contributions of each harmonic component.
Yet, for nonlinear systems, the superposition principle breaks down.
As a result, extending the analysis to sums of inputs of the form~\eqref{eq:driving_sys}, each at a distinct $\omega \in \mathcal{N}$, requires a more in-depth analysis.

A formal nonlinear multi-input frequency response analysis necessitates lifting the state of the signal generator to a higher-dimensional space.
To begin with, we rewrite the signal generator~\eqref{eq:driving_sys} introducing a state indexed by $\omega \in \mathcal{N}$, that is
\begin{equation} \label{eq:full_signal_generator}
\begin{aligned}
	\dot{z}_\omega & = s(\omega,z_\omega), &
	 z_\omega(0) & \in \mathcal{Z}_\omega,
\end{aligned}
\end{equation}
where the trajectory $z_\omega(t)$ evolves independently according to the associated $\omega \in \mathcal{N}$ and the mapping $s$ is the same as in~\eqref{eq:driving_sys}.
The resulting state obtained for every $\omega\in\mathcal{N}$ yields a lifted system over the space $\mathcal{Z}_{\mathcal{N}} \coloneq \bigtimes_{\omega\in\mathcal{N}} \mathcal{Z}_\omega$ where the state is the collection of all individual states of~\eqref{eq:full_signal_generator}.
Therefore, the state is defined as $z_\mathcal{N} \coloneq \pr*{ z_\omega }_{\omega\in\mathcal{N}} \in \mathcal{Z}_{\mathcal{N}}$ whereas the lifted system is described by the mapping $\mathcal{S}(z_\mathcal{N}) \coloneq \pr*{s(\omega,z_\omega)}_{\omega\in\mathcal{N}}$.
To restrict the analysis relative to a finite weighted sum of multiple inputs, where each input corresponds to a specific $\omega\in\mathcal{N}$ and weight $M \in \R^{r\times r}$, we introduce the finite subset $\mathcal{A} \subset \mathcal{N} \times \R^{r\times r}$ containing the parameters and the corresponding weights under analysis.
To streamline the notation, we define $\bar{\mathcal{A}} \coloneq \pc*{ \omega \st (\omega, M) \in \mathcal{A} } \subset \mathcal{N}$.
Then, the restriction of the full lifted system over $\mathcal{A}$ gives rise to a \emph{multi-signal generator} on the space $\mathcal{Z}_\mathcal{A}= \bigtimes_{\omega\in\bar{\mathcal{A}}} \mathcal{Z} _\omega$ described by
\begin{equation} \label{eq:multi_driving_sys}
	\begin{aligned}
		\dot{z}_\mathcal{A}
		&
		= \mathcal{S}(z_\mathcal{A}), \quad z_\mathcal{A}(0) \in \mathcal{Z}_{\mathcal{A}}
		\\
		u_{\mathcal{A}}
		&
		= \sum_{(\omega,M)\in\mathcal{A}} M \ell(\omega, z_\omega).
	\end{aligned}
\end{equation}

The multi-signal generator~\eqref{eq:multi_driving_sys} formally characterizes the finite sum of multiple weighted inputs of the family~\eqref{eq:full_signal_generator}, where each input $\ell(\omega, z_\omega)$ of parameter $\omega$ is weighted by $M$, where $(\omega, M) \in \mathcal{A}$.
To proceed with the multi-input analysis, we first need to extend Assumptions~\ref{ass:z_u_periodic} and~\ref{ass:compact_invariant_ultimately_bounded} to establish guarantees for the interconnected system~\eqref{eq:multi_driving_sys}-\eqref{eq:driven_sys}.
\begin{assumption} \label{ass:z_u_A_periodic}
	The set $\mathcal{A} \in \mathcal{N} \times \R$ is such that the state $z_{\mathcal{A}}$ and output $u_{\mathcal{A}}$ of~\eqref{eq:multi_driving_sys} are continuous and periodic, with period $T_{\mathcal{A}} \in \R_+$.
\end{assumption}
\begin{assumption} \label{ass:multi_input_manifold}
	The set $\mathcal{A} \in \mathcal{N} \times \R$ is such that, for every initial condition $(z_\mathcal{A}(0), x(0)) \in \mathcal{Z}_\mathcal{A} \times \mathcal{X}$ the subset $\mathcal{X}$ of the state space of the system~\eqref{eq:driven_sys}, the trajectories of the interconnected system~\eqref{eq:multi_driving_sys}-\eqref{eq:driven_sys} are ultimately bounded, and remain in $\mathcal{Z}_\mathcal{A} \times \mathcal{X}$.
	Moreover, there exists a (sufficiently) smooth mapping $\varphi_\mathcal{A} : \mathcal{Z}_\mathcal{A} \to \R^n$ such that the submanifold
	\begin{equation} \label{eq:inv_manifold_multi}
		\mathcal{M}_\mathcal{A}
		\coloneq
		\set[\big]{
			(z_\mathcal{A}, x) \in \mathcal{Z}_\mathcal{A} \times \mathcal{X} \st x = \varphi_\mathcal{A}(z)
		}
	\end{equation}
	is well-defined and invariant under~\eqref{eq:multi_driving_sys}-\eqref{eq:driven_sys}.
	That is, if $(z_\mathcal{A}(\bar{t}), x(\bar{t})) \in \mathcal{M}_\mathcal{A}$, then $(z_\mathcal{A}(t), x(t)) \in \mathcal{M}_\mathcal{A}$ for all $t \ge \bar{t}$.
\end{assumption}
Building on the preceding rationale and under these assumptions, we now introduce a definition of multi $\omega$-response that explicitly captures the finite sums of inputs in the sense of~\eqref{eq:multi_driving_sys}.
\begin{definition}[\emph{Multi $\omega$-response}] \label{def:multi_w-response}
	Given the nonlinear system~\eqref{eq:driven_sys} the \emph{multi $\omega$-response} to the input~\eqref{eq:multi_driving_sys} is given by
	\begin{equation}\label{eq:multi_w-response}
		Y_\mathcal{A}(z_\mathcal{A})
		\coloneq
		h \pr*{ \varphi_\mathcal{A}(z_\mathcal{A}), u_\mathcal{A} }.
	\end{equation}
\end{definition}
Recall that, by Assumptions~\ref{ass:z_u_A_periodic} and~\ref{ass:multi_input_manifold}, $z_\mathcal{A}$, $u_\mathcal{A}$ and $Y_{\mathcal{A}}(z_{\mathcal{A}})$ are continuous $T_{\mathcal{A}}$-periodic functions.
Hence, $u_\mathcal{A}$ and $Y_{\mathcal{A}}(z_{\mathcal{A}})$ are bounded and square-integrable.
Then Theorem~\ref{th:weak_super:gain} formalizes a relation between the $\mathcal{L}_2$-norm over $\T_\mathcal{A}$ and the $\omega$-gain $\alpha(\omega)$ associated with every $\omega\in\bar{\mathcal{A}}$.

\begin{theorem} \label{th:weak_super:gain}
	Let $\mathcal{A} \subset \mathcal{N} \times \R^{r\times r}$ be such that there exists a $\bar{\omega} \in \bar{\mathcal{A}}$ such that $\alpha(\bar{\omega}) > 0$.
	Then, there exist one $b_{\omega, M} \in \R$ for each $(\omega, M) \in \mathcal{A}$ such that
	\begin{equation} \label{eq:gain_bound}
	\norm[\big]{Y_\mathcal{A}(z_\mathcal{A})}_{\mathcal{L}_2 \pr*{\T_\mathcal{A}}}
	\le
	\sum_{(\omega, M) \in \mathcal{A}}
	b_{\omega, M} \norm*{M} \alpha(\omega).
	\end{equation}
\end{theorem}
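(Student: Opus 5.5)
The inequality~\eqref{eq:gain_bound} only asks for the \emph{existence} of real coefficients $b_{\omega,M}$, with no sign or uniformity constraint. I therefore plan a constructive, essentially algebraic argument rather than an estimate on the nonlinear dynamics. The first step records that the left-hand side is a finite nonnegative number. By Assumptions~\ref{ass:z_u_A_periodic} and~\ref{ass:multi_input_manifold}, $Y_\mathcal{A}(z_\mathcal{A})$ is continuous and $T_\mathcal{A}$-periodic, hence bounded on the compact interval $\T_\mathcal{A}$. Thus $c_\mathcal{A} \coloneq \norm{Y_\mathcal{A}(z_\mathcal{A})}_{\mathcal{L}_2(\T_\mathcal{A})} \in [0,\infty)$. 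Similarly, each $\alpha(\omega)$, $\omega\in\bar{\mathcal{A}}$, is a well-defined nonnegative real by Definition~\ref{def:w_gain}, since $\norm{\ell(\omega,z)}_{\mathcal{L}_2(\T_\omega)}>0$ by the nontriviality assumption on the input.

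The second step isolates the set
\[
\mathcal{A}_+ \coloneq \set{(\omega,M)\in\mathcal{A} \st \norm{M}\,\alpha(\omega) > 0}.
\]
The hypothesis that some $\bar\omega\in\bar{\mathcal{A}}$ has $\alpha(\bar\omega)>0$ is exactly what should make $\mathcal{A}_+$ nonempty. Indeed, for the pair $(\bar\omega,\bar M)\in\mathcal{A}$ one has $\norm{\bar M}\alpha(\bar\omega)>0$ as soon as $\bar M\neq 0$. A pair with zero weight contributes nothing to $u_\mathcal{A}$ in~\eqref{eq:multi_driving_sys}, so it can be removed from $\mathcal{A}$ without changing $Y_\mathcal{A}$; I will use this reduction to assume all weights are nonzero.

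The third step defines the coefficients explicitly. For pairs in $\mathcal{A}\setminus\mathcal{A}_+$, set $b_{\omega,M}=0$. For pairs in $\mathcal{A}_+$, set
\[
b_{\omega,M} \coloneq \frac{c_\mathcal{A}}{\operatorname{card}(\mathcal{A}_+)\,\norm{M}\,\alpha(\omega)},
\]
which is legitimate because $\mathcal{A}$ is finite and the denominator is positive. Summing over $\mathcal{A}$ then gives exactly $c_\mathcal{A}$ on the right-hand side of~\eqref{eq:gain_bound}, so the inequality holds, in fact with equality. I will also note that choosing every $b_{\omega,M}$ nonnegative keeps the bound meaningful. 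Moreover, in the single-input case $\mathcal{A}=\{(\omega,I)\}$, the choice $b_{\omega,I}=\norm{\ell(\omega,z)}_{\mathcal{L}_2(\T_\omega)}$ recovers Definition~\ref{def:w_gain} with equality. This consistency check is worth stating.

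The main obstacle is the degenerate situation in which every pair $(\omega,M)$ with $\alpha(\omega)>0$ carries $M=0$. In that case $\mathcal{A}_+$ is empty and the right-hand side vanishes for any choice of $b$'s. I expect to handle it with the reduction above: discarding zero-weight pairs, the hypothesis of the theorem should be read as referring to the reduced $\mathcal{A}$. If all weights vanish, then $u_\mathcal{A}\equiv 0$, and the bound holds only when $Y_\mathcal{A}\equiv 0$; this may have to be excluded as an implicit nondegeneracy assumption on $\mathcal{A}$. I would try first to argue that this case is ruled out by the standing nontriviality assumption on the input, now applied to $u_\mathcal{A}$.
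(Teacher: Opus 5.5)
Your argument is correct under the nondegeneracy assumption discussed in the second paragraph, but it takes a different route from the paper. You use the fact that the $b_{\omega,M}$ are only required to exist, with no sign, uniformity, or independence from $\mathcal{A}$. You simply solve for them, so that the right-hand side of~\eqref{eq:gain_bound} equals $\|Y_\mathcal{A}(z_\mathcal{A})\|_{\mathcal{L}_2(\T_\mathcal{A})}$ exactly. The only ingredients are finiteness of the left-hand side and $\mathcal{A}_+\neq\varnothing$.

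The paper instead builds an analytic chain:
\begin{itemize}
\item it asserts constants $K_{\omega,M}$ with a pointwise bound $-\varsigma_\mathcal{A}\le Y_\mathcal{A}\le\varsigma_\mathcal{A}$, where $\varsigma_\mathcal{A}=\sum K_{\omega,M}|M|\,Y(\omega,z_\omega)$;
\item it applies the triangle inequality;
\item it rescales $\mathcal{L}_2(\T_\mathcal{A})$ norms to $\mathcal{L}_2(\T_\omega)$ norms through factors $W_\omega$;
\item it uses $\|Y(\omega,z_\omega)\|_{\mathcal{L}_2(\T_\omega)}=\alpha(\omega)\|\ell(\omega,z_\omega)\|_{\mathcal{L}_2(\T_\omega)}\le\sigma_\mathcal{A}\alpha(\omega)$, arriving at $b_{\omega,M}\ge\sigma_\mathcal{A}Q_{\omega,M}W_\omega$.
\end{itemize}
The merit of that route is that it tries to express the coefficients through the individual channels, which gives the result its ``weak superposition'' flavour. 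However, the $K_{\omega,M}$ are chosen with $Y_\mathcal{A}$ in hand, so its coefficients are no less $\mathcal{A}$-dependent than yours. Moreover, the pointwise domination is justified only by boundedness, which does not suffice: in general the $Y(\omega,z_\omega)$ change sign, and they can all vanish where $Y_\mathcal{A}$ does not. Your route is rigorous, and it makes transparent that, as stated, the theorem says nothing beyond finiteness of the left-hand side. The paper's chain would give a substantive bound only if a domination $|Y_\mathcal{A}|\le\sum K_{\omega,M}|M|\,|Y(\omega,z_\omega)|$ were added as a hypothesis.

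You correctly identify the degenerate case as the crux, but it cannot be argued away, and your plan to invoke nontriviality of $u_\mathcal{A}$ will fail. Take $\dot x=-x$, $y=\max\{0,|u|-1\}$, and harmonic generators as in Example~\ref{ex:base} with $\omega_1=(1,1)$, $\omega_2=(2,1)$, $\bar\omega=(3,2)$ and scalar weights $1,1,0$. Then $\alpha(\omega_1)=\alpha(\omega_2)=0$ and $\alpha(\bar\omega)>0$, so the hypothesis holds literally and $u_\mathcal{A}\not\equiv 0$. Yet $u_\mathcal{A}(0)=2$ gives $Y_\mathcal{A}\not\equiv0$, while the right-hand side of~\eqref{eq:gain_bound} vanishes for every choice of $b_{\omega,M}$. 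So the statement is false as literally written. The fix is to assume explicitly that some $(\bar\omega,\bar M)\in\mathcal{A}$ has $\bar M\neq 0$ and $\alpha(\bar\omega)>0$, i.e.\ $\mathcal{A}_+\neq\varnothing$, which is surely the intended meaning. The paper's proof breaks silently in the same case, since there $\varsigma_\mathcal{A}\equiv 0$. With that assumption your construction is complete. You should drop the step that discards zero-weight pairs: it is not needed, it can remove $\bar\omega$ itself, and it can shorten the common period $T_\mathcal{A}$ and thereby change the left-hand side.
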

\begin{proof}
	By the boundedness of the multi $\omega$-response, there always exists a $K_{\omega, M}$, for every $(\omega, M) \in \mathcal{A}$, such that
	\begin{equation}\label{eq:proof_bounded-Y}
		-\varsigma_\mathcal{A}
		\le
		Y_\mathcal{A}(z_\mathcal{A})
		\le
		\varsigma_\mathcal{A} 
	\end{equation}
	where
	\begin{equation*}
		\varsigma_\mathcal{A}
		\coloneq
		\sum_{(\omega, M) \in \mathcal{A}}
		K_{\omega, M} \norm*{M} Y(\omega,z_\omega)
		\in \mathcal{L}_2 \pr*{\T_\mathcal{A}, \R^r}.
	\end{equation*}
	For all $\omega \in \bar{\mathcal{A}}$, there exist a positive $W_\omega \in \R^+$ such that
	\begin{equation*}
		\norm*{Y(\omega,z_\omega)}_{\mathcal{L}_2 \pr*{\T_\mathcal{A}}}
		=
		W_\omega \norm*{Y(\omega,z_\omega)}_{\mathcal{L}_2 \pr*{\T_\omega}},
	\end{equation*}
	because $Y_\mathcal{A}(z_\mathcal{A})$ and $Y(\omega,z_\omega)$ are bounded by assumptions.
	By using the matrix induced $2$-norm, we define $Q_{\omega, M} \coloneq \norm[\big]{K_{\omega, M}}$.
	By applying the triangle inequality and then invoking the Cauchy-Schwarz inequality, we deduce that the $\mathcal{L}_2$-norm of $Y_\mathcal{A}(z_\mathcal{A})$ over $\T_{\mathcal{A}}$ satisfies
	\begin{align*}
		&
		\norm[\big]{Y_\mathcal{A}(z_\mathcal{A})}_{\mathcal{L}_2 \pr*{\T_\mathcal{A}}}
		\\
		&
		\le
		\Bigg|\sum_{(\omega,M) \in \mathcal{A}}
		K_{\omega, M} \norm*{M} Y(\omega,z_\omega)\Bigg|_{\mathcal{L}_2 \pr*{\T_\mathcal{A}}}
		\hspace{-18ex}
		\\
		&
		\le
		\sum_{(\omega,M) \in \mathcal{A}}
		\norm[\big]{K_{\omega, M}}
		\norm[\big]{M}
		\norm[\big]{ Y(\omega,z_\omega)}_{\mathcal{L}_2 \pr*{\T_\mathcal{A}}}\\
		&
		=
		\sum_{(\omega,M) \in \mathcal{A}}
		Q_{\omega, M} W_\omega
		\norm[\big]{M}
		\norm[\big]{Y(\omega,z_\omega)}_{\mathcal{L}_2 \pr*{\T_\omega}} .
	\end{align*}
	By defining $\sigma_{\mathcal{A}} \coloneq \sup_{\omega \in \bar{\mathcal{A}}} \, \norm[\big]{\ell(\omega,z_\omega)}_{\mathcal{L}_2 \pr*{\T_\omega}}$, we have
	\begin{equation*}
		\norm[\big]{Y(\omega,z_\omega)}_{\mathcal{L}_2 \pr*{\T_\omega}}
		\le
		\sigma_{\mathcal{A}}
		\frac
		{ \norm[\big]{Y(\omega,z_\omega)}_{\mathcal{L}_2 \pr*{\T_\omega}} }
		{ \norm[\big]{\ell(\omega,z_\omega)}_{\mathcal{L}_2 \pr*{\T_\omega}} }
		=
		\sigma_{\mathcal{A}} \alpha(\omega).
	\end{equation*}
	Therefore, we conclude that
	\begin{equation*}
		\norm[\big]{Y_\mathcal{A}(z_\mathcal{A})}_{\mathcal{L}_2 \pr*{\T_\mathcal{A}}}
		\le
		\sum_{(\omega,M) \in \mathcal{A}}
		\sigma_{\mathcal{A}} Q_{\omega, M} W_\omega \norm[\big]{M} \alpha(\omega) 
	\end{equation*}
	which implies~\eqref{eq:gain_bound} for every $b_{\omega,M} \ge \sigma_{\mathcal{A}} Q_{\omega, M} W_\omega$.
\end{proof}

\begin{theorem}
	Let $\mathcal{A} \subset \mathcal{N} \times \R^{r\times r}$ be such that $\alpha(\bar{\omega}) > 0$ for all $\bar{\omega} \in \bar{\mathcal{A}}$.
	Then, there exist one $c_{\omega, M} \in \R$ and one $d_{\omega, M} \in \R$ for each $(\omega, M) \in \mathcal{A}$ such that
	\begin{align}
		\norm*{ \inner[\big]{u_\mathcal{A}}{Y_\mathcal{A}}_{\mathcal{L}_2 \pr*{\T_\mathcal{A}}} }
		&
		\le
		\sum_{(\omega, M) \in \mathcal{A}}
		c_{\omega, M} \norm[\big]{ r(\omega)\cos(\vartheta(\omega)) },
		\label{eq:phase_bound_u}
		\\
		\norm*{ \inner[\big]{\dot{u}_\mathcal{A}}{Y_\mathcal{A}}_{\mathcal{L}_2 \pr*{\T_\mathcal{A}}} }
		&
		\le
		\sum_{(\omega, M) \in \mathcal{A}}
		d_{\omega, M} \norm[\big]{ r(\omega) \sin(\vartheta(\omega)) },
		\label{eq:phase_bound_udot}
	\end{align}
\end{theorem}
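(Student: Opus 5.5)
The plan mirrors the proof of Theorem~\ref{th:weak_super:gain}. The key observation is that, by~\eqref{eq:cos_sin} and the definitions of $\mathfrak{Re}(\omega)$ and $\mathfrak{Im}(\omega)$, for every $\omega\in\bar{\mathcal{A}}$ we have
\begin{align*}
	r(\omega)\cos(\vartheta(\omega))
	&=
	\frac{\inner{\ell(\omega,z_\omega)}{Y(\omega,z_\omega)}_{\mathcal{L}_2(\T_\omega)}}{\norm{\ell(\omega,z_\omega)}_{\mathcal{L}_2(\T_\omega)}\norm{Y(\omega,z_\omega)}_{\mathcal{L}_2(\T_\omega)}},
	\\
	r(\omega)\sin(\vartheta(\omega))
	&=
	\frac{\inner{\dot{\ell}(\omega,z_\omega)}{Y(\omega,z_\omega)}_{\mathcal{L}_2(\T_\omega)}}{\norm{\dot{\ell}(\omega,z_\omega)}_{\mathcal{L}_2(\T_\omega)}\norm{Y(\omega,z_\omega)}_{\mathcal{L}_2(\T_\omega)}}.
\end{align*}
These expressions are well defined because $\alpha(\omega)>0$ for all $\omega\in\bar{\mathcal{A}}$; this is why the hypothesis is now required for every element and not only for one. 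The goal is therefore to bound the multi-input inner products by a weighted sum of the single-input inner products $\inner{\ell(\omega,z_\omega)}{Y(\omega,z_\omega)}$ and $\inner{\dot\ell(\omega,z_\omega)}{Y(\omega,z_\omega)}$. Then the normalizing factors $\norm{\ell}\norm{Y}$ and $\norm{\dot\ell}\norm{Y}$, which are positive and finite, can be absorbed into the constants $c_{\omega,M}$ and $d_{\omega,M}$.

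\textbf{Expanding the input.} First I would expand $u_\mathcal{A}=\sum M\ell(\omega,z_\omega)$, and likewise $\dot u_\mathcal{A}=\sum M\dot\ell(\omega,z_\omega)$, by linearity of the inner product. This gives
\[
	\inner{u_\mathcal{A}}{Y_\mathcal{A}}_{\mathcal{L}_2(\T_\mathcal{A})}=\sum_{(\omega,M)\in\mathcal{A}}\inner{M\ell(\omega,z_\omega)}{Y_\mathcal{A}}_{\mathcal{L}_2(\T_\mathcal{A})},
\]
and the triangle inequality then bounds the absolute value termwise.

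\textbf{Replacing the multi-response.} Next, in each term I would replace $Y_\mathcal{A}$ by a single-frequency quantity, in the same spirit as the comparison~\eqref{eq:proof_bounded-Y}. Specifically, I would assert the existence of constants $K_{\omega,M}$, obtained from boundedness, such that
\[
	\norm{\inner{M\ell(\omega,z_\omega)}{Y_\mathcal{A}}}\le K_{\omega,M}\norm{M}\,\norm{\inner{\ell(\omega,z_\omega)}{Y(\omega,z_\omega)}_{\mathcal{L}_2(\T_\mathcal{A})}}.
\]
I would then convert the $\T_\mathcal{A}$ inner product to a $\T_\omega$ one through a factor $W_\omega$. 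Since both signals are $T_\omega$-periodic, this is a natural step when $T_\mathcal{A}$ is an integer multiple of $T_\omega$, which is implied by the periodicity in Assumption~\ref{ass:z_u_A_periodic}. The same reasoning applies to the $\dot u$ term.

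\textbf{Main obstacle.} The hard step is justifying this comparison of $Y_\mathcal{A}$ against $Y(\omega,z_\omega)$ inside an inner product. Without superposition there is no genuine inequality of this kind. Moreover, the single-frequency inner product could vanish (for instance when $\cos\vartheta(\omega)=0$) while the multi-input one does not.

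What I would try first is to treat the constants as existential and exploit the freedom the statement leaves. For each term, if the single inner product is nonzero, define
\[
	K_{\omega,M}\coloneq\frac{\norm{\inner{M\ell}{Y_\mathcal{A}}}}{\norm{M}\,\norm{\inner{\ell}{Y}}},
\]
which gives equality. The degenerate zero case would have to be handled separately, by an extra hypothesis or by noting that the total bound only needs to hold for the sum. I would use the same device for $\dot u$.

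\textbf{Collecting constants.} Finally I would set $c_{\omega,M}$ equal to $K_{\omega,M}W_\omega\norm{M}\norm{\ell}\norm{Y}$ and $d_{\omega,M}$ equal to the analogous product with $\norm{\dot\ell}$, which yields~\eqref{eq:phase_bound_u} and~\eqref{eq:phase_bound_udot}.
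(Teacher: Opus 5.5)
Your route matches the paper's: expand $u_\mathcal{A}$, compare each term with a single-tone inner product through existential constants, pass from $\T_\mathcal{A}$ to $\T_\omega$, and absorb the normalizations. However, the degenerate case you leave open is a genuine gap, and no extra argument can close it, because the inequality is false there.

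If $r(\omega)\cos\vartheta(\omega)=0$ for every $\omega\in\bar{\mathcal{A}}$, the right-hand side of \eqref{eq:phase_bound_u} is $0$ for every choice of the $c_{\omega,M}$, whereas the left-hand side need not vanish. Concretely, take $\dot x=Ax+Bu$, $y=Cx+u^2$, with $A$ Hurwitz. Choose $H(s)=C(sI-A)^{-1}B$ purely imaginary and nonzero at $\varpi=1$ and $\varpi=2$, e.g.\ $H(s)=\frac{3}{s+1}+\frac{5}{(s+1)^2}-\frac{25}{2(s+2)^2}$. Let $\mathcal{A}$ consist of the tones $\sin t$ and $\cos 2t$ with unit weights.
\begin{itemize}
\item For each tone, the linear part contributes nothing to $\langle \ell(\omega,z),Y(\omega,z)\rangle_{\mathcal{L}_2(\T_\omega)}$, because $H(j\varpi)$ is imaginary.
\item Also $\int_{\T_\omega}\ell^3\,dt=0$, so $r(\omega)\cos\vartheta(\omega)=\mathfrak{Re}(\omega)=0$.
\item Meanwhile $\alpha(\omega)>0$, $r(\omega)>0$ and $\vartheta(\omega)=\pm\pi/2$ are all well defined.
\end{itemize}
Yet $\langle u_\mathcal{A},Y_\mathcal{A}\rangle_{\mathcal{L}_2([0,2\pi])}=\int_0^{2\pi}(\sin t+\cos 2t)^3\,dt=-3\pi/2$, since the linear part again contributes zero.

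You have not missed an idea from the paper; its proof has the same hole. It is hidden in the constant $P_{\omega,M}$, which bounds the cross terms $\langle\ell(\tilde\omega,z_{\tilde\omega}),Y(\omega,z_\omega)\rangle$ by the diagonal term $\langle\ell(\omega,z_\omega),Y(\omega,z_\omega)\rangle$. Such a constant can exist only if the diagonal term is nonzero. The paper's opening step is also invalid: it passes from the pointwise sandwich $-\varsigma_\mathcal{A}\le Y_\mathcal{A}\le\varsigma_\mathcal{A}$ to $|\langle u_\mathcal{A},Y_\mathcal{A}\rangle|\le|\langle u_\mathcal{A},\varsigma_\mathcal{A}\rangle|$, which fails for a sign-changing $u_\mathcal{A}$. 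Your per-term ratio for $K_{\omega,M}$ and your integer-multiple argument for $W_\omega$ avoid those two steps.

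The statement needs an additional hypothesis, e.g.\ $\cos\vartheta(\omega^\star)\neq0$ for some $\omega^\star\in\bar{\mathcal{A}}$. For \eqref{eq:phase_bound_udot} it likewise needs $\sin\vartheta(\omega^{\star\star})\neq0$ for some $\omega^{\star\star}$. Under this hypothesis, your remark that only the sum matters finishes the proof in one line. Take $c_{\omega^\star,M^\star}=|\langle u_\mathcal{A},Y_\mathcal{A}\rangle|/|r(\omega^\star)\cos\vartheta(\omega^\star)|$ and set all other constants to zero. This also shows that, once the constants may depend on all the data, the expansion and the $K_{\omega,M}$, $W_\omega$ bookkeeping carry no quantitative content.
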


\begin{proof}
	Since the proofs of~\eqref{eq:phase_bound_u} and~\eqref{eq:phase_bound_udot} follow the same rationale, we proceed by proving only~\eqref{eq:phase_bound_u}.
	Using the bounds~\eqref{eq:proof_bounded-Y} on the multi $\omega$-response, we have that the Euclidean norm of $\inner[\big]{u_\mathcal{A}}{Y_\mathcal{A}}_{\mathcal{L}_2 \pr*{\T_\mathcal{A}}}$ satisfies
	\begin{align*}
		\norm*{ \inner[\big]{u_\mathcal{A}}{Y_\mathcal{A}}_{\mathcal{L}_2 \pr*{\T_\mathcal{A}}} }
		\le
		\norm*{
			\inner[\big]
			{u_\mathcal{A}}
			{\sum_{(\omega, M) \in \mathcal{A}} K_{\omega, M} \norm*{M} Y(\omega,z_\omega)}
			_{\mathcal{L}_2 \pr*{\T_\mathcal{A}}}.
		}
	\end{align*}
	Then, by substituting $u_\mathcal{A}$ in the series form in~\eqref{eq:multi_driving_sys} and invoking the Cauchy-Schwarz inequality, the right-hand side of the above inequality can be further bounded as follows
	\begin{align*}
		&
		\norm*{
			\inner[\big]
			{u_\mathcal{A}}
			{\sum_{(\omega, M) \in \mathcal{A}} K_{\omega, M} \norm*{M} Y(\omega,z_\omega)}
			_{\mathcal{L}_2 \pr*{\T_\mathcal{A}}}
		}
		\\
		&
		\hspace{3ex}
		\le
		\sum_{(\omega, M) \in \mathcal{A}}
		\tilde{K}_{\omega, M}
		\sum_{(\tilde{\omega},\tilde{M})\in\mathcal{A}}
		\norm*{\tilde{M}}
		\norm*{ \inner[\big]{ \ell(\tilde{\omega}, z_{\tilde{\omega}})}{Y(\omega,z_\omega)}_{\mathcal{L}_2 \pr*{\T_\mathcal{A}}} },
	\end{align*}
	where $\tilde{K}_{\omega, M} \coloneq \norm*{K_{\omega, M}} \norm*{M}$.
	Since the inner product above is always bounded, then for all $(\tilde{\omega},\tilde{M}) \in \mathcal{A}$ there always exists a $P_{\omega,M}\in\R$ such that
	\begin{align*}
		&
		\norm*{\tilde{M}}
		\norm*{ \inner[\big]{ \ell(\tilde{\omega}, z_{\tilde{\omega}})}{Y(\omega,z_\omega)}_{\mathcal{L}_2 \pr*{\T_\mathcal{A}}} }
		\\
		&
		\hspace{20ex}
		\le
		\frac{P_{\omega,M}}{\tilde{K}_{\omega, M}}
		\norm*{ \inner[\big]{ \ell(\omega, z_{\omega})}{Y(\omega,z_\omega)}_{\mathcal{L}_2 \pr*{\T_\mathcal{A}}} }.
	\end{align*}
	For all $\omega \in \bar{\mathcal{A}}$, there exist a positive $R_\omega \in \R^+$ such that
	\begin{equation*}
		\inner[\big]{ \ell(\tilde{\omega}, z_{\tilde{\omega}})}{Y(\omega,z_\omega)}_{\mathcal{L}_2 \pr*{\T_\mathcal{A}}}
		=
		R_\omega \inner[\big]{ \ell(\tilde{\omega}, z_{\tilde{\omega}})}{Y(\omega,z_\omega)}_{\mathcal{L}_2 \pr*{\T_\omega}},
	\end{equation*}
	because $Y_\mathcal{A}(z_\mathcal{A})$ and $Y(\omega,z_\omega)$ are bounded by assumptions.
	Therefore, using the properties above, the cardinality $\norm*{\mathcal{A}}$, and defining the constant
	\begin{equation*}
		\sigma_{\mathcal{A}}
		\coloneq
		\sup_{\omega \in \bar{\mathcal{A}}}
		\pr*{
			\norm[\big]{\ell(\omega,z_\omega)}_{\mathcal{L}_2 \pr*{\T_\omega}} \norm[\big]{Y(\omega,z)}_{\mathcal{L}_2 \pr*{\T_\omega}}
		},
	\end{equation*}
	we have that
	\begin{align*}
		\norm*{ \inner[\big]{u_\mathcal{A}}{Y_\mathcal{A}}_{\mathcal{L}_2 \pr*{\T_\mathcal{A}}} }
		\hspace{-5ex}
		&
		\\
		&
		\le
		\norm*{\mathcal{A}} R_\omega P_{\omega,M}
		\norm*{ \inner[\big]{ \ell(\omega, z_{\omega})}{Y(\omega,z_\omega)}_{\mathcal{L}_2 \pr*{\T_\omega}} }
		\\
		&
		\le
		\norm*{\mathcal{A}} R_\omega P_{\omega,M} \sigma_{\mathcal{A}} \norm*{ r(\omega)\cos(\vartheta(\omega)) },
	\end{align*}
	which implies~\eqref{eq:phase_bound_u} for every $c_{\omega, M} \ge \norm*{\mathcal{A}} R_\omega P_{\omega,M} \sigma_{\mathcal{A}}$.
\end{proof}

\section{Loop Shaping Design for Nonlinear Systems}
\condVSpace{-0.25em}
\label{sec: loop-shaping}

The essence of the loop-shaping synthesis in the frequency domain is that closed-loop performances can be attained by designing a feedback law based upon prescribed steady-state characteristics of the open-loop frequency response, see, \emph{e.g.},~\cite[Ch.~2]{skogestad2005multivariable} and~\cite{mcFarlane1992loop}.
Following a similar idea, loop-shaping in the nonlinear setting can be achieved by constructing a parameterized feedback law yielding a parameterized $\omega$-response.
To this end, consider nonlinear systems modeled by equations of the form
\begin{equation}\label{eq:closed_loop}
\begin{aligned}
	\dot{x}_c & = f_c(x_c,u,v), &
	y_c & = h(x_c,u),
\end{aligned}
\end{equation}
where $f_c$ is such that $f_c(x_c,u,0) = f(x,u)$, and $v\in\R^{\bar{m}}$ ($\bar{m} \in \N$) is a control input that is used to model a state feedback controller. 
The control input can be cast as a dynamic state feedback of the form
\begin{equation} \label{eq:feedback}
\begin{aligned}
	\dot{x}_p &= f_p(x_p,x_c,u), &
	v &= \kappa(x_p,x_c,u),
\end{aligned}
\end{equation}
where $f_p : \R^{\bar{n}} \times \R^n \times \R^{m} \to \R^{\bar{n}} $ and $\kappa : \R^{\bar{n}} \times \R^n \times \R^{m} \to \R^{\bar{m}} $ is a nonlinear function used to achieve stabilization with prescribed closed-loop characteristics.
The forced response of the feedback system~\eqref{eq:driven_sys} to the input~\eqref{eq:driving_sys} is thus given by the cascade feedback system
\begin{equation}\label{eq:cascade_sys_loop}
\begin{aligned}
	\dot{z} & = s(\omega, z), \\
	\dot{x}_c & = f_c(x_c,\ell(\omega, z),v), \\
	\dot{x}_p &= f_p(x_p,x_c,\ell(\omega, z)), \\
	v &= \kappa(x_p,x_c,\ell(\omega, z)),\\
	y_c & = h(x_c,\ell(\omega, z)),
\end{aligned}
\end{equation}
associated with the invariant manifold given by the graph $$\mathcal{M}_c \coloneq \set*{(\omega, z, x_c,x_p) \st x_c = \varphi_c(\omega, z), \ x_p = \varphi_p(\omega, z) }$$ of (sufficiently) smooth mappings $x_c = \varphi_c(\omega, z)$ and $x_p = \varphi_p(\omega, z)$.
Thereby, the $\omega$-response to the input~\eqref{eq:driving_sys} for all $\omega\in\diff{\mathcal{N}}$ is given by
\begin{equation} \label{eq:w-response_cl}
	Y_c(\omega,z) \coloneq h(\varphi_c(\omega,z),\ell(\omega,z)).
\end{equation}
The resulting $\omega$-response function~\eqref{eq:w-response_cl} associated with the feedback system is then parameterized by the choice of the functions $f_p$ and $\kappa$.
This parameterization gives degrees of freedom in the selection of a desired shape of the closed loop frequency response.
The frequency response function of the system~\eqref{eq:closed_loop}-\eqref{eq:feedback} to the input~\eqref{eq:driving_sys} is referred to as $\Gamma_c(\omega)$ with associated $\omega$-gain, $\omega$-phase, and $\omega$-radius functions denoted for convenience as $(\alpha_c(\omega)$, $\vartheta_c(\omega)$, and $r_c(\omega))$, respectively.
To define stabilization with prescribed closed-loop characteristics, we formulate a loop-shaping problem as a constrained design problem on the desired frequency response function.

\begin{figure*}[t]
	\centering
	\condVSpace{-1.5em}
	\begin{subfigure}{0.325\linewidth}
	\centering
	\includegraphics[width=\linewidth]{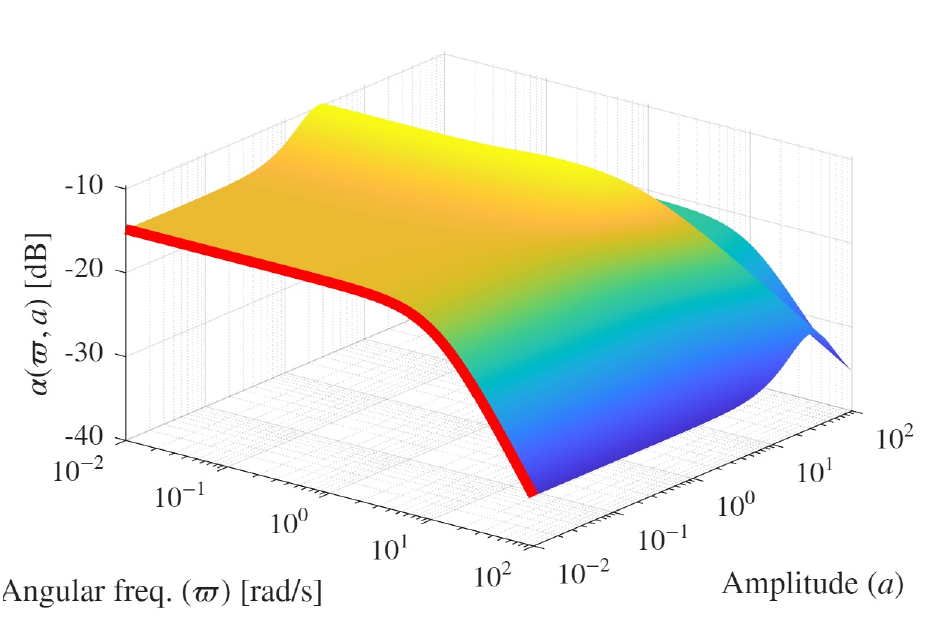}
	\condVSpace{-1.5em}
	\caption{$\omega$-gain}
	\label{fig:bode:ex_IFAC_state::gain}
	\end{subfigure}
	\hfill
	\begin{subfigure}{0.325\linewidth}
	\centering
	\includegraphics[width=\linewidth]{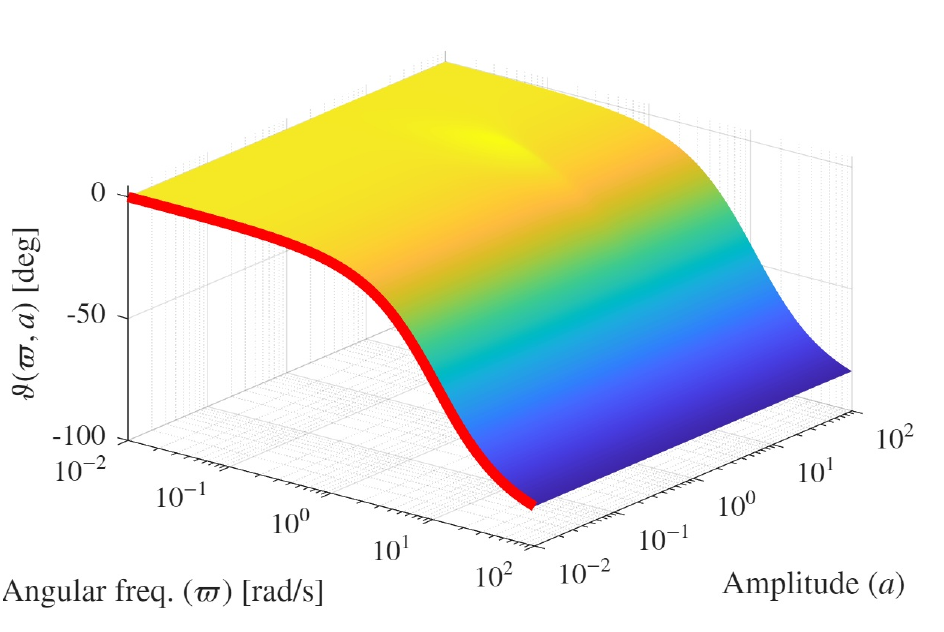}
	\condVSpace{-1.5em}
	\caption{$\omega$-phase}
	\label{fig:bode:ex_IFAC_state::phase}
	\end{subfigure}
	\hfill
	\begin{subfigure}{0.325\linewidth}
	\centering
	\includegraphics[width=\linewidth]{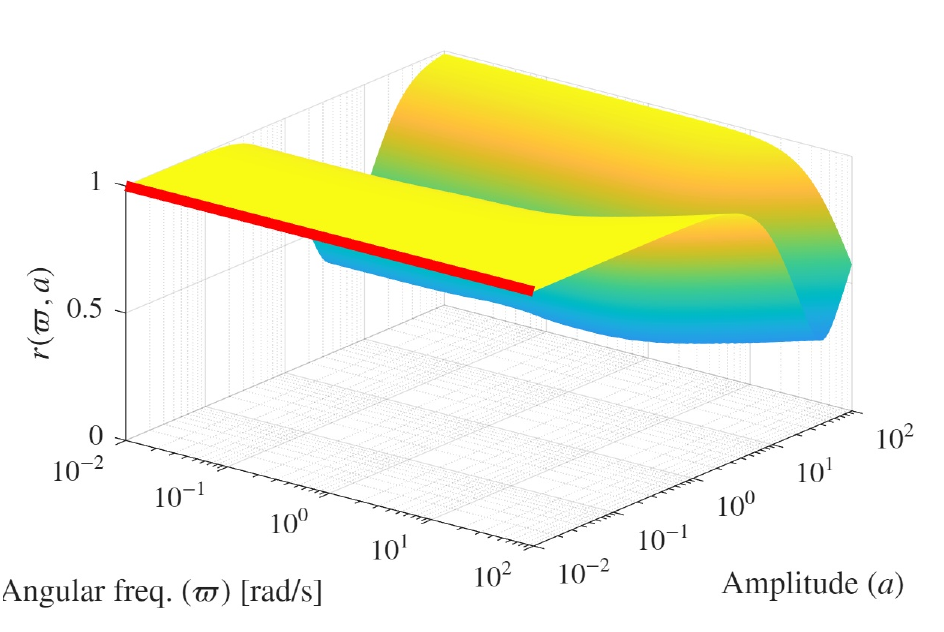}
	\condVSpace{-1.5em}
	\caption{$\omega$-radius}
	\label{fig:bode:ex_IFAC_state::radius}
	\end{subfigure}

	\caption{
		Bode-like diagram of the nonlinear system~\eqref{ex:feedback} under state-feedback~\eqref{eq:state_feedback} with $K=10$, angular frequency $\varpi \in [10^{-2}, 10^2]$, and amplitude $a_u \in [10^{-2}, 10^2]$.
		The red curve corresponds to the $\omega$-gain, $\omega$-phase, and $\omega$-radius associated with the linearized system.
	}
	\label{fig:bode:ex_IFAC_state}
	\condVSpace{-1em}
\end{figure*}

\begin{figure*}[t]
	\centering
	\begin{subfigure}{0.325\linewidth}
	\centering
	\includegraphics[width=\linewidth]{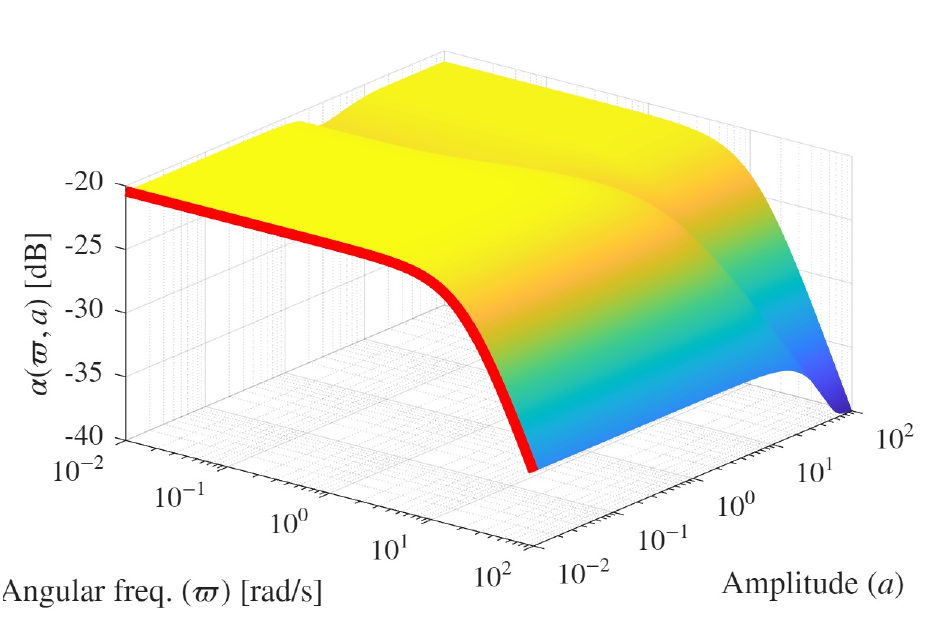}
	\condVSpace{-1.5em}
	\caption{$\omega$-gain}
	\label{fig:bode:ex_IFAC_output::gain}
	\end{subfigure}
	\hfill
	\begin{subfigure}{0.325\linewidth}
	\centering
	\includegraphics[width=\linewidth]{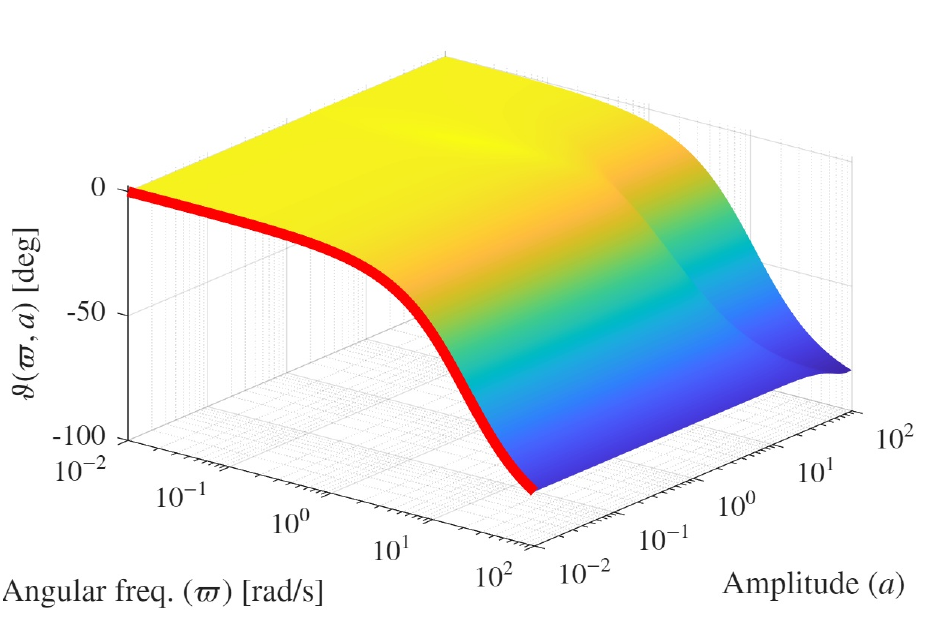}
	\condVSpace{-1.5em}
	\caption{$\omega$-phase}
	\label{fig:bode:ex_IFAC_output::phase}
	\end{subfigure}
	\hfill
	\begin{subfigure}{0.325\linewidth}
	\centering
	\includegraphics[width=\linewidth]{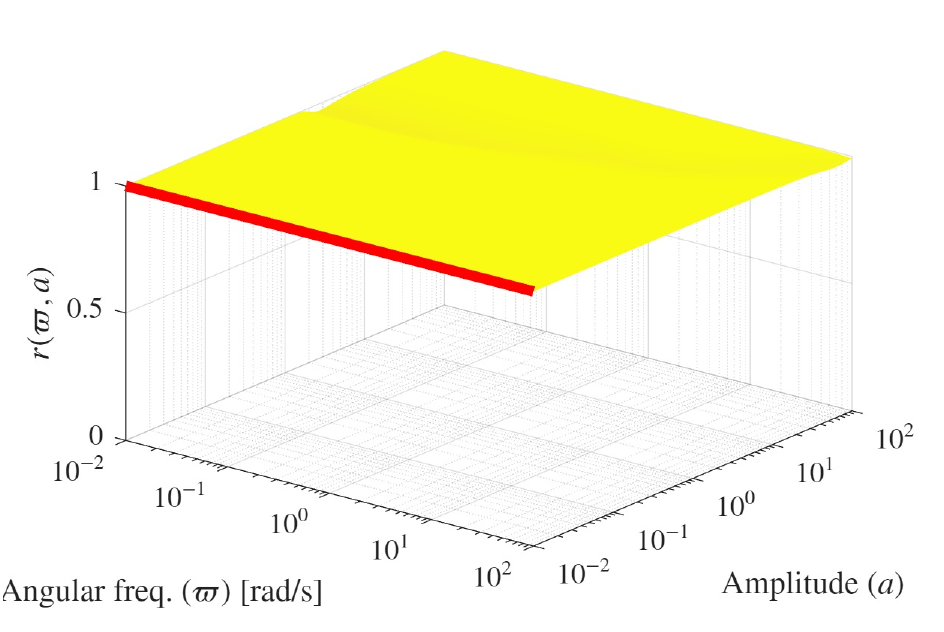}
	\condVSpace{-1.5em}
	\caption{$\omega$-radius}
	\label{fig:bode:ex_IFAC_output::radius}
	\end{subfigure}
	\caption{
	Bode-like diagram of the nonlinear system~\eqref{ex:feedback} under output-feedback~\eqref{eq:output_feedback} with $K=10$, angular frequency $\varpi \in [10^{-2}, 10^2]$, and amplitude $a_u \in [10^{-2}, 10^2]$.
	The red curve corresponds to the $\omega$-gain, $\omega$-phase, and $\omega$-radius associated with the linearized system about the equilibrium $x = 0$.
	}
	\label{fig:bode:ex_IFAC_output}
	\condVSpace{-1.5em}
\end{figure*}

\begin{description}

	\item[The Nonlinear Loop-Shaping Problem.]
	Given system~\eqref{eq:closed_loop} driven by the input~\eqref{eq:driving_sys}, and a closed subset of desired specifications $\mathcal{SP}$, design the feedback~\eqref{eq:feedback} such that

	\begin{enumerate}[label={(R\arabic*)}, ref={(R\arabic*)}]
		
		\item \label{en:ls:stable}
		the origin, $(x_c,x_p) = (0,0)$, of the dynamics
		\begin{align*}
			\dot{x}_c & = f_c(x_c,0,\kappa(x_p,x_c,0)) \\
			\dot{x}_p & = f_p(x_p,x_c,0)
		\end{align*}
		is an asymptotically stable equilibrium,
		
		\item \label{en:ls:gain}
		the frequency response function $\Gamma_c : \mathcal{N} \to \C$ of the system~\eqref{eq:closed_loop}-\eqref{eq:feedback} to the input~\eqref{eq:driving_sys} is such that
		\begin{align*}
			(\alpha_c(\omega), \vartheta_c(\omega), r_c(\omega)) 
			\in
			\mathcal{SP},
			\hspace{3ex}
			\forall \omega\in  \diff{\mathcal{N}},
			\hspace{-3ex}
		\end{align*}
		where $\mathcal{SP} \subset \R^3$ is the \emph{specification subset} defined as
		$\mathcal{SP}
		\coloneq
		\ps*{ \alpha_{\min}, \alpha_{\max} } \times
		\ps*{ \vartheta_{\min}, \vartheta_{\max} } \times
		\ps*{ r_{\min}, r_{\max} }$.
	\end{enumerate}
\end{description}

In addition to~\ref{en:ls:stable}, which is the standard asymptotic stabilization requirements~\cite[Sec. 8.2]{isidori1995nonlinear}, the nonlinear loop-shaping problem introduced in~\ref{en:ls:gain} requires a polyhedral set of specifications $\mathcal{SP}$ which constrain the shape of the frequency response function through the appropriate selection of the feedback~\eqref{eq:feedback}.
These specifications can be set and validated through the associated nonlinear Bode gain diagram.
This formulation of the loop-shaping problem yields a complementary role when combined with other nonlinear control laws, see~\cite{khalil2015nonlinear}.
For instance, the $\mathcal{H}_\infty$ control of~\cite{isidori1992disturbance} can be recast in a static loop-shaping synthesis and thus enriched by imposing specific requirements on the shape of the $\omega$-gain function.
This complementary role can be achieved by taking advantage of the degrees of freedom in the controller, whether static or dynamic, to provide qualitative performance without compromising stability.

\begin{example}
\label{ex:IFAC}

Let~\eqref{eq:closed_loop} be a state-feedback version of the nonlinear system described by~\eqref{eq:nonlinear-system_example_f}, that is
\begin{equation} \label{ex:feedback}
	f_c(x,u,v)
	=
	\begin{pmatrix}
		x_1 + u \\
		-x_2 + x_1u
	\end{pmatrix}
	+ \kappa(x,u) ,
\end{equation}
with output function $h(x,u) = x_1 +\tanh(x_1 + x_2)$, and static feedback $\kappa(x,u)$ of the form
\begin{equation} \label{eq:state_feedback}
	\kappa(x,u)
	=
	\begin{pmatrix}
		-Kx_1 \\
		Kx_1^2
	\end{pmatrix}.
\end{equation}
Note that the feedback system~\eqref{ex:feedback} satisfies the stability requirement~\ref{en:ls:stable} for every $K>1$.
Let the signal generator~\eqref{eq:driving_sys} be of the form~\eqref{eq:LTI-exosystem} with parameters $\omega = (\varpi, a_u) \in \R_+^2$, state $z = (z_1, z_2) \in \R^2$, and $z_0(\omega) = (0, a_u) \in \R^2$.

The design specification requires input attenuation.
Therefore, we can choose the parameter $K$ appropriately to guarantee that the $\omega$-gain is such that the output attenuates the effect of the input, that is $\alpha_{\max} < 1$.
Figure~\ref{fig:bode:ex_IFAC_state} illustrates the Bode-like diagram of the closed-loop system~\eqref{ex:feedback} with the feedback~\eqref{eq:state_feedback} relative to the choice of $K=10$.
Although this state-feedback modifies the shape of the surface on the amplitude axis, we note that, in both cases, attenuation of the input is achieved.
At the same time, the shape of the $\omega$-phase is largely consistent with the linearized model within the bounds of the selected amplitudes, while the $\omega$-radius exhibits a hollow shape for certain amplitudes, indicating high distortion of the output signal compared to the input.

However, suppose that achieving stabilization requires only a small distortion in the output, meaning that the $\omega$-radius under study remains close to $1$. Hence, we consider analyzing the effect of a negative output feedback of the form
\begin{equation}\label{eq:output_feedback}
	\kappa(x,u)
	=
	-K
	\begin{pmatrix}
		x_1 +\tanh(x_1 + x_2)\\
		0
	\end{pmatrix}.
\end{equation}

Figure~\ref{fig:bode:ex_IFAC_output} illustrates the Bode-like diagram of the closed loop system~\eqref{ex:feedback} with the feedback~\eqref{eq:output_feedback} relative to the choice of $K=10$.
As we can see, the output feedback still guarantees attenuation of the effect of the input, with a similar shape of the $\omega$-phase function of the state-feedback but with an $\omega$-radius function which almost flattens out over all frequencies and amplitudes, providing uniform consistency in the shape of the output with respect to the input $u$.
\end{example}

\section{Conclusion and Outlooks} 
\label{sec:end}

Motivated by the question of whether the time-domain interpretation of the frequency response used in linear theory could be enhanced analogously for nonlinear systems, this article introduced a framework for nonlinear frequency response.
We showed that, to bridge the gap between linear and nonlinear analysis, a quantitative measure of the distortion of the shape of the input through the system is necessary, along with suitable notions of gain and phase defined to deal with nonlinear periodic excitations.
Together, these functions provide a complete characterization of the frequency response, which can be illustrated graphically by nonlinear enhancement of the Bode diagrams.
Additionally, an analysis of the frequency responses of various dissipative systems with different supply rates has been conducted.
Then, a nonlinear multi-input frequency response analysis was performed, which was found to be a weak form of the superposition principle for nonlinear systems.
Finally, graphical illustrations of the frequency response allow the nonlinear loop-shaping problem to be formulated in terms of a specification set that accounts for desired quantitative requirements on the feedback system.

Bode-like diagrams introduced here provide an exact quantitative measure of the frequency-amplitude relation between input and output, extending beyond linear analysis.
The Hartman-Grobman theorem implies the existence of a neighborhood of the hyperbolic equilibrium and an input bound such that all trajectories starting in the neighborhood remain in it.
Although input bounds are not easily quantifiable, the proposed graphical representation offers a quantitative illustration of the point at which such dynamics transition to nonlinear regimes.
The nonlinear frequency domain framework can be seamlessly employed in a data-driven fashion when only input and output measurements are available~\cite{moreschini2025moment}.

This nonlinear frequency response framework comes with several open questions worth addressing in future work.
First, we aim to clarify the interrelationship between magnitude and phase of the frequency response, in the spirit of the Kramers-Kronig relations~\cite{bechhoefer2011kramers}, which connect the real and imaginary parts of the frequency response function~\eqref{eq:tiangular_frequency}.
This would allow giving a one-to-one correspondence between small gain and small phase~\cite{chen2021small}.
Following the line of~\cite{moreschini2024generalized}, this framework can be extended to include abstract time domains, allowing for the consideration of other systems, such as discrete time and hybrid settings.
Moreover, we aim to relate the $\omega$-response and its Bode-like diagrams with the stability of the system under feedback.
In particular, for small amplitudes, stability can be reconnected with the Bode stability criterion.
Nevertheless, the theoretical relationship between stability conditions and the properties of the $\omega$-response function for amplitudes beyond the local regime is yet to be understood.
In doing so, we should look more closely at how the poles of a nonlinear system~\cite{padoan2017eigenvalues} are reflected in its frequency response.
Finally, the proposed framework enables graphical validation of nonlinear interpolation methods, \emph{e.g.}, moment matching~\cite{astolfi2010model,moreschini2025closed,scarciotti2024survey}, and the design of surrogate models with provable performance at different frequencies.
Specifically, exploiting the $\omega$-radius function, it is possible to devise approximate realizations that interpolate the system at targeted frequencies, ensuring low complexity in all other operational regimes.

\condVSpace{-0.5em}

\section*{Acknowledgment}
We would like to thank A. Astolfi for his insightful discussions during the initial stages of this research, as well as for his valuable suggestions on the preliminary drafts.

\condVSpace{-0.5em}

\iftoggle{paperSC}
{}
{\section*{References}}
\condVSpace{-1.5em}
\bibliographystyle{IEEEtran}
\bibliography{ref}

% Generated by IEEEtran.bst, version: 1.14 (2015/08/26)
\begin{thebibliography}{10}
\providecommand{\url}[1]{#1}
\csname url@samestyle\endcsname
\providecommand{\newblock}{\relax}
\providecommand{\bibinfo}[2]{#2}
\providecommand{\BIBentrySTDinterwordspacing}{\spaceskip=0pt\relax}
\providecommand{\BIBentryALTinterwordstretchfactor}{4}
\providecommand{\BIBentryALTinterwordspacing}{\spaceskip=\fontdimen2\font plus
\BIBentryALTinterwordstretchfactor\fontdimen3\font minus
  \fontdimen4\font\relax}
\providecommand{\BIBforeignlanguage}[2]{{%
\expandafter\ifx\csname l@#1\endcsname\relax
\typeout{** WARNING: IEEEtran.bst: No hyphenation pattern has been}%
\typeout{** loaded for the language `#1'. Using the pattern for}%
\typeout{** the default language instead.}%
\else
\language=\csname l@#1\endcsname
\fi
#2}}
\providecommand{\BIBdecl}{\relax}
\BIBdecl

\bibitem{MacFarlane1979perspectives}
A.~MacFarlane, ``The development of frequency-response methods in automatic
  control [perspectives],'' \emph{IEEE Trans. Autom. Control}, vol.~24, no.~2,
  pp. 250--265, 1979.

\bibitem{baillieul2021encyclopedia}
J.~Baillieul and T.~Samad, Eds., \emph{Encyclopedia of Systems and
  Control}.\hskip 1em plus 0.5em minus 0.4em\relax Springer, 2021.

\bibitem{huang2024gain}
L.~Huang, D.~Wang, X.~Wang, H.~Xin, P.~Ju, K.~H. Johansson, and F.~D{\"o}rfler,
  ``Gain and phase: Decentralized stability conditions for power
  electronics-dominated power systems,'' \emph{IEEE Trans. Power Syst.},
  vol.~39, no.~6, pp. 7240--7256, 2024.

\bibitem{xu2025interharmonic}
W.~Xu, J.~Yong, H.~J. Marquez, and C.~Li, ``Interharmonic power--a new concept
  for power system oscillation source location,'' \emph{IEEE Trans. Power
  Syst.}, 2025.

\bibitem{gibbard2015small}
M.~J. Gibbard, P.~Pourbeik, and D.~J. Vowles, \emph{Small-Signal Stability,
  Control and Dynamic Performance of Power Systems}.\hskip 1em plus 0.5em minus
  0.4em\relax Univ. Adelaide Press, 2015.

\bibitem{gu2022power}
Y.~Gu and T.~C. Green, ``Power system stability with a high penetration of
  inverter-based resources,'' \emph{Proc. IEEE}, vol. 111, no.~7, pp. 832--853,
  2022.

\bibitem{olsman2019hard}
N.~Olsman, A.-A. Baetica, F.~Xiao, Y.~P. Leong, R.~M. Murray, and J.~C. Doyle,
  ``Hard limits and performance tradeoffs in a class of antithetic integral
  feedback networks,'' \emph{Cell Systems}, vol.~9, no.~1, pp. 49--63, 2019.

\bibitem{khammash2022cybergenetics}
M.~H. Khammash, ``Cybergenetics: Theory and applications of genetic control
  systems,'' \emph{Proc. IEEE}, vol. 110, no.~5, pp. 631--658, 2022.

\bibitem{del2015biomolecular}
D.~Del~Vecchio and R.~M. Murray, \emph{Biomolecular Feedback Systems}.\hskip
  1em plus 0.5em minus 0.4em\relax Princeton Univ. Press, 2015.

\bibitem{keener2025mathematical}
J.~Keener and J.~Sneyd, \emph{Mathematical Physiology}.\hskip 1em plus 0.5em
  minus 0.4em\relax Springer, 2025.

\bibitem{paul2007analysis}
C.~R. Paul, \emph{Analysis of Multiconductor Transmission Lines}.\hskip 1em
  plus 0.5em minus 0.4em\relax John Wiley \& Sons, 2007.

\bibitem{bullmore2009complex}
E.~Bullmore and O.~Sporns, ``Complex brain networks: graph theoretical analysis
  of structural and functional systems,'' \emph{Nat. Rev. Neurosci.}, vol.~10,
  no.~3, pp. 186--198, 2009.

\bibitem{selvaratnam2025frequency}
D.~Selvaratnam, A.~Moreschini, A.~Das, T.~Parisini, and H.~Sandberg,
  ``Frequency-domain bounds for the multiconductor telegrapher's equation,''
  \emph{arXiv preprint arXiv:2504.01599}, 2025.

\bibitem{chua2024memristors}
L.~O. Chua, ``Memristors on ‘edge of chaos’,'' \emph{Nat. Rev. Electr.
  Eng.}, vol.~1, no.~9, pp. 614--627, 2024.

\bibitem{ortega2018graph}
A.~Ortega, P.~Frossard, J.~Kova{\v{c}}evi{\'c}, J.~M.~F. Moura, and
  P.~Vandergheynst, ``Graph signal processing: Overview, challenges, and
  applications,'' \emph{Proc. IEEE}, vol. 106, no.~5, pp. 808--828, 2018.

\bibitem{boashash2015time}
B.~Boashash, \emph{Time-Frequency Signal Analysis and Processing: A
  Comprehensive Reference}.\hskip 1em plus 0.5em minus 0.4em\relax Acad. Press,
  2015.

\bibitem{dong2019learning}
X.~Dong, D.~Thanou, M.~Rabbat, and P.~Frossard, ``Learning graphs from data: A
  signal representation perspective,'' \emph{IEEE Signal Process. Mag.},
  vol.~36, no.~3, pp. 44--63, 2019.

\bibitem{skogestad2005multivariable}
S.~Skogestad and I.~Postlethwaite, \emph{Multivariable Feedback Control:
  Analysis and Design}.\hskip 1em plus 0.5em minus 0.4em\relax John Wiley \&
  Sons, 2005.

\bibitem{franklin2025feedback}
G.~F. Franklin, J.~D. Powell, and A.~Emami-Naeini, \emph{Feedback Control of
  Dynamic Systems}.\hskip 1em plus 0.5em minus 0.4em\relax Pearson, 2025.

\bibitem{doyle2013feedback}
J.~C. Doyle, B.~A. Francis, and A.~R. Tannenbaum, \emph{Feedback Control
  Theory}.\hskip 1em plus 0.5em minus 0.4em\relax Courier Corp., 2013.

\bibitem{bode1945network}
H.~W. Bode, \emph{Network Analysis and Feedback Amplifier Design}.\hskip 1em
  plus 0.5em minus 0.4em\relax D. Van Nostrand Co., Inc., 1945.

\bibitem{Nyquist1932regeneration}
H.~Nyquist, ``Regeneration theory,'' \emph{Bell Syst. Tech. J.}, vol.~11,
  no.~1, pp. 126--147, 1932.

\bibitem{zames2003feedback}
G.~Zames, ``Feedback and optimal sensitivity: Model reference transformations,
  multiplicative seminorms, and approximate inverses,'' \emph{IEEE Trans.
  Autom. Control}, vol.~26, no.~2, pp. 301--320, 2003.

\bibitem{tannenbaum1980feedback}
A.~Tannenbaum, ``Feedback stabilization of linear dynamical plants with
  uncertainty in the gain factor,'' \emph{Int. J. Control}, vol.~32, no.~1, pp.
  1--16, 1980.

\bibitem{mcFarlane1992loop}
D.~McFarlane and K.~Glover, ``A loop-shaping design procedure using
  {$H_\infty$} synthesis,'' \emph{IEEE Trans. Autom. Control}, vol.~37, no.~6,
  pp. 759--769, 1992.

\bibitem{bechhoefer2011kramers}
J.~Bechhoefer, ``{Kramers}--{Kronig}, {Bode}, and the meaning of zero,''
  \emph{Am. J. Phys.}, vol.~79, no.~10, pp. 1053--1059, 2011.

\bibitem{isidori1990output}
A.~Isidori and C.~I. Byrnes, ``Output regulation of nonlinear systems,''
  \emph{IEEE Trans. Autom. Control}, vol.~35, no.~2, pp. 131--140, 1990.

\bibitem{isidori1995nonlinear}
A.~Isidori, \emph{Nonlinear Control Systems}.\hskip 1em plus 0.5em minus
  0.4em\relax Springer, 1995.

\bibitem{isidori2008steady}
A.~Isidori and C.~I. Byrnes, ``Steady-state behaviors in nonlinear systems with
  an application to robust disturbance rejection,'' \emph{Annu. Rev. Control},
  vol.~32, no.~1, pp. 1--16, 2008.

\bibitem{isidori2017lectures}
A.~Isidori, \emph{Lectures in Feedback Design for Multivariable Systems}.\hskip
  1em plus 0.5em minus 0.4em\relax Springer, 2017.

\bibitem{george1959continuous}
D.~A. George, ``Continuous nonlinear systems,'' Massachusetts Institute of
  Technology. Research Laboratory of Electronics, Tech. Rep., 1959.

\bibitem{bayma2018analysis}
R.~S. Bayma, Y.~Zhu, and Z.-Q. Lang, ``The analysis of nonlinear systems in the
  frequency domain using nonlinear output frequency response functions,''
  \emph{Automatica}, vol.~94, pp. 452--457, 2018.

\bibitem{boyd2003fading}
S.~Boyd and L.~Chua, ``Fading memory and the problem of approximating nonlinear
  operators with {Volterra} series,'' \emph{IEEE Trans. Circuits Syst.},
  vol.~32, no.~11, pp. 1150--1161, 2003.

\bibitem{isidori1992disturbance}
A.~Isidori and A.~Astolfi, ``Disturbance attenuation and ${H_\infty}$-control
  via measurement feedback in nonlinear systems,'' \emph{IEEE Trans. Autom.
  Control}, vol.~37, no.~9, pp. 1283--1293, 1992.

\bibitem{astolfi2010model}
A.~Astolfi, ``Model reduction by moment matching for linear and nonlinear
  systems,'' \emph{IEEE Trans. Autom. Control}, vol.~55, no.~10, pp.
  2321--2336, 2010.

\bibitem{pavlov2007frequency}
A.~Pavlov, N.~van~de Wouw, and H.~Nijmeijer, ``Frequency response functions for
  nonlinear convergent systems,'' \emph{IEEE Trans. Autom. Control}, vol.~52,
  no.~6, pp. 1159--1165, 2007.

\bibitem{chua1979frequency}
L.~O. Chua and C.-Y. Ng, ``Frequency domain analysis of nonlinear systems:
  general theory,'' \emph{IEE J. Electron. Circuits Syst.}, vol.~3, pp.
  165--185, 1979.

\bibitem{billings1994analysing}
S.~Billings and H.~Zhang, ``Analysing non-linear systems in the frequency
  domain--{II}. the phase response,'' \emph{Mech. Syst. Signal Process.},
  vol.~8, no.~1, pp. 45--62, 1994.

\bibitem{rugh1981nonlinear}
W.~J. Rugh, \emph{Nonlinear System Theory}.\hskip 1em plus 0.5em minus
  0.4em\relax Johns Hopkins Univ. Press, 1981.

\bibitem{zames1966inputII}
G.~Zames, ``On the input-output stability of time-varying nonlinear feedback
  systems--{Part} {II}: Conditions involving circles in the frequency plane and
  sector nonlinearities,'' \emph{IEEE Trans. Autom. Control}, vol.~11, no.~3,
  pp. 465--476, 1966.

\bibitem{chen2021phasenonlinearsystems}
C.~Chen, D.~Zhao, W.~Chen, S.~Z. Khong, and L.~Qiu, ``Phase of nonlinear
  systems,'' \emph{arXiv preprint arXiv:2012.00692}, 2021.

\bibitem{chen2021singularanglenonlinearsystems}
C.~Chen, D.~Zhao, and S.~Z. Khong, ``The singular angle of nonlinear systems,''
  \emph{Automatica}, vol. 181, p. 112515, 2025.

\bibitem{king2009hilbert}
F.~W. King, \emph{{Hilbert} Transforms}.\hskip 1em plus 0.5em minus 0.4em\relax
  Cambridge Univ. Press,, 2009.

\bibitem{chaffey2023graphical}
T.~Chaffey, F.~Forni, and R.~Sepulchre, ``Graphical nonlinear system
  analysis,'' \emph{IEEE Trans. Autom. Control}, vol.~68, no.~10, pp.
  6067--6081, 2023.

\bibitem{ryu2022scaled}
E.~K. Ryu, R.~Hannah, and W.~Yin, ``Scaled relative graphs: Nonexpansive
  operators via {2D} {Euclidean} geometry,'' \emph{Math. Program.}, vol. 194,
  no.~1, pp. 569--619, 2022.

\bibitem{moreschini2026onfrequency}
A.~Moreschini, M.~Scandella, and A.~Astolfi, ``On the frequency response and
  loop shaping for nonlinear systems,'' in \emph{IFAC WC 2026}, (Accepted).

\bibitem{he2006some}
J.-H. He, ``Some asymptotic methods for strongly nonlinear equations,''
  \emph{Int. J. Mod. Phys. B}, vol.~20, no.~10, pp. 1141--1199, 2006.

\bibitem{cveticanin2018strong}
L.~Cveticanin, \emph{Strong Nonlinear Oscillators: Analytical Solutions}.\hskip
  1em plus 0.5em minus 0.4em\relax Springer, 2018.

\bibitem{carr2012a}
J.~Carr, \emph{Applications of Centre Manifold Theory}.\hskip 1em plus 0.5em
  minus 0.4em\relax Springer, 2012.

\bibitem{hartman1982ordinary}
P.~Hartman, \emph{Ordinary Differential Equations}, 2nd~ed.\hskip 1em plus
  0.5em minus 0.4em\relax Birkhäuser, 1982.

\bibitem{chaffey2025amplitude}
T.~Chaffey and F.~Forni, ``Amplitude response and square wave describing
  functions,'' \emph{Eur. J. Control.}, p. 101310, 2025.

\bibitem{Cveticanin2009a}
L.~Cveticanin, ``Oscillator with fraction order restoring force,'' \emph{J.
  Sound Vib.}, vol. 320, no.~4, pp. 1064--1077, 2009.

\bibitem{Olver2010a}
F.~W.~J. Olver, D.~W. Lozier, R.~F. Boisvert, and C.~W. Clark, \emph{NIST
  Handbook of Mathematical Functions}.\hskip 1em plus 0.5em minus 0.4em\relax
  Cambridge Univ. Press, 2010.

\bibitem{willems1972dissipative}
J.~C. Willems, ``Dissipative dynamical systems part {I}: General theory,''
  \emph{Arch. Ration. Mech. Anal.}, vol.~45, no.~5, pp. 321--351, 1972.

\bibitem{hill1980dissipative}
D.~J. Hill and P.~J. Moylan, ``Dissipative dynamical systems: Basic
  input-output and state properties,'' \emph{J. Franklin Inst.}, vol. 309,
  no.~5, pp. 327--357, 1980.

\bibitem{brogliato2020dissipative}
B.~Brogliato, R.~Lozano, B.~Maschke, and O.~Egeland, \emph{Dissipative Systems
  Analysis and Control: Theory and Applications}.\hskip 1em plus 0.5em minus
  0.4em\relax Springer, 2020.

\bibitem{khalil2015nonlinear}
H.~K. Khalil, \emph{Nonlinear Control}.\hskip 1em plus 0.5em minus 0.4em\relax
  Pearson Education, 2015.

\bibitem{moreschini2025moment}
A.~Moreschini, M.~Scandella, A.~Astolfi, and T.~Parisini, ``Moment matching by
  kernel-based learning,'' \emph{IEEE Trans. Autom. Control}, vol.~71, no.~4,
  pp. 2123--2138, 2026.

\bibitem{chen2021small}
W.~Chen, D.~Wang, and L.~Qiu, ``Small phase theorem,'' in \emph{Encycl. Syst.
  Control}, 2021, pp. 2082--2086.

\bibitem{moreschini2024generalized}
A.~Moreschini, M.~Bin, A.~Astolfi, and T.~Parisini, ``A generalized passivity
  theory over abstract time domains,'' \emph{IEEE Trans. Autom. Control},
  vol.~70, no.~1, pp. 2--17, 2025.

\bibitem{padoan2017eigenvalues}
A.~Padoan and A.~Astolfi, ``Eigenvalues and poles of nonlinear systems: A
  geometric approach,'' in \emph{Proc. 56th IEEE Conf. Decis. Control (CDC)},
  2017, pp. 2575--2580.

\bibitem{moreschini2025closed}
A.~Moreschini and A.~Astolfi, ``Closed-loop interpolation by moment matching
  for linear and nonlinear systems,'' \emph{IEEE Trans. Autom. Control},
  vol.~70, no.~5, pp. 2918--2933, 2025.

\bibitem{scarciotti2024survey}
G.~Scarciotti and A.~Astolfi, ``Interconnection-based model order reduction - a
  survey,'' \emph{Eur. J. Control}, vol.~75, p. 100929, 2024.

\bibitem{kuvcera1974matrix}
V.~Ku{\v{c}}era, ``The matrix equation {$AX+XB=C$},'' \emph{SIAM J. Appl.
  Math.}, vol.~26, no.~1, pp. 15--25, 1974.

\bibitem{de1981controllability}
E.~de~Souza and S.~Bhattacharyya, ``Controllability, observability and the
  solution of {$AX-XB= C$},'' \emph{Linear Algebra Appl.}, vol.~39, pp.
  167--188, 1981.

\bibitem{hu2006polynomial}
Q.~Hu and D.~Cheng, ``The polynomial solution to the {Sylvester} matrix
  equation,'' \emph{Appl. Math. Lett.}, vol.~19, no.~9, pp. 859--864, 2006.

\bibitem{Gelfand1964}
I.~M. Gel'fand and G.~E. Shilov, \emph{Generalized Functions, Volume 1:
  Properties and Operations}.\hskip 1em plus 0.5em minus 0.4em\relax AMS
  Chelsea Publishing, 1964.

\end{thebibliography}

\appendix
\label{sec:appendix}

Uniqueness of the solution of the Sylvester equation~\eqref{eq:Sylvester} is ensured by the fact that $A$ and $S$ do not share eigenvalues~\cite{kuvcera1974matrix}.
Controllability of the pair $(A,B)$ and observability of the pair $(S,L)$ imply that the solution is full rank~\cite{de1981controllability}. 

The following result specializes the closed-form solution to the Sylvester given in~\cite{hu2006polynomial}.
\begin{lemma} \label{th:sylvester_sol}
	Let $\sigma(A) \cap \sigma(S) = \varnothing$.
	For each $\varpi \in \R^+$, the unique solution of the Sylvester equation~\eqref{eq:Sylvester} is given by
	\begin{align} \label{eq:sol_sylvester}
		\Phi(\varpi) = - (A^2 + \varpi^2 I_n)^{-1} (BLS(\varpi) + ABL).
	\end{align}
\end{lemma}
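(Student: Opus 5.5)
The plan is to reduce the Sylvester equation~\eqref{eq:Sylvester} to an equation that can be solved by left-multiplication with a single matrix. The only structural fact needed about $S(\varpi)$ in~\eqref{eq:fixed_SL} is
\begin{equation*}
S(\varpi)^2 = -\varpi^2 I_2 ,
\end{equation*}
which is checked directly from its entries. The argument then has three steps.

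\emph{Step 1: $A^2 + \varpi^2 I_n$ is invertible.} Its eigenvalues are $\lambda^2 + \varpi^2$ for $\lambda \in \sigma(A)$. Such a value vanishes only if $\lambda = \pm j\varpi$. Since $\sigma(S(\varpi)) = \{\pm j\varpi\}$, the hypothesis $\sigma(A) \cap \sigma(S) = \varnothing$ excludes this, so the inverse in~\eqref{eq:sol_sylvester} is well defined.

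\emph{Step 2: the formula solves~\eqref{eq:Sylvester}.} Let $\Phi$ be given by~\eqref{eq:sol_sylvester}. Because $A$ commutes with $A^2+\varpi^2 I_n$, it also commutes with its inverse. It therefore suffices to show
\begin{equation*}
(A^2+\varpi^2 I_n)\bigl(A\Phi - \Phi S\bigr) = -(A^2+\varpi^2 I_n) BL .
\end{equation*}
Multiplying out and commuting $A$ past $(A^2+\varpi^2 I_n)$, the left-hand side equals
\begin{equation*}
-\bigl(A BLS + A^2 BL - BLS^2 - ABLS\bigr).
\end{equation*}
Using $S^2 = -\varpi^2 I_2$, this reduces to $-(A^2 + \varpi^2 I_n)BL$. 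Cancelling the invertible factor gives $A\Phi + BL = \Phi S$, as required.

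\emph{Step 3: uniqueness.} By~\cite{kuvcera1974matrix}, the spectral condition guarantees that the solution of~\eqref{eq:Sylvester} is unique, so~\eqref{eq:sol_sylvester} is \emph{the} solution.

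For intuition, the same formula can be derived rather than merely verified. Write~\eqref{eq:Sylvester} as $A\Phi - \Phi S = -BL$. Multiplying it on the left by $A$ gives $A^2\Phi - A\Phi S = -ABL$. Multiplying it on the right by $S$ gives $A\Phi S - \Phi S^2 = -BLS$. Adding the two cancels the cross term $A\Phi S$, and $S^2 = -\varpi^2 I_2$ turns the result into
\begin{equation*}
(A^2+\varpi^2 I_n)\Phi = -(ABL + BLS).
\end{equation*}
This derivation alone is only a necessary condition, which is why Step 2 verifies the formula directly.

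There is no genuine obstacle; the only point requiring care is Step 1, linking the invertibility of $A^2+\varpi^2 I_n$ to the spectral hypothesis.
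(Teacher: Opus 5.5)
Your proof is correct. Its core algebra, using $S(\varpi)^2=-\varpi^2 I_2$ to eliminate the cross term $A\Phi S$, is the same as the paper's, but the logic runs the other way.

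The paper's proof is precisely your ``intuition'' derivation. It multiplies~\eqref{eq:Sylvester} on the right by $S(\varpi)$ and on the left by $A$, equates the two expressions for $A\Phi(\varpi)S(\varpi)$, and inverts $A^2+\varpi^2 I_n$. It relies on~\cite{kuvcera1974matrix} for the \emph{existence} of a solution, so that this necessary condition identifies it. You instead verify the formula directly, which gives existence constructively, and invoke the citation only for uniqueness.

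Your remark that the derivation is ``only a necessary condition'' undersells it. Once $A^2+\varpi^2 I_n$ is invertible (your Step~1), the derivation shows that every solution equals~\eqref{eq:sol_sylvester}, and that \emph{is} uniqueness. So your Step~2 together with the derivation gives a fully self-contained proof of existence and uniqueness, with no appeal to Ku\v{c}era. The paper's version is shorter but needs the external existence result.

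Your Step~1 is also more precise than the paper. The paper attributes the invertibility of $A^2+\varpi^2 I_n$ to $\varpi\in\R^+$ alone, whereas it really rests on $\pm j\varpi\notin\sigma(A)$, i.e., on the spectral hypothesis (or on $A$ being Hurwitz).
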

\begin{proof} \label{th:sylvester_sol::proof}
	The Sylvester equation~\eqref{eq:Sylvester} has a unique solution by assumption that $\sigma(A) \cap \sigma(S) = \varnothing$,~\cite{kuvcera1974matrix}.
	This condition is satisfied by the assumption that $\varpi \in \R^+$, hence a unique solution $\Phi(\varpi)$ exists for all $\varpi \in \R^+$.
	By multiplying on the right the Sylvester equation~\eqref{eq:Sylvester} by $S(\varpi)$, and using the fact that $S(\varpi)^2 = -\varpi^2 I_{2m}$, we have that $A \Phi(\varpi) S(\varpi) = - BLS(\varpi) - \varpi^2 \Phi(\varpi) $.
	Similarly, we multiply on the left the Sylvester equation~\eqref{eq:Sylvester} by $A$ so that $A \Phi(\varpi) S(\varpi) = A^2 \Phi(\varpi) + A B L$.
	Hence, combining the two expressions through $A \Phi(\varpi) S(\varpi)$ we have
	\begin{align*}
		- BLS(\varpi) -\varpi^2 \Phi(\varpi) = A^2 \Phi(\varpi) + A B L .
	\end{align*}
	Finally, since by assumption $\varpi \in \R^+$, the matrix $(A^2 + \varpi^2 I_n)$ is invertible, we conclude~\eqref{eq:sol_sylvester}.
\end{proof}
\begin{lemma} \label{th:LTI-tf-Phi}
	Consider the LTI system~\eqref{eq:LTI-system} with transfer function $H : \C \to \C$, and let $\Phi(\varpi) = [\Phi_1(\varpi), \Phi_2(\varpi)] \in \R^{n \times 2}$ be the unique solution of the Sylvester equation~\eqref{eq:Sylvester} for $\varpi \in \R^+$.
	Then, for all $\varpi \in \R^+$, $H(j\varpi) = C\Phi_1(\varpi) + j C\Phi_2(\varpi)$.
\end{lemma}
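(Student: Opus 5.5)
The plan is to work column by column on the Sylvester equation~\eqref{eq:Sylvester} and show that the complex vector $v(\varpi) \coloneq \Phi_1(\varpi) + j\Phi_2(\varpi) \in \C^n$ solves a linear system whose unique solution is $(j\varpi I - A)^{-1}B$. Applying $C$ then gives the claim at once, since $H(j\varpi) = C(j\varpi I - A)^{-1}B$ by definition. I will use the explicit formula of Lemma~\ref{th:sylvester_sol} only if a cross-check is wanted; the column argument itself does not need it.

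First I expand both sides of~\eqref{eq:Sylvester} with the matrices in~\eqref{eq:fixed_SL}. Since $L = [1,\ 0]$, we have $BL = [B,\ 0]$. Right multiplication by $S(\varpi)$ gives
\begin{equation*}
	[\Phi_1(\varpi),\ \Phi_2(\varpi)]\, S(\varpi) = [-\varpi \Phi_2(\varpi),\ \varpi \Phi_1(\varpi)].
\end{equation*}
Comparing columns, the first column gives $A\Phi_1(\varpi) + B = -\varpi\Phi_2(\varpi)$, and the second gives $A\Phi_2(\varpi) = \varpi\Phi_1(\varpi)$.

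Next I combine these two relations as first plus $j$ times second:
\begin{equation*}
	A v(\varpi) = -B - \varpi\Phi_2(\varpi) + j\varpi\Phi_1(\varpi) = -B + j\varpi\bigl(\Phi_1(\varpi) + j\Phi_2(\varpi)\bigr) = -B + j\varpi\, v(\varpi).
\end{equation*}
This is equivalent to $(j\varpi I - A)v(\varpi) = B$.

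The only step needing care is invertibility of $j\varpi I - A$. It holds because $A$ is Hurwitz, so $j\varpi \notin \sigma(A)$ for every real $\varpi$. This is the same non-resonance condition $\sigma(A)\cap\sigma(S(\varpi)) = \varnothing$ that already guarantees uniqueness of $\Phi(\varpi)$. Hence $v(\varpi) = (j\varpi I - A)^{-1}B$, and
\begin{equation*}
	C\Phi_1(\varpi) + jC\Phi_2(\varpi) = Cv(\varpi) = H(j\varpi),
\end{equation*}
using that $C$ is real. No step is a genuine obstacle. The main point of attention is the sign convention in $S(\varpi)$: it fixes that the combination is $\Phi_1 + j\Phi_2$, rather than $\Phi_1 - j\Phi_2$, which would produce $H(-j\varpi)$.
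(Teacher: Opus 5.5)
Your proof is correct, but it reaches the identity more directly than the paper does.

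Both arguments rest on the same fact: $\iota = [1,\ j]^\top$ satisfies $S(\varpi)\iota = j\varpi\iota$ and $L\iota = 1$. Right-multiplying by $\iota$ therefore collapses the two real columns of $\Phi(\varpi)$ into the single complex vector $\Phi_1(\varpi) + j\Phi_2(\varpi)$. Your ``first column plus $j$ times the second'' is exactly this operation.

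The paper applies it to the closed-form solution $\Phi(\varpi) = -(A^2+\varpi^2 I_n)^{-1}(BLS(\varpi)+ABL)$ from Lemma~\ref{th:sylvester_sol}. That gives $-(A^2+\varpi^2 I_n)^{-1}(j\varpi I_n + A)B$. It then needs the factorization $A^2+\varpi^2 I_n = -(j\varpi I_n + A)(j\varpi I_n - A)$ to cancel $j\varpi I_n + A$.

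You apply it directly to \eqref{eq:Sylvester}. This gives $(j\varpi I_n - A)\bigl(\Phi_1(\varpi) + j\Phi_2(\varpi)\bigr) = B$ in one line. It uses only $j\varpi \notin \sigma(A)$, needs no closed-form lemma, and needs no inversion of $A^2+\varpi^2 I_n$.

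Your version has a side benefit. Since $A$ and $B$ are real, the real and imaginary parts of $(j\varpi I_n - A)v = B$ are exactly your two column equations. So the same computation also proves existence and uniqueness of the real solution $\Phi(\varpi)$, instead of importing them from the general Sylvester theory. What the paper's route buys in exchange is an explicit real-arithmetic expression for $\Phi(\varpi)$ itself.

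A minor point: $C(\Phi_1 + j\Phi_2) = C\Phi_1 + jC\Phi_2$ is plain linearity and does not need $C$ to be real.
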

\begin{proof} \label{th:LTI-tf-Phi::proof}
	Let $\iota \coloneq [1, j]^\top$.
	Using $S(\varpi) \iota = j \varpi \iota$ and $L \varpi = 1$, multiplying~\eqref{eq:sol_sylvester} on the right by $\iota$ gives:
	\begin{align*}
		\Phi(\varpi) \iota
		&
		=
		- (A^2 + \varpi^2 I_n)^{-1} (BLS(\varpi) + ABL) \iota
		\\
		&
		= \big((j\varpi I_n + A)(j\varpi I_n - A)\big)^{-1} (j\varpi I_n + A)B
		\\
		&
		= (j\varpi I_n - A)^{-1}B.
	\end{align*}
	Finally, since $\Phi(\varpi) \iota = \Phi_1(\varpi) + j\Phi_2(\varpi)$ and $C\Phi(\varpi) \iota = H(j\varpi)$, the claim follows.
\end{proof}
\begin{lemma} \label{th:r1_convolution}
	Consider the $\omega$-radius $r : \mathcal{N} \to \R$ of the system~\eqref{eq:driven_sys} to the input~\eqref{eq:driving_sys}, as defined in~\ref{def:w_radius}, for an $\omega \in \mathcal{N}$.
	Then, if $r(\omega) = 1$, there exists a generalized function $h_\omega : \R_+ \to \R$ such that
	\begin{equation}\label{eq:Y_r=1}
		Y(\omega, z)
		=
		\int_{\T_\omega}
		h_\omega(t - \tau) \ell(\omega, z(\tau)) \,d \tau.
	\end{equation}
\end{lemma}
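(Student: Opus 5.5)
By Theorem~\ref{th:w-radius}\ref{th:w-radius::r=1}, the hypothesis $r(\omega)=1$ is equivalent to $Y(\omega,z)=\overline{Y}(\omega,z)$. From the proof of Theorem~\ref{th:w-radius}, this gives
\begin{equation*}
	Y(\omega,z(t)) = \lambda_1\,\ell(\omega,z(t)) + \lambda_2\,\dot{\ell}(\omega,z(t)), \qquad t\in\T_\omega,
\end{equation*}
where the constants $\lambda_1$ and $\lambda_2$ are the projection coefficients defined there. The whole statement therefore reduces to a single task: write the operator $u\mapsto \lambda_1 u+\lambda_2\dot u$ as a convolution against a generalized function supported on $\R_+$ (or on the period, see below).

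The natural kernel is $h_\omega(t)\coloneq\lambda_1\delta(t)+\lambda_2\dot\delta(t)$, with $\delta$ the Dirac distribution. Using the standard identities $\int \delta(t-\tau)g(\tau)\,d\tau=g(t)$ and $\int\dot\delta(t-\tau)g(\tau)\,d\tau=\dot g(t)$ (see~\cite{Gelfand1964}), with $g=\ell(\omega,z(\cdot))$, the right-hand side of~\eqref{eq:Y_r=1} becomes $\lambda_1\ell+\lambda_2\dot\ell=Y(\omega,z)$. I will state these identities as distributional pairings. Applying them requires that $\ell(\omega,z(\cdot))$ be differentiable, which is already implicit whenever $\dot\ell$ enters the definition of $\mathfrak{Im}(\omega)$. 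Thus the claim is meaningful precisely under the standing hypotheses used to define $r(\omega)$.

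The main obstacle is that the integral runs over the compact interval $\T_\omega$, not over all of $\R$. The singular support of $h_\omega(t-\cdot)$ is the point $\tau=t$, and this point sits on the boundary of $\T_\omega$ when $t\in\{0,T_\omega\}$. There the pairing with $\dot\delta$ picks up only half of the mass, or boundary terms. I would handle this by reading the integral over $\T_\omega$ as an integral over one period of the circle $\R/T_\omega\Z$. I would replace $\delta$ by the periodic Dirac comb $\sum_{k\in\Z}\delta(t-kT_\omega)$ and $\dot\delta$ by its derivative. Since $\ell(\omega,z(\cdot))$ is $T_\omega$-periodic by Assumption~\ref{ass:z_u_periodic}, the periodic convolution is well defined for every $t$. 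It reproduces $\lambda_1\ell+\lambda_2\dot\ell$ exactly, and the boundary terms cancel by periodicity, as in Lemma~\ref{th:u_udot_0}. Restricting this kernel to nonnegative arguments yields the required $h_\omega:\R_+\to\R$.

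Alternatively, the same conclusion follows from Fourier series on $\T_\omega$. The Fourier multiplier of the periodic kernel is $\lambda_1+jk\frac{2\pi}{T_\omega}\lambda_2$ on the $k$-th harmonic. This multiplier is polynomial in $k$, hence a tempered distribution on the circle, which confirms that $h_\omega$ is a legitimate generalized function.
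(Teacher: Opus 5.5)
Your proposal is correct and follows the paper's route: use Theorem~\ref{th:w-radius} to reduce to $Y(\omega,z)=\lambda_1\ell(\omega,z)+\lambda_2\dot{\ell}(\omega,z)$, then take $h_\omega=\lambda_1\delta+\lambda_2\dot\delta$. Your periodization on $\R/T_\omega\Z$ makes rigorous the cancellation of the boundary terms $\delta(t-T_\omega)u(T_\omega)-\delta(t)u(0)$, which the paper simply drops, and your identity $\int\dot\delta(t-\tau)g(\tau)\,d\tau=\dot g(t)$ has the standard sign (the paper's intermediate step gives $-\dot u(t)$, but its final line reaches the same kernel as yours).
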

\begin{proof}
	For brevity, let us recall that $u = \ell(\omega,z)$ and $\dot{u} = \dot{\ell}(\omega,z)$, and that
	$u(t) = \int_{\T_\omega} \delta(t - \tau) u(\tau) \, d\tau$,
	where $\delta$ is the Dirac delta function.
	Then, the time-derivative $\dot{\delta}$ of $\delta$ (for a formal definition of $\dot{\delta}$, see~\cite{Gelfand1964}), enjoys the property
	\begin{align*}
		&
		\int_{\T_\omega} \dot{\delta}(t - \tau) u(\tau) \, d\tau
		\\[-0.25em]
		&
		\hspace{6ex}
		=
		\delta(t - T_\omega) u(T_\omega) - \delta(t) u(0)
		-
		\int_{\T_\omega} \delta(t - \tau) \dot{u}(\tau) \, d\tau
		\\[-0.25em]
		&
		\hspace{6ex}
		=
		-\dot{u}(t).
	\end{align*}
	Then, as shown in Theorem~\ref{th:w-radius}, we have
	\begin{align*}
		Y(\omega, z(t))
		&
		=
		\overline{Y}(\omega, z(t))
		=
		\lambda_1 u(t)
		+
		\lambda_2 \dot{u}(t)
		\\
		&
		\hspace{-4ex}
		=
		\lambda_1 \int_{\T_\omega} \delta(t - \tau) u(\tau) \, d\tau
		-
		\lambda_2 \int_{\T_\omega} \dot{\delta}(t - \tau) u(\tau) \, d\tau
		\\
		&
		\hspace{-4ex}
		=
		\int_{\T_\omega} \pr*{ \lambda_1 \delta(t - \tau) + \lambda_2 \dot{\delta}(t - \tau)} u(\tau) \, d\tau.
	\end{align*}
	Finally,~\eqref{eq:Y_r=1} holds with
	$h_\omega(t) = \lambda_1 \delta(t) + \lambda_2 \dot{\delta}(t)$.
\end{proof}

\end{document}